%
%
%
%
\documentclass[cmp,envcountsect,smallextended]{svjour}

%
\usepackage{graphics}

\usepackage{amsfonts,amssymb,amsmath}
\usepackage{mathrsfs}
\usepackage{graphicx}
\usepackage{color}
\usepackage[normalem]{ulem}
\usepackage{indentfirst}
\usepackage{amsrefs}
\usepackage{enumerate}

\numberwithin{equation}{section}

\DeclareMathOperator{\sgn}{sgn}

\newcommand{\ud}{\,\mathrm{d}}
\newcommand{\abs}[1]{\left\lvert#1\right\rvert}
\newcommand{\norm}[1]{\lVert#1\rVert}
\newcommand{\average}[1]{\langle#1\rangle}

\newcommand{\RR}{\mathbb{R}}

\newcommand{\wt}[1]{\widetilde{#1}}

\newcommand{\mc}[1]{\mathcal{#1}}

\newcommand{\eps}{\epsilon}

\newcommand{\wconv}{\rightharpoonup}

\newcommand{\oH}{\mathring{H}}

\newcommand{\beq}{\begin{equation}}
\newcommand{\eeq}{\end{equation}}

%
\begin{document}
\title{Orbital-free density functional theory of out-of-plane charge
  screening in graphene}
\author{Jianfeng Lu\inst{1}
\and
Vitaly Moroz\inst{2}
\and
Cyrill B.~Muratov\inst{3}
}                     
\institute{Departments of Mathematics, Physics, and Chemistry, Duke
  University, Box 90320, Durham, NC 27708, USA. \\
  \email{jianfeng@math.duke.edu} \and Swansea University, Department
  of Mathematics, Singleton Park, Swansea SA2~8PP, Wales, United
  Kingdom. \\
  \email{v.moroz@swansea.ac.uk} \and Department of Mathematical
  Sciences, New Jersey Institute of Technology,
  University Heights, Newark, NJ~07102, USA. \\
  \email{muratov@njit.edu} }
\date{\today}
%
%
\maketitle
\begin{abstract}
  We propose a density functional theory of Thomas-Fermi-Dirac-von
  Weizs\"acker type to describe the response of a single layer of
  graphene resting on a dielectric substrate to a point charge or a
  collection of charges some distance away from the layer. We
  formulate a variational setting in which the proposed energy
  functional admits minimizers, both in the case of free graphene
  layers and under back-gating. We further provide conditions under
  which those minimizers are unique and correspond to configurations
  consisting of inhomogeneous density profiles of charge carrier of
  only one type. The associated Euler-Lagrange equation for the charge
  density is also obtained, and uniqueness, regularity and decay of
  the minimizers are proved under general conditions. In addition, a
  bifurcation from zero to non-zero response at a finite threshold
  value of the external charge is proved.
\end{abstract}

%
\section{Introduction}
\label{sec:introduction}

Graphene is a two-dimensional monolayer of carbon atoms arranged into
a perfect honeycomb lattice \cite{novoselov04}. It has received a huge
amount of attention in recent years, both as a very promising material
for nanotechnology applications and as a model system with pronounced
quantum mechanical properties (for reviews, see
\cite{geim07,castroneto09,abergel10}). The current interest in
graphene stems from its very unusual electronic properties closely
related to the symmetry and the two-dimensional character of the
underlying crystalline lattice, into which the carbon atoms arrange
themselves. A free-standing graphene layer acts as a semi-metal, in
which the low energy charge carrying quasiparticles (electrons and
holes) behave to a first approximation as massless fermions obeying a
two-dimensional relativistic Dirac equation
\cite{wallace47,fefferman12,fefferman12a}. Hence, their kinetic energy
is proportional to their quasi-momentum:
\begin{eqnarray}
  \label{eq:vF}
  \varepsilon_\mathbf{k} = \pm \hbar v_F |\mathbf k|,
\end{eqnarray}
where $v_F \simeq 1 \times 10^8$ cm/s is the Fermi velocity, $\mathbf
k$ is the wave vector and ``$\pm$'' stands for electrons and holes,
respectively. This equation is valid for $|\mathbf k| \ll a_0^{-1}$,
where $a_0 \simeq 1.42 \, \mathring{A}$ is the nearest-neighbor
distance between the carbon atoms in the graphene lattice (without
taking into account the effect of the velocity renormalization
\cite{gonzalez94,KotovUchoa:12,martin08,reed10,sodemann12,yu13}).

In contrast to the fermions with non-zero effective mass in the usual
metals or semiconductors, in graphene the effect of interparticle
Coulomb repulsion does not decrease with increasing carrier density
\cite{KotovUchoa:12}. This can already be seen from simple dimensional
considerations: according to \eqref{eq:vF} a single particle whose
wave function is localized into a wave packet of radius $\sim r$ would
have kinetic energy $E_{kin} \sim \hbar v_F / r$, while the energy of
Coulomb repulsion per particle (in CGS units) is $E_{Coulomb} \sim e^2
/ (\epsilon_d r)$, where $e > 0$ is the elementary charge and
$\epsilon_d$ is the effective dielectric constant in the presence of a
substrate. Thus their ratio $\alpha = e^2 / (\epsilon_d \hbar v_F)$,
which characterizes the relative strength of the Coulombic
interaction, is a constant independent of $r$, and, furthermore, we
have $\alpha \simeq 2.2$ for $\epsilon_d = 1$, indicating the
non-perturbative role of the Coulombic interaction in the absence of a
strong dielectric background.

The scaling argument above can also be applied to an electron obeying
\eqref{eq:vF} in an attractive potential of a positively charged
ion. When the valence $Z$ of the ion increases, the potential energy
$E_{pot} \sim -Z e^2 / r$ of the attractive interaction between the
electron and the ion always overcomes the kinetic energy. At the
single particle level this effect results in non-existence of single
particle ground states for the relativistic Dirac-Kepler problem
\cite{shytov07}, which is somewhat similar to the phenomenon of
relativistic atomic collapse \cite{lieb88}. In a more realistic
multiparticle setting the situation is more complicated due to
strongly correlated many-body effects involving both the electrons and
holes. In fact, exactly how the carriers in graphene screen a charged
impurity is a subject of an ongoing debate, with qualitatively
different predictions for the behavior of the screening charge density
and the total electrostatic potential coming from different theories.

Early studies of screening of the electric field from point charges in
graphene go back to the work of DiVincenzo and Mele, who used a
self-consistent Hartree-type model to analyze the electron response to
interlayer charges in intercalated graphite compounds
\cite{divincenzo84}. They found a surprising result that the screening
electron density decays as $1/r^2$ (to within an undetermined
logarithmic factor), indicating that the screening charge is
considerably spread out laterally within the graphene layer. They also
made a similar conclusion from the analysis of the Thomas-Fermi
equations for massless relativistic fermions and contrasted it with
the $1/r^3$ behavior expected from the image charge on an
equipotential plane in the case of perfect screening.  In sharp
contrast, Shung performed an analysis of the dielectric susceptibility
of intercalated graphite compounds using linear response theory
\cite{shung86}. His calculation implies that in the absence of doping
only partial screening of an impurity should occur and that the
electron system should behave effectively as a dielectric medium due
to the excitation of virtual electron-hole pairs, which has an effect
of renormalizing the value of $\epsilon_d$ (see also
\cite{gonzalez94,Ando:06,HwangDasSarma:07,KotovUchoa:12} for further
discussions). He also commented that the nonlinear effects are of
major importance in the screening, which explains the different
results he had for linear response comparing with the Thomas-Fermi
result in \cite{divincenzo84}.

More recently, Katsnelson computed the asymptotic behavior of the
screening charge density for a charged impurity within the
Thomas-Fermi theory of massless relativistic fermions with a lattice
cutoff at short scales \cite{Katsnelson:06}. He found that the
screening charge density should behave as $1/(r \ln r)^2$ far from the
impurity, refining earlier results of \cite{divincenzo84} and
demonstrating the importance of nonlinear screening effects in
graphene. Fogler, Novikov and Shklovskii further considered the effect
of an out-of-plane hypercritical charge $Z \gg 1$ on the electron
system in a graphene layer and argued for perfect screening ($1/r^3$
behavior of the screening charge density and constant electrostatic
potential in the layer) \cite{fogler07}. They also argued for a
crossover between perfect screening in the near field tail,
Thomas-Fermi screening ($1/(r \ln r)^2$ behavior of the screening
charge density and $1/(r \ln r)$ decay of the electrostatic potential
in the layer) in the far field tail, and partial screening (dielectric
response with no screening charge and $1/r$ decay of the electrostatic
potential) in the very far tail for certain ranges of $Z$ and
$\alpha$. We also note that a recent result indicates that in the
Hartree-Fock approximation the relative dielectric constant of
graphene is, somewhat surprisingly, equal to unity in the Hartree-Fock
theory, implying that the total induced charge from a charged impurity
in graphene is zero (no partial screening or effectively very weak
screening due to the slow decay) \cite{HainzlLewinSparber:2012}.

The differing conclusions of the above works indicate a very delicate
nature of the problem of screening in graphene (see also the
discussion in \cite{KotovUchoa:12} and further references
therein). One reason is the precise tuning of the kinetic energy, the
Coulombic attraction of electrons to the impurity and the Coulombic
repulsion between electrons, which is already evident from the scaling
argument presented earlier. Another reason is that the studies
mentioned above do not account for the correlation effects. While it
is believed that exchange does not play a significant effect in
graphene, correlations between electrons and holes due to their
Coulombic attraction (excitonic effects) may have an effect on the
nature of the response beyond random phase approximation
\cite{KotovUchoa:12,abergel09,abergel10,martin08,reed10,
  sodemann12,wang12,yu13}. Finally, the third reason is that in view
of the crucial role played by nonlinear and nonlocal effects for
charge carrier behavior in graphene the analysis of the problem, both
mathematical and numerical, becomes rather non-trivial.

Our approach to the problem of screening of point charges by a
graphene layer is via introducing a Thomas-Fermi-Dirac-von
Weizs\"acker (TFDW) type energy for massless relativistic fermions and
studying the associated variational problem. The considered energy
functional is a variant of an orbital-free density functional theory
(for a recent Kohn-Sham-type density functional theory see
\cite{polini08}) that models the exchange and correlation effects by
renormalizing the corresponding coefficients of the Thomas-Fermi
theory for the system of non-interacting massless relativistic
fermions and introducing a non-local analog of the von Weizs\"acker
term in the usual TFDW model of a non-relativistic electron gas
\cite{Lieb:81,lebris05}. For simplicity, we begin by treating the
problem of the influence of a single point charge $+Z e$ located at
distance $d \gg a_0$ away from the graphene layer on the electrons in
the layer. It may either correspond to the effect of a charge placed
on a gate separated from the graphene layer by a layer of insulator in
the context of graphene-based nanodevices, or it may correspond to an
imbedded charged impurity or a cluster of impurities within the
dielectric substrate.  After a suitable rescaling, the TFDW energy for
graphene at the neutrality point in the presence of an impurity takes
the following form:
\begin{multline}\label{E0}
  E_0(\rho) = a \int_{\RR^2} \left| \nabla^{\frac12} \bigl(
    \sqrt{\abs{\rho(x)} } \sgn(\rho(x))\bigr) \right|^2 \ud^2 x +
  \frac{2}{3}\int_{\RR^2} \abs{\rho(x)}^{3/2} \ud^2 x  \\
  - \int_{\RR^2} \frac{\rho(x)}{\bigl(1 + \abs{x}^2\bigr)^{1/2}} \ud^2 x +
  \frac{b}{2} \iint_{\RR^2 \times \RR^2} \frac{\rho(x) \rho(y)}{\abs{x
      - y}} \ud^2 x \ud^2 y.
\end{multline}
Here $\rho(x)$ is the signed particle density, with $\rho > 0$
corresponding to electrons and $\rho < 0$ corresponding to holes, and
$a \geq 0$ and $b \geq 0$ are two dimensionless parameters
characterizing the model. Note that in the case of $a = 0$ we recover
the usual Thomas-Fermi model for graphene. The case of $b = 0$ would
correspond to a model system of non-interacting massless relativistic
fermions in an external potential. The meaning of each term in
\eqref{E0} and the relation to the original physical parameters is
explained in Sec. \ref{sec:model}. Let us point out the unusual
non-local nature of both the first and the last terms in
\eqref{E0}. The first term involves the homogeneous $H^{1/2}(\RR^2)$
norm squared $\int_{\RR^2} |\nabla^{\frac12} u|^2 \, \ud^2 x$ of $u =
\rho / |\rho|^{1/2}$, while the last term involves the homogeneous
$H^{-1/2}(\RR^2)$ norm squared of $\rho$. This is in contrast to the
conventional TFDW models of massive non-relativistic fermions, in
which the first term involves the homogeneous $H^1$ norm and the last
term involves the homogeneous $H^{-1}$ norm, respectively. The
difference in the first term has to do with the relativistic character
of the dispersion relation for quasiparticles in graphene at low
energies given by \eqref{eq:vF}, while the difference in the last term
reflects the three-dimensional character of Coulomb interaction and
the two-dimensional character of the charge density. We point out that
a von Weizs\"acker-type term similar to the first term in \eqref{E0}
appeared in the studies of stability of relativistic matter (see,
e.g., \cite{lieb96} and references therein).  We also note that our
model is different from the ultrarelativistic
Thomas-Fermi-von~Weizs\"acker model studied in \cite{EngelDreizler:87,
  EngelDreizler:88, BenguriaLossSiedentop:2008}, where a local
gradient term in the kinetic energy for massless relativistic fermions
in three space dimensions was used. An analogous term for graphene
would have been $\int_{\RR^2} \bigl| \nabla \abs{\rho}^{1/4} \bigr|^2
\ud^2 x$ (see Sec. 2 for the explanation of our choice of the
non-local term).

The model above is easily generalized to include a collection of point
charges or a localized distribution of charges some distance away from
the graphene layer. If
\begin{align}
  \label{eq:V}
  V(x) = - \int_{\RR^3} {\ud \mu(y, z) \over (|1 + z|^2 + |x -
    y|^2)^{1/2}},
\end{align}
where $\mu(y,z)$ is a finite signed Radon measure with compact support
located at $z \geq 0$ in $\RR^3$, e.g., $\mu(y, z) = \sum_{i=1}^N c_i
\delta(y - y_i) \delta(z - z_i)$ with $c_i \in \RR$, $y_i \in \RR^2$
and $z_i \geq 0$ for all $i = 1, \ldots, N$ ($c_i > 0$ would
correspond to positive external charges), then the generalization of
the energy in \eqref{E0} reads
\begin{equation}
  \label{eq:E}
  \begin{aligned}
    E(\rho) & = a \int_{\RR^2} \left| \nabla^{\frac12} \bigl(
      \sqrt{\abs{\rho(x)} } \sgn(\rho(x))\bigr) \right|^2 \ud^2 x \\
    & + \frac{2}{3}\int_{\RR^2} (\abs{\rho(x)}^{3/2} -
    |\bar\rho|^{3/2} ) \ud^2 x - |\bar\rho|^{1/2} \sgn(\bar\rho)
    \int_{\RR^2} ( \rho(x) - \bar \rho) \ud^2 x \\
    & + \int_{\RR^2} V(x) (\rho(x) - \bar \rho) \ud^2 x + \frac{b}{2}
    \iint_{\RR^2 \times \RR^2} \frac{(\rho(x) - \bar \rho) (\rho(y) -
      \bar \rho) }{\abs{x - y}} \ud^2 x \ud^2 y.
  \end{aligned}
\end{equation}
Here we also included the possibility of a net background charge
density $\bar\rho \in \RR$, which can be easily achieved in graphene
via back-gating, and subtracted the divergent contributions of the
background charge density to the energy.

In this paper we establish basic existence results for minimizers of
the energy, which is a slightly generalized version of the one in
\eqref{eq:E}, under some general assumptions on the potential $V$,
which include, in particular, potentials of the form given by
\eqref{eq:V}. We begin by developing a variational framework for the
problem and proving a general existence result among admissible
$\rho$ which may possibly change sign, see Theorem
\ref{thm:Egeneral}. We also establish basic regularity and uniform
decay properties of these minimizers, as well as the Euler-Lagrange
equation solved by the minimizing profile.

We shall emphasize that sign-changing profiles with finite energy
include, in particular, the profiles for which the Coulomb energy term
does not admit an integral representation and shall be understood in
the distributional sense, even if the profile is a continuous function
(see Example \ref{example-nonint}). Mathematically, this makes the
analysis of the problem particularly challenging. It is an interesting
open question whether it is possible for a sign--changing minimizer to
have a Coulomb energy which does not have an integral representation.

We then turn our attention to minimizers among non-negative
 $\rho$. Here we prove in Theorem
\ref{thm:Egenplus} the existence of a unique minimizer in the
considered class in the case of strictly positive background charge
density $\bar\rho$. Importantly, using a version of a strong maximum
principle for the fractional Laplacian, we also show that these
minimizers are strictly positive and, as a consequence, also satisfy
the associated Euler-Lagrange equation. In the next theorem, Theorem
\ref{thm:positive}, we give a sufficient condition that guarantees
that the global minimizer among all admissible $\rho$, including
those that change sign, is given by the unique positive minimizer
obtained in the preceding theorem.

The remaining two theorems are devoted to the case of zero background
charge density. In Theorem \ref{thm:EA0p} we give an existence result
for non-negative minimizers, alongside with strict positivity and
uniqueness.  In Theorem \ref{thm:E0A0}, using a suitable version of
fractional Hardy inequality, we establish a bifurcation result for a
particular problem in which the background potential is given by the
electrostatic potential of a point charge some distance away from the
graphene layer. We also illustrate the conclusion of Theorem
\ref{thm:E0A0} with a numerical example.

Our paper is organized as follows. In Sec. \ref{sec:model}, we discuss
the derivation and justification of different terms in the energy and
connect our model with the physics literature. In
Sec. \ref{sec:statement-results}, we state our main results. In
Sec. \ref{sec:prelim}, we introduce various notations and auxiliary
lemmas that are used throughout the paper. In Sec. \ref{sec:var}, we
formulate the precise variational setting for the minimization
problem. Finally, in Sec. \ref{sec:th31} we prove Theorems
\ref{thm:Egeneral} and \ref{thm:positive}, and in Sec. \ref{sec:th32}
we prove Theorems \ref{thm:Egenplus}, \ref{thm:EA0p} and
\ref{thm:E0A0}.

\section{Model}
\label{sec:model}

Our starting point is the following (dimensional) energy for the
graphene layer in the presence of a single positively charged
impurity:
\begin{multline}
  \label{EE}
  E_0(\rho) = C_{\text{W}}\int_{\RR^2} \left| \nabla^{\frac12}
    \bigl(\sqrt{\abs{\rho(x)}} \sgn(\rho(x))\bigr) \right|^2 \ud^2 x +
  \frac23 C_{\text{TFD}} \int_{\RR^2} \abs{\rho(x)}^{3/2}
  \ud^2 x  \\
  - \frac{Z e^2}{\eps_d} \int_{\RR^2} \frac{\rho(x)}{(d^2 +
    |x|^2)^{1/2}} \, \ud^2 x + \frac{e^2}{2\eps_d} \iint_{\RR^2
    \times \RR^2} \frac{\rho(x) \rho(y)}{\abs{x - y}} \ud^2 x \ud^2 y,
\end{multline}
which is a functional defined on a signed particle density $\rho(x)$
in a flat graphene layer of infinite extent, with the convention that
$\rho > 0$ corresponds to the electron-rich region and $\rho < 0$
corresponds to the hole-rich region (for definiteness, in this
  section we assume $\rho \in C^\infty_c(\mathbb R^2)$). The terms in
\eqref{EE} are, in order: the von Weizs\"acker-type term that
penalizes spatial variations of $\rho$, the Thomas-Fermi-Dirac term
containing both the contribution from the kinetic energy of the
particles and the Dirac-type contribution from exchange and
correlations, the interaction term between the particles and the
external out-of-plane point charge $+Z e$, and the Coulomb self-energy
in the presence of a substrate providing an effective dielectric
constant $\eps_d$.

The energy functional in \eqref{EE} should be viewed as a
semi-empirical model in which the constants $C_\text{W}$,
$C_\text{TFD}$ and $\eps_d$ are to be fitted to the experimental data
for a particular setup. It is easy to see that for an {\em ideal}
uniform gas of non-interacting massless relativistic fermions the
kinetic energy contribution per unit area is given by $\frac23
C_\text{TF}^0 |\rho|^{3/2}$, where $C_{\text{TF}}^0 = \hbar v_F
\sqrt{\pi}$ and the 4-fold quasiparticle degeneracy was taken into
account (see for example \cite{Katsnelson:06, BreyFertig:09,
  DasSarma_RMP, ZhangFogler:08}).\footnote{Note that in
  \cite{BreyFertig:09, DasSarma_RMP} and some other papers in the
  physics literature, a factor of $\sgn(\rho)$ was mistakenly added to
  the integrand of the Thomas-Fermi term. The resulting energy
  functional is then not bounded from below and is inconsistent with
  the Thomas-Fermi equation.}  We note, however, that in real graphene
the Coulombic interaction noticeably renormalizes the Fermi velocity
\cite{gonzalez94,KotovUchoa:12,martin08,reed10,sodemann12,yu13}. In
practice the value of $C_\text{TF}^0$ based on the experimentally
observed value of $v_F$ (at the experimental length scale) includes
the many-body effects due to Coulombic interparticle
forces. Similarly, for $\alpha \ll 1$ the leading order exchange
and correlation contributions per unit area of the ideal uniform gas
of massless relativistic fermions are given by $C_\text{D}^0
|\rho|^{3/2}$, where $C_\text{D}^0 = (c_1 \alpha - c_2 \alpha^2)
C_\text{TF}^0$ and both $c_1$ and $c_2$ weakly (logarithmically)
depend on the ratio of the experimental length scale to $a_0$
\cite{barlas07,sodemann12}. Therefore, in the local approximation the
combined contribution of the kinetic energy and the exchange term
would have, to the leading order in $\alpha$, the form of the second
term in \eqref{EE} with some constant $C_\text{TFD}^0 > 0$. This
conclusion is also confirmed by recent experimental measurements of
inverse quantum compressibility in graphene
\cite{martin08,yu13}. Using the renormalized rather than bare Fermi
velocity may then eliminate the need to consider the additional
exchange and correlation terms, at least on the local level. We also
note that in contrast to the usual TFDW models of massive
non-relativistic fermions \cite{Lieb:81,lebris05}, in graphene the
local approximation to the exchange energy does {\em not} produce a
non-convex contribution to the energy.

We now explain the origin of the first term in \eqref{EE}. Recall that
in the usual TFDW model of massive non-relativistic fermions the
analogous von Weizs\"acker term takes the form $C_\text{W} \int \left|
  \nabla \sqrt{\rho} \, \right|^2 \ud^3 x$, with the constant
$C_\text{W} \sim \hbar^2 / m^*$, where $m^*$ is the effective mass
(recall that for a single parabolic band one has $\rho \geq 0$)
\cite{Lieb:81,lebris05}. The basic rationale for the introduction of
such a term is to penalize spatial variations of $\rho$, favoring
spatially homogeneous ground state density for the system of
non-interacting particles (see also the discussion in
\cite{lieb96}). The choice of the specific form of the integrand is
determined by the following three requirements:
\begin{enumerate}
\item[1)] The energy must scale linearly with $\rho$. \smallskip

\item[2)] The energy must be the square of a homogeneous Sobolev norm
  of $\rho g(|\rho|)$, for some positive scale-free function
  $g$. \smallskip

\item[3)] The energy must scale as the Thomas-Fermi term under
  rescalings of $x$ and $\rho$ that preserve the total number of
  particles.
\end{enumerate}
The first requirement above reflects the extensive nature of the
contributions of individual particles. The second requirement reflects
the nature of the penalty as a scale-free quadratic form in the
Fourier space. The third requirement is to make the penalty term
consistent with the local kinetic energy contribution coming from the
Thomas-Fermi term.

It is clear that the von Weizs\"acker term in the usual TFDW model is
the unique term consistent with all the relations above. Similarly, it
is then easy to see that in the case of massless relativistic fermions
the unique choice of the von Weizs\"acker-type term for graphene is
given by the first term in \eqref{EE}. Indeed, the first two
requirements above are obviously satisfied, and to check the third
one, we see that
\begin{align}
  & \int_{\RR^2} \left| \nabla^{\frac12} \bigl(\sqrt{\abs{\kappa
        \rho(\lambda x)}} \sgn(\kappa \rho(\lambda x))\bigr) \right|^2
  \ud^2 x \notag \\
  & \qquad \qquad = \kappa \lambda^{-1} \int_{\RR^2} \left|
    \nabla^{\frac12} \bigl(\sqrt{\abs{\rho(x)}} \sgn(\rho(x))\bigr)
  \right|^2 \ud^2  x, \label{eq:TrescW}
\end{align}
and
\begin{align}
  & \int_{\RR^2} |\kappa \rho(\lambda x)|^{3/2}\ud^2 x = \kappa^{3/2}
  \lambda^{-2} \int_{\RR^2} | \rho(x)|^{3/2} \ud^2
  x, \label{eq:TrescTF}
\end{align}
for any $\kappa > 0$ and $\lambda > 0$. Choosing $\kappa \lambda^{-2}
= 1$ to ensure that $\int_{\RR^2} |\kappa \rho(\lambda x)| \ud^2 x =
\int_{\RR^2} |\rho(x)| \ud^2 x$, we have that the right-hand sides of
both \eqref{eq:TrescW} and \eqref{eq:TrescTF} are rescaled by the same
factor. From the dimensional considerations we expect to have
$C_\text{W} \sim \hbar v_F$.

Let us also discuss the presence of $\sgn(\rho)$ in the definition of
the von Weiz\-s\"acker-type term in \eqref{EE}. As will be seen below,
it imparts the energy with some extra degree of symmetry and makes the
energy functional in \eqref{EE} better behaved mathematically, thus
making it a natural modeling choice. Note that this issue is absent in
the conventional TFDW model, since in the case of massive
non-relativistic fermions $\rho$ corresponds to the density of a
single type of charge carriers and is, therefore, non-negative. In any
case, when $\rho \geq 0$, i.e., when the holes are absent from the
consideration, our von Weizs\"acker-type term coincides with one that
has appeared in many studies of relativistic matter and can be further
used to bound at least part of the kinetic energy of electrons from
below \cite{lieb96}.

Another way to understand the origin of the von Weizs\"acker-type term
in the energy is to consider the leading order ``gradient'' correction
to the energy of a uniform system of non-interacting particles. If
\begin{equation}
  \label{eq:T}
  T(\rho) = C_{\text{W}}\int_{\RR^2} \left| \nabla^{\frac12}
    \bigl(\sqrt{\abs{\rho(x)}} \sgn(\rho(x))\bigr) \right|^2 \ud^2 x + \frac23
  C_{\text{TFD}} \int_{\RR^2} \abs{\rho(x)}^{3/2}
  \ud^2 x,
\end{equation}
is the ``kinetic'' part of the energy (recall, however, our discussion
of the exchange and correlation effects above), then the excess
contribution of the kinetic energy to the leading order in $\delta
\rho(x) = \rho(x) - \rho_0$ (i.e., the second variation $\delta^2 T$
of $T$ around $\rho_0$), where $\rho_0 \not= 0$ is the uniform
background density, is
\begin{align}
  \label{eq:T2}
  \delta^2 T = \frac14 C_\text{W} |\rho_0|^{-1} \int_{\RR^2}
  |\nabla^{\frac12} \delta \rho(x)|^2 \ud^2 x + \frac14 C_\text{TFD}
  |\rho_0|^{-1/2} \int_{\RR^2} |\delta \rho(x)|^2 \ud^2 x,
\end{align}
or, in terms of the Fourier transform $\delta \widehat
\rho_\mathbf{k}$ of $\delta \rho(x)$ is given by
\begin{align}
  \label{eq:T2k}
  \delta^2 T = \frac12 \int_{\RR^2} {\ud^2 k \over (2 \pi)^2}
  \Pi_\mathbf{k}^{-1} |\delta \widehat \rho_\mathbf{k} |^2, \qquad
  \Pi_\mathbf{k} = {2 \over C_\text{TFD} |\rho_0|^{-1/2} + C_\text{W}
    |\rho_0|^{-1} |\mathbf k|}.
\end{align}
Here $\Pi_\mathbf{k} = 2 |\rho_0|^{1/2} C_\text{TFD}^{-1} (1 -
C_\text{W} C_\text{TFD}^{-1} |\rho_0|^{-1/2} |\mathbf k|) + O(|\mathbf
k|^2) $ is the polarizability operator for our model. In the absence
of interactions this operator should coincide to the leading order for
$|\mathbf k| \to 0$ with the zero frequency limit of the Lindhard
function of an ideal gas of massless relativistic fermions, and a
comparison is, therefore, in order. The Lindhard function for
non-interacting electrons in graphene was first analyzed by Shung
\cite{shung86} and was later computed in closed form by many authors
\cite{gonzalez94,Ando:06,HwangDasSarma:07} (for a review, see
\cite{KotovUchoa:12}). Restricting the contributions to the
polarizability to only the intraband excitations, one indeed recovers
an expression consistent with the expansion of $\Pi_\mathbf{k}$ in
\eqref{eq:T2k}. However, a peculiar feature of graphene is that when
both the intraband (perturbations of the Fermi surface) and the
interband (formation of virtual electron-hole pairs) excitations are
considered, the intraband and the interband contributions cancel each
other out, making the total polarizability $\Pi_\mathbf{k}^0$ of the
noninteracting massless relativistic fermions independent of $\mathbf
k$ for an interval of $|\mathbf k|$ around zero \cite{KotovUchoa:12}:
\begin{align}
  \label{eq:PiRPA}
  \Pi_\mathbf{k}^0 = {2 |\rho_0|^{1/2} \over \sqrt{\pi} \hbar
    v_F}, \qquad |\mathbf k| \leq 2 \sqrt{\pi |\rho_0|}.
\end{align}
This behavior is due to the cancellation of the contribution from the
two bands of the Dirac cone because of symmetry, as discussed in
\cite{HwangDasSarma:07}.  It is, however, argued (for example in
\cite{Ando:06, WangFertig:11}) that the electron-electron interaction
might lead to breaking this symmetry and changing the asymptotic
behavior so that $\Pi_\mathbf{k}$ decreases linearly near
$\abs{\mathbf k} = 0$. Clearly, correlation effects associated with
Coulombic attraction between electrons and holes should result in a
decreased contribution to the polarizability from the interband
excitations.  This would be consistent with the TFDW model we are
proposing here. Thus we are thinking of the first term in \eqref{EE}
as a non-local contribution of exchange and correlations to an
orbital-free density functional theory beyond the usual local density
approximation. In any case, the model considered here might be viewed
as a natural generic density functional theory model for graphene or
two-dimensional massless relativistic fermions in general.

We finally discuss the rescaling of \eqref{EE} leading to
\eqref{E0}. Introduce
\begin{equation}
  \wt{x} = \lambda x,
  \quad \wt{\rho}(\wt{x}) = \kappa \rho(x),
  \quad \wt{E}(\wt{\rho}) = \gamma E(\rho).
\end{equation}
Then the energy functional in \eqref{EE} becomes
\begin{equation}
  \begin{aligned}
    \frac{1}{\gamma} \wt{E}_0(\wt{\rho}) & =
    \frac{C_{\text{W}}}{\kappa \lambda} \int_{\RR^2} \left|
      \nabla^{\frac12} \bigl(\sqrt{\abs{\wt{\rho}(\tilde x)}}
      \sgn(\wt\rho(\tilde
      x))\bigr) \right|^2 \ud^2 \wt{x} \\
    & \quad + \frac{2}{3} \frac{C_{\text{TFD}}}{\kappa^{3/2}
      \lambda^2} \int_{\RR^2} \abs{\wt{\rho}(\wt x)}^{3/2} \ud^2
    \wt{x} - \frac{Z e^2}{\eps_d \kappa \lambda} \int_{\RR^2}
    \frac{\wt{\rho}(\wt{x})}{(\lambda^2 d^2 + |\wt{x}|^2)^{1/2}}
    \ud^2 \wt{x} \\
    & \quad + \frac{e^2}{2\eps_d \kappa^2 \lambda^3} \iint_{\RR^2
      \times \RR^2} \frac{\wt{\rho}(\wt{x})
      \wt{\rho}(\wt{y})}{\abs{\wt{x} - \wt{y}}} \ud^2 \wt{x} \ud^2
    \wt{y}.
  \end{aligned}
\end{equation}
Taking $\lambda = 1/d$, $\kappa = (\eps_d C_{\text{TFD}} d / e^2 Z)^2$
and $\gamma = \eps_d^3 C_{\text{TFD}}^2 d / (e^ 2 Z)^3$, we arrive at
\eqref{E0} (after dropping tildes) with
\begin{align}
  & a = \frac{\gamma C_{\text{W}} }{\kappa \lambda} =
  \frac{\eps_d C_{\text{W}}}{Z e^2}, \label{aZe} \\
  & b = \frac{\gamma e^2}{\eps_d \kappa^2 \lambda^3} = \frac{Z
    e^4}{\eps_d^2 C_{\text{TFD}}^2}.
\end{align}
Our choice of the rescaling is dictated by the fact that $d$ is the
only length scale for the considered problem, which can be seen from
the fact that the parameters $a$ and $b$ of the rescaled energy are
completely independent of $d$. Also, the units of $\rho$ and $E$ are
now $\kappa^{-1}$ and $\gamma^{-1}$, respectively.

\section{Statement of results}
\label{sec:statement-results}

We start with the energy functional \eqref{eq:E} for a general
background potential $V(x)$, with parameters $a > 0$ and $b > 0$, and
background charge $\bar\rho \in \RR$. Note that since the energy is
invariant with respect to the transformation
  \begin{align}
    \label{eq:rhominusrho}
    \rho \to -\rho, \qquad \bar \rho \to -\bar \rho, \qquad V \to -V,
  \end{align}
  it is sufficient to consider only the case $\bar \rho \geq 0$.

  We point out from the outset that existence of minimizers for the
  energy in \eqref{eq:E} with a general (smooth, decaying) potential
  $V(x)$ is not a priori clear, since the term involving $V(x)$ in
  \eqref{eq:E} may not be bounded from below in the natural function
  classes in which the other terms in the energy are
  well-defined. Nevertheless, if $V(x)$ is of the form of
  \eqref{eq:V}, then it is easy to see that $V \in
  \mathring{H}^{1/2}(\RR^2)$ and, hence, the term involving $V$ in the
  energy can be controlled by the Coulomb repulsion term. Indeed, by
  an explicit computation we have
\begin{align}
  \label{V12}
  (-\Delta)^{1/2} V(x) = -\int_{\RR^3} { |1 + z|  \ud \mu(y, z)
    \over ( |1 + z|^2 + |x - y|^2 )^{3/2} },
\end{align}
implying that $(-\Delta)^{1/2} V(x)$ is smooth and decays no slower
than $|x|^{-3}$ for the considered class of measures $\mu$. Therefore,
in view of the fact that $V(x)$ is smooth and decays no slower than
$|x|^{-1}$, we obtain that
\begin{align}
  \label{V1212}
  \| V \|^2_{\mathring{H}^{1/2}(\RR^2)} = \int_{\RR^2} V
  (-\Delta)^{1/2} V \ud^2 x < \infty.
\end{align}
In fact, our existence results below only rely on the fact that the
estimate in \eqref{V1212} holds. Therefore, throughout the rest of the
paper we generalize the energy in \eqref{eq:E} to potentials $V \in
\mathring{H}^{1/2}(\mathbb R^2)$. We note that by fractional Sobolev
embedding \cite[Theorem 8.4]{Lieb-Loss}, \cite[Theorem
6.5]{palatucci12}, these are functions in $L^4(\mathbb R^2)$, so
the energy $E(\rho)$ in \eqref{eq:E} is well-defined at least for
$\rho - \bar \rho \in C^\infty_c(\mathbb R^2)$.

Caution, however, is necessary in order to assign the meaning to the
energy in \eqref{eq:E} for sufficiently large admissible classes when
searching for minimizers, since the problem is formulated on an
unbounded domain and $\rho - \bar\rho$ does not have a sign a
priori. Indeed, even if the natural classes of functions to consider
would consist of $\rho \in L^1_{loc}(\RR^2)$, it is not a priori clear
if $\rho - \bar\rho$ can be interpreted as a charge density in the
sense of potential theory (i.e., whether
$d \mu = (\rho - \bar \rho) \ud^2 x$ can be associated to a signed
measure $\mu$ on $\RR^2$, making the last term in \eqref{eq:E}
meaningful, see Example \ref{example-nonint}).  The latter depends on
the delicate decay properties of the minimizers and will be the
subject of a separate work \cite{lmm2}. Here we avoid these
difficulties by introducing the induced electrostatic potential $U$
which solves distributionally
\begin{align}
  \label{U}
  (-\Delta)^{1/2} U = \rho - \bar \rho.
\end{align}
We then introduce
\begin{equation}
\begin{aligned}
  \label{E}
  E(\rho) & := a \left\| \sgn(\rho) \sqrt{|\rho|} - \sgn(\bar \rho)
    \sqrt{|\bar \rho|} \right\|_{\mathring{H}^{1/2}(\RR^2)}^2 \\
  & + \int_{\RR^2} \left( \frac{2}{3} \abs{\rho(x)}^{3/2} -
    \frac{2}{3} |\bar\rho|^{3/2} - |\bar\rho|^{1/2} \sgn(\bar\rho)
    ( \rho(x) - \bar \rho) \right) \ud^2 x \\
  & + \langle V, U \rangle _{\mathring{H}^{1/2}(\RR^2)} + \frac{b}{2}
  \left\| U \right\|_{\mathring{H}^{1/2}(\RR^2)}^2 .
\end{aligned}
\end{equation}
Here $\langle \cdot, \cdot \rangle_{\mathring{H}^{1/2}(\RR^2)}$ and
$\| \cdot \|_{\mathring{H}^{1/2}(\RR^2)}$ are the inner product and
the norm associated with the Hilbert space
$\mathring{H}^{1/2}(\RR^2)$, respectively (for details about the
function spaces see Sec. \ref{sec:func}). It is then easy to see that
the definition of $E(\rho)$ in \eqref{E} agrees with that in
\eqref{eq:E} when $\rho - \bar \rho \in C^\infty_c(\RR^2)$. Note that
the second line in \eqref{E} is always non-negative and becomes zero
only for $\rho = \bar \rho$.

We now define the following class of functions for which the energy
$E$ defined in \eqref{E} is meaningful:
\begin{align}
  \label{eq:Am}
  \mathcal A_{\bar\rho} := \Bigg\{ \rho - \bar \rho \in
  \mathring{H}^{-1/2}(\RR^2) : \sgn(\rho) \sqrt{|\rho|} - \sgn(\bar
  \rho) \sqrt{|\bar \rho|} \in \mathring{H}^{1/2}(\RR^2) \Bigg\},
\end{align}
in the sense that $E: \mathcal A_{\bar\rho} \to \mathbb R \cup
\{+\infty\}$.  To see that this class consists of functions and not
merely of distributions, define $u \in \mathring{H}^{1/2}(\RR^2)$ for
a given $\rho \in \mathcal A_{\bar\rho}$ as
\begin{align}
  \label{u}
  u := \sgn(\rho) \sqrt{|\rho|} - \sgn(\bar \rho) \sqrt{|\bar \rho|}.
\end{align}
Then by fractional Sobolev embedding \cite[Theorem 8.4]{Lieb-Loss},
\cite[Theorem 6.5]{palatucci12}, we have $u \in L^4(\RR^2)$ and,
hence, $\rho \in L^2_{loc}(\RR^2)$. In particular, the integral
in the second line in \eqref{E} is locally well-defined.

We begin with a general result on existence of minimizers for $E$ in
\eqref{E} over $\mathcal A_{\bar\rho}$.
\begin{theorem}\label{thm:Egeneral}
  Let $\bar \rho \in \RR$, let $E$ be defined by \eqref{E} with $V \in
  \mathring{H}^{1/2}(\mathbb R^2)$, and let $\inf_{\rho \in \mathcal
    A_{\bar\rho}} E(\rho) < 0$. Then there exists $\rho_0 \in \mathcal
  A_{\bar\rho}$ such that $E(\rho_0) = \inf_{\rho \in \mathcal
    A_{\bar\rho}} E(\rho)$. Furthermore, $\rho_0 \not\equiv \bar
  \rho$, $\rho_0 \in C^{1/2}(\RR^2) \cap L^{\infty}(\RR^2)$ and
  $\rho_0(x) \to \bar \rho$ as $|x| \to \infty$.
\end{theorem}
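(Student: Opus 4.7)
The plan is the direct method of the calculus of variations. Fix a minimizing sequence $\{\rho_n\} \subset \mathcal{A}_{\bar\rho}$, set $u_n := \sgn(\rho_n)\sqrt{\abs{\rho_n}} - \sgn(\bar\rho)\sqrt{\abs{\bar\rho}} \in \mathring{H}^{1/2}(\RR^2)$, and let $U_n \in \mathring{H}^{1/2}(\RR^2)$ be the distributional solution of $(-\Delta)^{1/2} U_n = \rho_n - \bar\rho$. The kinetic term, the Coulomb term, and the pointwise nonnegative second-line integrand of $E$ (the tangent-line error of the convex function $s \mapsto \tfrac23 \abs{s}^{3/2}$ at $\bar\rho$) are all nonnegative, while the linear potential term is controlled using $V \in \mathring{H}^{1/2}(\RR^2)$ via Young's inequality,
\begin{equation*}
\abs{\langle V, U_n\rangle_{\mathring{H}^{1/2}(\RR^2)}} \leq \tfrac{1}{2b}\norm{V}_{\mathring{H}^{1/2}(\RR^2)}^2 + \tfrac{b}{4}\norm{U_n}_{\mathring{H}^{1/2}(\RR^2)}^2,
\end{equation*}
and the second term on the right is absorbed into the Coulomb term. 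This yields uniform bounds on $\norm{u_n}_{\mathring{H}^{1/2}(\RR^2)}$, $\norm{U_n}_{\mathring{H}^{1/2}(\RR^2)}$, and on the integral $\int_{\RR^2}\bigl(\tfrac{2}{3}\abs{\rho_n}^{3/2} - \tfrac{2}{3}\abs{\bar\rho}^{3/2} - \abs{\bar\rho}^{1/2}\sgn(\bar\rho)(\rho_n - \bar\rho)\bigr)\ud^2 x$.

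Passing to a subsequence, $u_n \wconv u_0$ and $U_n \wconv U_0$ in $\mathring{H}^{1/2}(\RR^2)$. The Rellich-type compactness of $\mathring{H}^{1/2}(\RR^2) \hookrightarrow L^2_{\mathrm{loc}}(\RR^2)$ and a diagonal extraction yield $u_n \to u_0$ a.e.\ on $\RR^2$, hence $\rho_n \to \rho_0 := v_0\abs{v_0}$ a.e., where $v_0 := u_0 + \sgn(\bar\rho)\sqrt{\abs{\bar\rho}}$. Testing $\rho_n - \bar\rho = (-\Delta)^{1/2} U_n$ against $\varphi \in C^\infty_c(\RR^2)$ and passing to the limit, using dominated convergence on the left (thanks to the uniform $L^4(\RR^2)$ bound on $u_n$ coming from fractional Sobolev embedding) and weak convergence on the right, identifies $(-\Delta)^{1/2} U_0 = \rho_0 - \bar\rho$, so $\rho_0 \in \mathcal{A}_{\bar\rho}$. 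Weak lower semicontinuity of the two $\mathring{H}^{1/2}$-norms, Fatou's lemma applied to the nonnegative second-line integrand, and weak continuity of the linear functional $\langle V, \cdot\rangle_{\mathring{H}^{1/2}(\RR^2)}$ then combine to give
\begin{equation*}
E(\rho_0) \leq \liminf_{n\to\infty} E(\rho_n) = \inf_{\mathcal{A}_{\bar\rho}} E < 0 = E(\bar\rho),
\end{equation*}
so $\rho_0$ is a minimizer and, in particular, $\rho_0 \not\equiv \bar\rho$.

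For regularity and decay, variations of $u$ against test functions yield, at least away from the nodal set $\{\rho_0 = 0\}$, the Euler-Lagrange equation
\begin{equation*}
a(-\Delta)^{1/2} u_0 + \abs{v_0}\,(u_0 + V + b U_0) = 0.
\end{equation*}
Starting from $u_0, V, U_0 \in L^4(\RR^2)$, a Moser-type iteration for the half-Laplacian, or equivalently testing the minimality inequality against truncations of $u_0$, raises $u_0$ to $L^p(\RR^2)$ for every $p<\infty$ and then to $L^\infty(\RR^2)$. Once $u_0$ is bounded, Schauder-type estimates for $(-\Delta)^{1/2}$ on the plane give $u_0 \in C^{1/2}(\RR^2)$, whence $\rho_0 \in C^{1/2}(\RR^2) \cap L^\infty(\RR^2)$. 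Finally, $u_0 \in L^4(\RR^2) \cap C^{1/2}(\RR^2)$ forces $u_0(x) \to 0$ as $\abs{x}\to\infty$ by a standard oscillation-vanishing argument for uniformly Hölder-continuous integrable functions, so $\rho_0(x) \to \bar\rho$ at infinity.

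The main obstacle I expect is this last regularity step: the operator is nonlocal and the coefficient $\abs{v_0}$ in the Euler-Lagrange equation degenerates on the nodal set $\{\rho_0 = 0\}$, so the bootstrap cannot be run directly on $\rho_0$ but must be set up on $u_0$ via a truncation/Moser scheme that tolerates a non-smooth, possibly vanishing nonlinearity. Concentration-compactness is not needed because the $V$-term is fixed (no translation invariance), and mass escaping to infinity would force $E(\rho_0) \geq 0$, contradicting $\inf E < 0$.
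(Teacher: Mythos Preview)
Your existence argument is essentially the paper's Proposition~\ref{Frechet}: direct method, weak compactness in $\mathring{H}^{1/2}$ for both $u_n$ and the potentials $U_n$, local strong convergence via fractional Rellich to identify the limit, Fatou for the nonnegative local term, weak lower semicontinuity of the norms, and weak continuity of the $V$-term. (A small imprecision: what you call ``dominated convergence'' to pass to the limit in $\int(\rho_n-\bar\rho)\varphi$ is really strong $L^q_{\mathrm{loc}}$ convergence for $q<2$, since there is no fixed dominating function; the paper makes this explicit via Nemytskii continuity of $S$.)

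On the Euler--Lagrange equation, your hedge ``at least away from the nodal set'' is unnecessary: because $S'(u)=2|u+\bar u|$ is continuous, the directional derivative of $E$ exists at every $u\in\mathcal H$ in all directions $h\in C^\infty_c$ (this is the paper's Proposition~\ref{Euler}, prepared by Lemmas~\ref{uht-admissible}--\ref{l-diff-C}), and \eqref{eq:EL++} holds distributionally on all of $\RR^2$.

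The regularity step is where your route diverges from the paper. The paper does \emph{not} run a Moser iteration; it writes the equation as $u=(-\Delta)^{-1/2}F$ with $F=-a^{-1}|u+\bar u|(u+V+bU_{S(u)})$ and bootstraps via Hardy--Littlewood--Sobolev and H\"older estimates on Riesz potentials (Lemma~\ref{lemma-HLS}). The genuine obstacle is not the degeneracy of $|v_0|$ on the nodal set---for \emph{upper} bounds that coefficient helps rather than hurts, since $|u+\bar u|\le |u|+\bar u$---but rather that when $\bar u\neq 0$ the terms $\bar u V$ and $\bar u\,U_{S(u)}$ sit only in $L^4(\RR^2)$, so $F$ never lands in $L^s$ with $s<2$ and the global Riesz bootstrap stalls at the borderline. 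The paper resolves this by a localization: the quantitative interior regularity Lemma~\ref{lemma-harmonic} (solutions of $(-\Delta)^{1/2}h=f$ with $f=0$ on $B_{2R}$ are $C^\infty$ on $B_R$ with norms controlled by $\|h\|_{L^p}$) lets one split $u=u_R+h_R$ as in \eqref{uRhR}, bootstrap $u_R$ on balls, and then send $R\to\infty$. Your Moser scheme is a legitimate alternative and can plausibly be pushed through, but you should expect the same $\bar u V\in L^4$ obstruction to surface when you test against $|u|^{p-2}u$ and try to close the iteration; the resolution will again be some form of localization or splitting rather than accommodating a degenerate coefficient.
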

\noindent
We note that the assumption
$\inf_{\rho \in \mathcal A_{\bar\rho}} E(\rho) < 0$ in Theorem
\ref{thm:Egeneral} is only needed to produce a {\em non-trivial}
minimizer. Otherwise by inspection $\rho = \bar \rho$ is automatically
a minimizer. Thus, existence of minimizers for $E$ over
$\mathcal A_{\bar \rho}$ is guaranteed for every
$V \in \mathring{H}^{1/2}(\RR^2)$.  Also, as a consequence of its
minimizing property, the function $\rho_0(x)$ in Theorem
\ref{thm:Egeneral} solves distributionally the Euler-Lagrange equation
associated with $E$ in \eqref{E}:
\begin{align}
  \label{ELEgen}
  0 = a (-\Delta)^{1/2} \left( \sgn \rho \sqrt{|\rho|} \right) +
  \sqrt{|\rho|} \left( \sgn \rho \sqrt{|\rho|} - \sgn \bar\rho
      \sqrt{|\bar\rho|} + V + b U \right).
\end{align}
In fact, it is more natural to write \eqref{ELEgen} in terms of the
variable $u$ defined in \eqref{u} (see
Sec. \ref{sec:EulerLagrange}). Let us also mention that while H\"older
regularity holds for general potentials $V$ from
$\mathring{H}^{1/2}(\RR^2)$, if $\rho$ changes sign one may not be
able to obtain arbitrarily high regularity of $\rho$ for smooth
potentials $V$ like in \eqref{E0}, see Remark \ref{remark-regularity}.

While the result in Theorem \ref{thm:Egeneral} gives a very general
existence result, it provides only a few basic properties of the
minimizers. In particular, it is not a priori clear whether $\rho_0$
has a sign, even for the potential due to a single charged impurity
appearing in the definition of $E_0$ in \eqref{E0}. This is not merely
a technical issue, since in graphene one generally needs to account
for the presence of both electrons and holes, especially at the
neutrality point, i.e., when $\bar\rho = 0$. It would seem plausible,
however, that in certain situations the minimizers consist only of the
charge carriers of one type. We speculate that this may indeed be the
case for the minimizers of $E_0$ in \eqref{E0} for all values of the
parameters. At least in the asymptotic limits $a \to 0$ or
$b \to \infty$ the minimizers of $E_0$ are expected to be positive.
We caution the reader, however, that in general the situation is
rather delicate, since, even for a negative $V$ with nice decay
properties at infinity, the minimizer might still change sign
\cite{lmm2}.

Motivated by the above observations, for $\bar \rho \geq 0$ we
introduce an admissible class consisting of densities $\rho \geq 0$,
which implies that there are only electrons in the graphene layer:
\begin{align}
  \label{eq:Amp}
  \mathcal A_{\bar\rho}^+ := \Bigl\{ \rho - \bar \rho \in
  \mathring{H}^{-1/2}(\RR^2) : \sqrt{\rho} - \sqrt{\bar\rho} \in
  \mathring{H}^{1/2}(\RR^2), \ \rho \geq 0 \Bigr\}.
\end{align}
Within this admissible class, we have the following counterpart of
Theorem \ref{thm:Egeneral} in the case of strictly positive background
charge.

\begin{theorem}\label{thm:Egenplus}
  Let $\bar \rho >0$, let $E$ be defined by \eqref{E} with $V \in
  \mathring{H}^{1/2}(\mathbb R^2)$
  and let
  $V\not\equiv 0$.  Then there exists a unique $\rho_0 \in \mathcal
  A_{\bar\rho}^+$ satisfying $E(\rho_0) = \inf_{\rho \in \mathcal
    A_{\bar\rho}^+} E(\rho)$. Furthermore, $\rho_0 \not\equiv
  \bar\rho$, $\rho_0 > 0$, $\rho_0 \in C^{1/2}(\RR^2)\cap
  L^\infty(\RR^2)$ and $\rho_0(x) \to \bar\rho$ as $|x| \to \infty$.
\end{theorem}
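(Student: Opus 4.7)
The plan is to run the direct method of the calculus of variations on $\mathcal A_{\bar\rho}^+$, exploit strict convexity for uniqueness, and invoke a strong maximum principle for the fractional Laplacian to upgrade $\rho_0 \geq 0$ to $\rho_0 > 0$. First, I would check that the infimum is strictly negative. Since $V \not\equiv 0$, I can select $\psi \in C^\infty_c(\RR^2)$ with $\int_{\RR^2} V \psi \, \ud^2 x < 0$; because $\bar\rho > 0$, the perturbation $\rho_\veps = \bar\rho + \veps \psi$ lies in $\mathcal A_{\bar\rho}^+$ for all $\veps > 0$ sufficiently small, and a Taylor expansion of each quantity in \eqref{E} around the constant state $\bar\rho$ yields $E(\rho_\veps) = \veps \int_{\RR^2} V \psi \, \ud^2 x + O(\veps^2) < 0$. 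Given a minimizing sequence $\rho_n$, the coercivity estimates already used in Theorem \ref{thm:Egeneral} (absorbing $\langle V, U_n \rangle_{\oH^{1/2}}$ into $\tfrac{b}{2}\|U_n\|_{\oH^{1/2}}^2$ via Young's inequality) provide uniform bounds on $u_n = \sqrt{\rho_n} - \sqrt{\bar\rho}$ and on $U_n$ in $\oH^{1/2}(\RR^2)$. Extracting a subsequence with a.e.\ pointwise convergence, then combining stability of the positivity constraint under a.e.\ limits with lower semicontinuity of each term, gives a minimizer $\rho_0 \in \mathcal A_{\bar\rho}^+$ with $\rho_0 \not\equiv \bar\rho$.

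For uniqueness the key point is that $E$ is strictly convex in $\rho$ on the convex set $\mathcal A_{\bar\rho}^+$. Convexity of $\mathcal A_{\bar\rho}^+$ and of the first term of $E$ both follow from the double-integral representation
\[
a \, \bigl\|\sqrt{\rho} - \sqrt{\bar\rho}\bigr\|_{\oH^{1/2}(\RR^2)}^2 \propto \iint_{\RR^2 \times \RR^2} \frac{\bigl(\sqrt{\rho(x)} - \sqrt{\rho(y)}\bigr)^2}{|x-y|^3} \, \ud^2 x \, \ud^2 y,
\]
combined with the observation that $(s, t) \mapsto (\sqrt{s} - \sqrt{t})^2 = s + t - 2\sqrt{st}$ is convex on $[0, \infty)^2$ because the geometric mean is concave (Cauchy--Schwarz). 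The Thomas--Fermi--Dirac term is convex since $\rho \mapsto \rho^{3/2}$ is strictly convex on $[0, \infty)$, the $V$ term is linear in $\rho$, and the Coulomb term $\tfrac{b}{2}\|U\|_{\oH^{1/2}}^2$ is strictly convex in $\rho - \bar\rho$ since the map $\rho - \bar\rho \mapsto U$ is linear and injective via $(-\Delta)^{1/2}$. Strict convexity of $E$ forces uniqueness.

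The remaining step is strict positivity. One-sided perturbations $w + t\eta$ with $t > 0$, $\eta \in C^\infty_c(\RR^2)$, $\eta \geq 0$ on $\{w = 0\}$, where $w = \sqrt{\rho_0}$, produce the distributional variational inequality
\[
a (-\Delta)^{1/2} w + w \bigl( w - \sqrt{\bar\rho} + V + b U_0 \bigr) \geq 0,
\]
with equality on $\{w > 0\}$. If $w(x_0) = 0$ somewhere, then since $w \geq 0$ and $w \not\equiv 0$ (because $\bar\rho > 0$ and $u \in \oH^{1/2}(\RR^2)$), the singular integral representation of $(-\Delta)^{1/2} w(x_0)$ becomes signed, giving $(-\Delta)^{1/2} w(x_0) < 0$ strictly (the nonlocal strong maximum principle), whereas the variational inequality at $x_0$ combined with $w(x_0) = 0$ forces $a (-\Delta)^{1/2} w(x_0) \geq 0$, a contradiction. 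Once $\rho_0 > 0$ is established the constraint is inactive, so the Euler--Lagrange equation \eqref{ELEgen} holds as a genuine equation and the regularity $\rho_0 \in C^{1/2}(\RR^2) \cap L^\infty(\RR^2)$ together with the decay $\rho_0(x) \to \bar\rho$ as $|x| \to \infty$ follow from the same bootstrap and tail arguments already used in Theorem \ref{thm:Egeneral}. I expect the main obstacle to be making this positivity step fully rigorous: blending the one-sided variational inequality (an obstacle problem in a nonlocal setting) with the nonlocal touching lemma for $(-\Delta)^{1/2}$ requires interpreting the inequality carefully at points of the zero set, where the multiplier $w$ in front of the nonlinear term vanishes.
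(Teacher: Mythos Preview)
Your existence and uniqueness arguments are essentially those of the paper: the direct method with the same coercivity bound, and strict convexity in $\rho$ coming from the concavity of the geometric mean (the paper cites \cite{lieb96}*{Theorem 7.13} for the inequality \eqref{eq:convexH1/2}, which is exactly your observation that $(s,t)\mapsto(\sqrt{s}-\sqrt{t})^2$ is convex). Your perturbative computation that $\inf E<0$ when $V\not\equiv 0$ is slightly more direct than the paper's route, which instead reads this off a posteriori from the Euler--Lagrange equation.

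The genuine gap is the circularity you anticipated in the positivity step. Your touching argument requires evaluating $(-\Delta)^{1/2}w(x_0)$ at a zero $x_0$ of $w$ and comparing it to the variational inequality \emph{pointwise}. But at this stage you only know $w\in\oH^{1/2}(\RR^2)\subset L^4(\RR^2)$; the singular integral defining $(-\Delta)^{1/2}w(x_0)$ need not converge, and the distributional inequality cannot be localised to $x_0$ without continuity of $(-\Delta)^{1/2}w$. You defer regularity until after positivity (``once $\rho_0>0$ the constraint is inactive''), so the argument loops.

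The paper breaks the loop by removing the constraint altogether: it introduces a reflected functional $E_+$, obtained by replacing $\Phi,S$ with their even extensions $\Phi_+(u)=\Phi(|u+\bar u|-\bar u)$, $S_+(u)=S(|u+\bar u|-\bar u)$. Since $E_+(|u+\bar u|-\bar u)\le E_+(u)$, every minimiser of $E_+$ on the unconstrained class lies in $\mathcal H_+$ and minimises $E$ there. The point is that $E_+$ yields a \emph{two--sided} Euler--Lagrange equation, from which the bootstrap of Lemma~\ref{lem:regularity} gives $u\in C^{1/2}(\RR^2)\cap L^\infty(\RR^2)$ \emph{before} any positivity is known. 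With $w=u+\bar u\ge 0$ bounded and $|u+bU_{S(u)}|\le c$, the paper then rewrites the equation as
\[
a(-\Delta)^{1/2}w+(V_++c)w=g\ge 0,
\]
observes that $V_++c$ lies in the local Kato class for $(-\Delta)^{1/2}$, and concludes $w>0$ from the strictly positive Green's function of $a(-\Delta)^{1/2}+V_++c$. This Green's function representation replaces your touching lemma and does not require any pointwise evaluation of the fractional Laplacian at a putative zero.
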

\noindent
One would naturally expect minimizers in Theorem \ref{thm:Egenplus} to
coincide with the one in Theorem \ref{thm:Egeneral} in many
situations, yet it seems difficult to prove this at the moment.  It is
clear, however, that $\rho_0$ from Theorem \ref{thm:Egenplus} is a
local minimizer of $E$ with respect to smooth perturbations with
compact support. As a consequence, these minimizers solve pointwise
the Euler-Lagrange equation associated with the energy, which for
$\rho > 0$ simplifies to
\begin{align}
  \label{ELEpos}
  0 = a (-\Delta)^{1/2} \left( \sqrt{\rho} \right) + \sqrt{\rho}
  \left( \sqrt{\rho} - \sqrt{\bar\rho} + V + b U \right).
\end{align}
We also note that the assumption of boundedness of $V$ in Theorem
\ref{thm:Egenplus} is needed to ensure strict positivity of the
minimizer, which is required to obtain \eqref{ELEpos}. In addition,
positivity of $\rho_0$ implies further regularity under additional
smoothness assumptions on $V$. In particular, $\rho_0 \in
C^\infty(\RR^2)$ if $V \in C^\infty(\RR^2)$, see Remark
\ref{remark-regularity-plus}.

We note that one of the main differences with the result of Theorem
\ref{thm:Egeneral} in the case of Theorem \ref{thm:Egenplus} is that
there is uniqueness of minimizers, which is due to a kind of strict
convexity of the functional $E$ over $\mathcal A_{\bar\rho}^+$. In
fact, due to this strict convexity one should further expect
uniqueness of solutions of \eqref{ELEpos} and, in particular, that the
minimizer $\rho_0$ in Theorem \ref{thm:Egenplus} is
radially-symmetric, if so is the potential $V$ \cite{lmm2}.

\begin{remark}
  It is easy to see from \eqref{E} that if $V \equiv 0$, the unique
  minimizer of $E$ over $\mathcal A_{\bar\rho}$ is $\rho = \bar
  \rho$. At the same time, if $\bar \rho > 0$ and $\rho = \bar \rho$
  is a minimizer of $E$ over $\mathcal A_{\bar\rho}$, by
  \eqref{ELEpos} we have $V \equiv 0$ and, hence, there are no other
  minimizers. This and the fact that $\mathcal A^+_{\bar \rho} \subset
  \mathcal A_{\bar \rho}$ also implies that if $\bar \rho > 0$, $V \in
  \oH^{1/2}(\RR^2)$ and $V \not\equiv 0$, then by Theorem
  \ref{thm:Egenplus} and the above discussion we also have $\inf_{\rho
    \in \mathcal A_{\bar \rho}} E(\rho) < 0$, i.e., the assumptions of
  Theorem \ref{thm:Egeneral} are satisfied.
\end{remark}

Even though we do not know whether in general the minimizers of $E$
over $\mathcal A_{\bar \rho}$ are positive, in the case of $\bar \rho
> 0$ we are able to prove that this is indeed the case for potentials
$V$ which are, in some sense, ``small''. The smallness of the
potential is expressed in terms of the magnitude of its
$\mathring{H}^{1/2}(\mathbb R^2)$ norm. Our result is given by the
following theorem.

\begin{theorem} \label{thm:positive} Let $\bar \rho >0$, let $E$ be
  defined by \eqref{E} with $V \in \mathring{H}^{1/2}(\mathbb R^2)$
  and let $V\not\equiv 0$.  Then there exists a constant $C > 0$
  depending only on $a$, $b$ and $\bar\rho$ such that if $\| V
  \|_{\mathring{H}^{1/2}(\mathbb R^2)} \leq C$, then the unique
  minimizer $\rho_0 > 0$ of $E$ over $\mathcal A_{\bar\rho}^+$ in
  Theorem \ref{thm:Egenplus} coincides with the minimizer of $E$ over
  $\mathcal A_{\bar \rho}$ in Theorem \ref{thm:Egeneral}.
\end{theorem}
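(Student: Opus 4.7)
The plan is to show that any minimizer $\rho_*$ of $E$ over $\mathcal{A}_{\bar\rho}$ given by Theorem \ref{thm:Egeneral} satisfies $\rho_* \geq 0$ a.e.~provided $\|V\|_{\oH^{1/2}(\RR^2)}$ is below an explicit threshold depending only on $a$, $b$ and $\bar\rho$. Once this is established, $\rho_* \in \mathcal A_{\bar\rho}^+$ and the uniqueness part of Theorem \ref{thm:Egenplus} identifies $\rho_* = \rho_0$. To obtain the a priori bound, observe that $\bar\rho \in \mathcal A_{\bar\rho}$ with $E(\bar\rho) = 0$, hence $E(\rho_*) \leq 0$; completing the square in the electrostatic terms in \eqref{E} gives
\begin{equation*}
  a\|u_*\|^2_{\oH^{1/2}(\RR^2)} + \int_{\RR^2} g(w_*)\,\ud^2 x + \tfrac{1}{2b}\|bU_* + V\|^2_{\oH^{1/2}(\RR^2)} \leq \tfrac{1}{2b}\|V\|^2_{\oH^{1/2}(\RR^2)},
\end{equation*}
where $w_* := \sgn(\rho_*)\sqrt{|\rho_*|}$, $u_* := w_* - \sqrt{\bar\rho}$, $U_*$ is the potential from \eqref{U}, and $g(w) := \tfrac{2}{3}|w|^3 - \sqrt{\bar\rho}\,\sgn(w)\,w^2 + \tfrac{1}{3}\bar\rho^{3/2} \geq 0$ satisfies $g(w) \geq \tfrac{1}{3}\bar\rho^{3/2}$ on $\{w \leq 0\}$. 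Consequently $|\{\rho_* \leq 0\}| \leq 3\|V\|_{\oH^{1/2}}^2/(2b\bar\rho^{3/2})$ and $\|bU_* + V\|_{\oH^{1/2}} \leq \|V\|_{\oH^{1/2}}$.

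The main step is to test the Euler--Lagrange equation \eqref{ELEgen} against $\phi := (w_*)_- = \max(-u_* - \sqrt{\bar\rho}, 0)$, which lies in $\oH^{1/2}(\RR^2)$ as a 1-Lipschitz function of $u_*$ that vanishes at $u_* = 0$. Since $|w_*|\phi = \phi^2$ and $u_* = -\phi - \sqrt{\bar\rho}$ on the support of $\phi$, the testing yields
\begin{equation*}
  a\langle \phi, u_* \rangle_{\oH^{1/2}} + \int_{\RR^2} (V + bU_*)\phi^2\,\ud^2 x = \sqrt{\bar\rho}\|\phi\|^2_{L^2} + \int_{\RR^2} \phi^3\,\ud^2 x.
\end{equation*}
The crux is the sign inequality $\langle \phi, u_* \rangle_{\oH^{1/2}} \leq -\|\phi\|^2_{\oH^{1/2}}$: setting $\tilde w := u_* + \phi = \max(u_*, -\sqrt{\bar\rho}) \in \oH^{1/2}(\RR^2)$, we have $\langle \phi, u_*\rangle = \langle \phi, \tilde w\rangle - \|\phi\|^2$, and the Gagliardo-form integrand of $\langle \phi, \tilde w\rangle$ vanishes whenever both evaluation points lie in $A := \{\rho_* \geq 0\}$ or both in $B := \{\rho_* < 0\}$, while on the mixed pairs $A\times B$ it is a positive multiple of $(u_*(x)+\sqrt{\bar\rho})(u_*(y)+\sqrt{\bar\rho})/|x-y|^3$ whose two factors have opposite signs, so that $\langle \phi, \tilde w\rangle \leq 0$.

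Combining these and dropping the non-negative terms $\sqrt{\bar\rho}\|\phi\|^2_{L^2}$ and $\int \phi^3$ yields
\begin{equation*}
  a\|\phi\|^2_{\oH^{1/2}} \leq \int_{\RR^2} (V + bU_*)\phi^2\,\ud^2 x \leq \|V + bU_*\|_{L^4}\,\|\phi\|^2_{L^{8/3}} \leq C\,\|V\|_{\oH^{1/2}}\,|B|^{1/4}\,\|\phi\|^2_{\oH^{1/2}},
\end{equation*}
where we used H\"older $\|\phi\|_{L^{8/3}}^2 \leq |B|^{1/4}\|\phi\|_{L^4}^2$ (valid since $\phi$ vanishes outside $B$), the fractional Sobolev embedding $\oH^{1/2}(\RR^2) \hookrightarrow L^4(\RR^2)$, and $\|V + bU_*\|_{\oH^{1/2}} \leq \|V\|_{\oH^{1/2}}$. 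Since $|B|^{1/4} \lesssim \|V\|^{1/2}_{\oH^{1/2}}$, this collapses to $a \leq C'\|V\|^{3/2}_{\oH^{1/2}}$ whenever $\phi \not\equiv 0$, a contradiction as soon as $\|V\|_{\oH^{1/2}}$ is below the explicit threshold $(a/C')^{2/3}$. Hence $\phi \equiv 0$, so $\rho_* \geq 0$ a.e., and Theorem \ref{thm:Egenplus} gives $\rho_* = \rho_0$. The principal obstacle is the sign inequality for the fractional pairing, which depends essentially on the non-local structure of $\oH^{1/2}(\RR^2)$ and on the specific decomposition $u_* = \tilde w - \phi$; a secondary technical point is justifying the distributional testing of \eqref{ELEgen} against the non-smooth $\phi$, which should follow by approximation in view of the regularity of $\rho_*$ from Theorem \ref{thm:Egeneral} together with the integrability of every term appearing in the identity.
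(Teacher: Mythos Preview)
Your argument is correct and follows a genuinely different route from the paper's own proof.

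The paper works in the variable $u = w_* - \bar u$ and proves the stronger $L^\infty$ statement $\|u\|_{L^\infty(\RR^2)} < \bar u$ (which immediately gives $w_* > 0$). Its mechanism is the quantitative H\"older estimate $|u(x)-u(y)| \le C\|F\|_{L^4}|x-y|^{1/2}$ derived in the regularity lemma, with $F$ the right-hand side of the Euler--Lagrange equation. From $E(u)\le 0$ one bounds $\|u\|_{L^4}$ and $\|U_{S(u)}\|_{L^4}$ by $\|V\|_{\oH^{1/2}}$; assuming $\|u\|_{L^\infty}\ge\bar u$ and picking $x^*$ near the sup, the H\"older bound forces $|u|\gtrsim\|u\|_{L^\infty}$ on a ball of radius $\sim\|V\|_{\oH^{1/2}}^{-2}$, which makes $\|u\|_{L^4}$ too large and yields the contradiction $\|V\|_{\oH^{1/2}}\ge C(a,b,\bar\rho)$.

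Your approach is an energy/testing argument: you pair the Euler--Lagrange equation with $\phi=(w_*)_-$, use the Gagliardo sign inequality $\langle\phi,u_*\rangle_{\oH^{1/2}}\le-\|\phi\|^2_{\oH^{1/2}}$ (via the decomposition $u_*=\tilde w-\phi$ with $\tilde w=\max(u_*,-\sqrt{\bar\rho})$), and close with H\"older on the support of $\phi$ plus the measure bound $|\{\rho_*\le 0\}|\lesssim\|V\|^2_{\oH^{1/2}}$ coming from the local energy $g(w)$. This avoids the quantitative $C^{1/2}$ oscillation estimate entirely; you only use the qualitative conclusions of Theorem~\ref{thm:Egeneral} ($u_*\in C^{1/2}\cap L^\infty$ with decay) to see that $\phi$ is bounded with compact support, which makes the approximation of $\phi$ by $C^\infty_c$ test functions, and hence the testing of \eqref{eq:EL++}, routine. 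The paper's method buys a sharper pointwise conclusion ($\|u\|_{L^\infty}<\bar u$), while yours is shorter, more self-contained, and exploits the nonlocal structure of the $\oH^{1/2}$ pairing in a clean way.
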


\noindent We note that in the parameter regime of Theorem
\ref{thm:positive} the minimizer $\rho_0 > 0$ does not deviate much
from $\bar \rho > 0$. In particular, if
$\| V \|_{\mathring{H}^{1/2}(\mathbb R^2)} \to 0$, one expects to
recover, to the leading order, the solution of \eqref{ELEpos}
linearized around $\rho = \bar \rho$, which expresses the linear
response of the system to the perturbation by the potential $V$ and
describes {\em screening} of the external charge by free electrons in
the graphene layer. A more detailed analysis of this phenomenon will
be carried out in the forthcoming paper \cite{lmm2}. Note that within
the Thomas-Fermi type models of the usual electron systems screening
was studied mathematically in \cite{Lieb-Simon-1977} for the
Thomas-Fermi model and in \cite{Cances-2011} for the
Thomas-Fermi-von~Weizs\"acker model.

We now focus on the main situation of physical interest, in which the
layer is at the neutrality point. In particular, we wish to
investigate how a graphene layer reacts to external charges in the
presence of a supply of electrons from a lead at infinity. Fixing
$\bar \rho = 0$, we know that under the assumptions of Theorem
\ref{thm:Egeneral} there is a non-trivial minimizer in the class
$\mathcal A_0$. As we already mentioned, we do not know whether this
minimizer also belongs to $\mathcal A_0^+$, even for a potential
defined in \eqref{eq:V} with a positive measure $\mu$. Nevertheless,
if we restrict the admissible class to $\mathcal A_0^+$, we have the
following analog of Theorem \ref{thm:Egenplus}.

\begin{theorem}\label{thm:EA0p}
  Let $\bar \rho = 0$, let $E$ be defined by \eqref{E} with $V \in
  \mathring{H}^{1/2}(\mathbb R^2)$,
  and let
  $\inf_{\rho \in \mathcal A_{\bar\rho}^+} E(\rho) < 0$. Then there
  exists a unique $\rho_0 \in \mathcal A_{\bar\rho}^+$ satisfying
  $E(\rho_0) = \inf_{\rho \in \mathcal A_{\bar\rho}^+}
  E(\rho)$. Furthermore, $\rho_0 > 0$, $\rho_0 \in C^{1/2}(\RR^2) \cap
  L^{\infty}(\RR^2)$ and $\rho_0(x) \to 0$ as $\abs{x} \to \infty$.
\end{theorem}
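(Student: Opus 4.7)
The plan is to adapt the direct method already used for Theorem~\ref{thm:Egenplus} to the degenerate case $\bar\rho=0$, with the assumption $\inf_{\mathcal{A}_0^+}E<0$ ensuring that the minimizer is nontrivial. Take a minimizing sequence $\{\rho_n\}\subset\mathcal{A}_0^+$ with $E(\rho_n)\to\inf_{\mathcal{A}_0^+}E<0$, and set $u_n:=\sqrt{\rho_n}$ and $U_n:=(-\Delta)^{-1/2}\rho_n$. Coercivity comes from Cauchy--Schwarz and Young on the $V$-$U$ cross term: $|\langle V,U_n\rangle_{\oH^{1/2}}|\le\frac{b}{4}\|U_n\|_{\oH^{1/2}}^2+\frac{1}{b}\|V\|_{\oH^{1/2}}^2$, which together with $E(\rho_n)$ bounded above yields uniform control of $\|u_n\|_{\oH^{1/2}(\RR^2)}$, $\|u_n\|_{L^3(\RR^2)}$ (from the Thomas--Fermi--Dirac term) and $\|U_n\|_{\oH^{1/2}(\RR^2)}$; the fractional Sobolev embedding $\oH^{1/2}(\RR^2)\hookrightarrow L^4(\RR^2)$ then gives a uniform $L^2$ bound on $\rho_n$.

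Passing to a subsequence, $u_n\rightharpoonup u_0$ weakly in $\oH^{1/2}(\RR^2)$ and $u_n\to u_0$ a.e., and $U_n\rightharpoonup U_0$ weakly in $\oH^{1/2}(\RR^2)$. Setting $\rho_0:=u_0^2\ge0$ and testing against smooth compactly supported functions (using the a.e.\ convergence together with the uniform $L^2$ bound on $\rho_n$) identifies $(-\Delta)^{1/2}U_0=\rho_0$, so $\rho_0\in\mathcal{A}_0^+$. Weak lower semicontinuity of the two $\oH^{1/2}$-seminorms, Fatou's lemma for the cubic term, and weak continuity of $\langle V,\cdot\rangle_{\oH^{1/2}}$ then give $E(\rho_0)\le\liminf_{n\to\infty}E(\rho_n)<0$, so $\rho_0\not\equiv 0$ and $\rho_0$ minimizes $E$ on $\mathcal{A}_0^+$. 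The delicate point here is ruling out the scenario in which the $u_n$ carry all their mass off to infinity and $u_0\equiv 0$; this is prevented by a Brezis--Lieb-type splitting $u_n=u_0+w_n$ with $w_n\rightharpoonup 0$, exploiting the fact that $E-\langle V,\cdot\rangle_{\oH^{1/2}}$ is manifestly non-negative on $\mathcal{A}_0^+$, so the ``escaping'' energy contribution is non-negative and cannot push $E$ strictly below $E(\rho_0)$.

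To upgrade $\rho_0\ge 0$ to $\rho_0>0$, I would apply the strong maximum principle for $(-\Delta)^{1/2}$ to the Euler--Lagrange equation \eqref{ELEpos}, exactly as in the proof of Theorem~\ref{thm:Egenplus}. First-variation arguments with non-negative smooth compactly supported perturbations show that $u_0$ satisfies \eqref{ELEpos} distributionally on the open set $\{u_0>0\}$ together with a one-sided inequality elsewhere; a rewriting of the form $(-\Delta)^{1/2}u_0\ge -Cu_0$ (valid after invoking the a priori bounds on $u_0$, $U_0$ and on $V\in L^4(\RR^2)$) then forces $u_0>0$ globally, since $u_0\not\equiv 0$ and the operator is nonlocal. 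Once $\rho_0>0$ holds pointwise, \eqref{ELEpos} becomes an identity, and a standard bootstrap through the mapping properties of $(-\Delta)^{1/2}$ yields $\rho_0\in C^{1/2}(\RR^2)\cap L^\infty(\RR^2)$ together with the decay $\rho_0(x)\to 0$ as $|x|\to\infty$ (the latter from $u_0\in L^3(\RR^2)\cap L^4(\RR^2)$ combined with the H\"older continuity).

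Uniqueness then follows from the same ``kind of strict convexity'' of $E$ over $\mathcal{A}_{\bar\rho}^+$ invoked in Theorem~\ref{thm:Egenplus}; the structural argument transfers verbatim, the case $\bar\rho=0$ only simplifying the reference terms. I expect the two main obstacles to be the compactness step in the second paragraph, because the absence of a positive background means minimizing sequences have no natural pinning and may concentrate or spread anywhere in $\RR^2$, and the strong maximum principle step, where the nonlocality of $(-\Delta)^{1/2}$ must play the role usually taken by pointwise arguments for local operators.
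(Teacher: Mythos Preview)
Your existence and uniqueness arguments are essentially the paper's: direct method with weak lower semicontinuity for existence, and strict convexity of $E(\rho)$ over $\mathcal A_{0}^+$ for uniqueness. The Brezis--Lieb splitting you propose is unnecessary: since $\langle V,\cdot\rangle_{\oH^{1/2}}$ is linear and hence weakly continuous, while all remaining terms in $E$ are non-negative and weakly lower semicontinuous, you get $E(\rho_0)\le\liminf E(\rho_n)<0$ directly, which already rules out $\rho_0\equiv 0$. This is exactly the argument in Proposition~\ref{Frechet}, and no concentration-compactness analysis is needed.

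The genuine gap is in your ordering of the Euler--Lagrange equation, positivity, and regularity. Working directly with the constrained class $\mathcal H_+$, first variation only gives the equation on $\{u_0>0\}$ together with a one-sided inequality; you then want $(-\Delta)^{1/2}u_0\ge -Cu_0$ with a \emph{constant} $C$ to force $u_0>0$, and only afterwards bootstrap regularity. But that constant-$C$ bound would require $u_0,U_0,V\in L^\infty$, none of which you have at that stage (only $u_0\in L^3\cap L^4$ and $V\in L^4$). The paper avoids this circularity by a symmetrization device: it extends $E$ to a functional $E_+$ on an \emph{unconstrained} class $\tilde{\mathcal H}$, replacing $S,\Phi$ by their reflections $S_+,\Phi_+$ about $u=-\bar u$. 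Since $E_+(|u+\bar u|-\bar u)\le E_+(u)$, a minimizer of $E_+$ can be taken in $\mathcal H_+$, but as an unconstrained minimizer it satisfies the full Euler--Lagrange equation \eqref{ELupos} everywhere. With the equation in hand, the bootstrap of Lemma~\ref{lem:regularity} gives $u_0,U_{S(u_0)}\in C^{1/2}\cap L^\infty$ \emph{first}; only then does the paper prove strict positivity, via a Green's function representation for $a(-\Delta)^{1/2}+V_++c$ with $V_+\in L^4$ treated as a local Kato-class potential (not as a bounded coefficient). So the paper's order is unconstrained equation $\to$ regularity $\to$ positivity, and your reversed order breaks at the positivity step as written.
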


Let us point out that, in contrast to Theorem \ref{thm:Egenplus}, the
condition that $V \not\equiv 0$ is not sufficient for existence of
non-trivial minimizers in Theorem \ref{thm:EA0p}. In fact, it can be
shown, following the arguments in the proof of Theorem
\ref{thm:positive} that for sufficiently small values of
$\| V \|_{\mathring{H}^{1/2}(\mathbb R^2)}$ the energy $E$ in
\eqref{eq:E} cannot have non-trivial minimizers. We illustrate this
point by considering the case of the energy $E_0$ in \eqref{E0}, which
is also of particular interest because of its physical significance.
Defining
\begin{align}
  \label{ac}
  a_c := {\Gamma^2(\frac14) \over 2 \Gamma^2(\frac34)},
\end{align}
where $\Gamma(x)$ is the Gamma function and $a_c \approx 4.3769$ is
the inverse of the Hardy constant for the operator square root of the
negative Laplacian \cite[Remark 4.2]{frank08}, we have the following
result for the generalization of the energy $E_0$ in \eqref{E0}.

\begin{theorem}\label{thm:E0A0}
  Let $\bar \rho = 0$ and let $E$ be defined by \eqref{E} with
  \begin{equation}
    \label{V0}
    V(x) = -\frac{1}{(1 + |x|^2)^{1/2}}.
  \end{equation}
  Then:
  \begin{enumerate}[(i)]
  \item If $a \geq a_c$, then $\rho_0 = 0$ is the unique minimizer of
    $E$ over $\mathcal A_0$.
  \item If $a < a_c$, then there exists a minimizer $\rho_0 \not\equiv
    0$ of $E$ over $\mathcal A_0^+$.
  \end{enumerate}
\end{theorem}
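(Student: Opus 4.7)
The plan is to exploit the sharp fractional Hardy (Herbst) inequality
\begin{equation*}
\int_{\RR^2} \frac{|v(x)|^2}{|x|}\,\ud^2 x \leq a_c\,\|v\|^2_{\mathring H^{1/2}(\RR^2)}, \qquad v \in \mathring H^{1/2}(\RR^2),
\end{equation*}
with $a_c$ as in \eqref{ac}, whose value is precisely matched to the $|x|^{-1}$ large-distance tail of $|V(x)| = (1+|x|^2)^{-1/2}$.

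For part (i), given $\rho \in \mathcal A_0$, set $u := \sgn(\rho)\sqrt{|\rho|} \in \mathring H^{1/2}(\RR^2)$, so that $|\rho| = u^2$. Using $|V| \leq |x|^{-1}$ together with Hardy, I obtain $V\rho \in L^1(\RR^2)$ and
\begin{equation*}
\langle V, U\rangle_{\mathring H^{1/2}} = \int_{\RR^2} V \rho\,\ud^2 x \geq -\int_{\RR^2} \frac{u^2}{|x|}\,\ud^2 x \geq -a_c\,\|u\|^2_{\mathring H^{1/2}(\RR^2)}.
\end{equation*}
Substituting into \eqref{E} yields
\begin{equation*}
E(\rho) \geq (a-a_c)\,\|u\|^2_{\mathring H^{1/2}(\RR^2)} + \tfrac{2}{3}\int_{\RR^2}|\rho|^{3/2}\,\ud^2 x + \tfrac{b}{2}\|U\|^2_{\mathring H^{1/2}(\RR^2)}.
\end{equation*}
For $a \geq a_c$ the right-hand side is non-negative, and strictly positive as soon as $\rho \not\equiv 0$, since the Thomas--Fermi--Dirac term $\int|\rho|^{3/2}$ is then positive (equivalently, $\|U\|^2_{\mathring H^{1/2}} > 0$ as $(-\Delta)^{1/2}U = \rho$). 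Hence $\rho = 0$ is the unique minimizer.

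For part (ii), by Theorem \ref{thm:EA0p} it suffices to exhibit $\rho \in \mathcal A_0^+$ with $E(\rho) < 0$. Since $a < a_c$ is strictly below the sharp Hardy constant, I pick $\phi \in C^\infty_c(\RR^2)$ with $\phi \geq 0$ and
\begin{equation*}
\int_{\RR^2}\frac{\phi(y)^2}{|y|}\,\ud^2 y > a\,\|\phi\|^2_{\mathring H^{1/2}(\RR^2)},
\end{equation*}
obtained by smooth truncation and symmetric decreasing rearrangement of a near-extremizer of Hardy. Setting $\rho_{\epsilon,\lambda}(x) := \epsilon\,\phi(\lambda x)^2 \in \mathcal A_0^+$, so that $u_{\epsilon,\lambda} = \sqrt\epsilon\,\phi(\lambda\cdot)$, and performing the substitution $y = \lambda x$ in the attraction term together with the Fourier-side scaling of the $\mathring H^{1/2}$ norms, I obtain
\begin{equation*}
E(\rho_{\epsilon,\lambda}) = \frac{\epsilon}{\lambda}\bigg[a\,\|\phi\|^2_{\mathring H^{1/2}} - \int_{\RR^2} \frac{\phi(y)^2}{\sqrt{\lambda^2+|y|^2}}\,\ud^2 y\bigg] + O(\epsilon^{3/2}) + O(\epsilon^2),
\end{equation*}
the implied constants depending on $\lambda$ and $\phi$. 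By dominated convergence (using $\phi^2/|y| \in L^1(\RR^2)$ for compactly supported $\phi$), as $\lambda \to 0$ the bracket tends to $a\|\phi\|^2_{\mathring H^{1/2}} - \int \phi^2/|y|\,\ud^2 y < 0$. Fixing $\lambda$ small enough to make this bracket strictly negative, and then taking $\epsilon$ sufficiently small, the linear-in-$\epsilon$ attractive contribution dominates the higher-order corrections and $E(\rho_{\epsilon,\lambda}) < 0$.

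The main technical point of part (ii) is the matching between the near-extremizer of fractional Hardy and the regularized potential $(1+|x|^2)^{-1/2}$: the spatial rescaling $\phi(\lambda \cdot)$ with $\lambda \to 0$ spreads the test profile onto scales $|x| \gg 1$ where the smoothed potential is indistinguishable from the pure Coulomb tail $|x|^{-1}$, so that the Hardy ratio is recovered in the limit, while the amplitude parameter $\epsilon$ isolates the linear-order attractive contribution and renders both the $O(\epsilon^{3/2})$ Thomas--Fermi--Dirac term and the $O(\epsilon^2)$ Coulomb self-interaction term subdominant.
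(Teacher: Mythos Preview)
Your proof is correct and follows essentially the same approach as the paper: part (i) via the sharp fractional Hardy inequality together with $|V(x)|\le |x|^{-1}$, and part (ii) by rescaling a non-negative near-extremizer of Hardy to large spatial scales (your $\lambda\to 0$ is exactly the paper's dilation $u(x/\lambda)$ with $\lambda\to\infty$) and then choosing a small amplitude so that the $O(\epsilon^{3/2})$ and $O(\epsilon^2)$ terms are subdominant. The only point worth flagging is that the identity $\langle V,U\rangle_{\mathring H^{1/2}}=\int_{\RR^2}V\rho$ in part (i) is an integral representation of the duality pairing which, strictly speaking, requires a short justification; the paper glosses over this as well.
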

\noindent Thus, for $a$ sufficiently large (or, equivalently, for the
impurity valence $Z$ sufficiently small or the effective dielectric
constant $\epsilon_d$ sufficiently large, see \eqref{aZe}) there can
be no bound states between the charge carriers in graphene and a
single charged impurity. In other words, this implies a surprising
result that for $a \geq a_c$ the charged impurity elicits no response
from the electrons in the graphene layer (within the considered
density functional theory). The bifurcation at $a = a_c$ is determined
by a fine balance between the first term in the energy and the
potential term, which has the same asymptotics when $|x| \to \infty$
as the Hardy potential for $(-\Delta)^{1/2}$.

Note that the statement of Theorem \ref{thm:E0A0} obviously remains
true if $\mathcal A_0^+$ is replaced with $\mathcal A_0$. Also note
that the magnitude of $b$ does not play any role for existence
vs. non-existence of non-trivial minimizers in this case.  At the same
time, as we will show in the forthcoming paper \cite{lmm2}, both the
values of $a$ and $b$, together with the (finite) $L^1$ norm of the
minimizer $\rho_0 \in \mathcal A_0^+$ determine the algebraic rate of
decay of $\rho_0(x)$ as $|x| \to \infty$. Specifically, we expect
\begin{align}
  \label{rho0dec}
  \rho_0(x) \sim {1 \over |x|^{2s}}, \qquad |x| \to \infty,
\end{align}
where $s \in (1,2)$ is the unique solution of the algebraic equation
\begin{align}
  \label{sdec}
  {2 a \Gamma ({s + 1 \over 2}) \Gamma( {2 - s \over 2} ) \over
    \Gamma( {1 - s \over 2} ) \Gamma( {s \over 2} ) } = 1 - {b \over 2
    \pi} \| \rho_0(x) \|_{L^1(\RR^2)} ,
\end{align}
which is formally obtained by linearizing \eqref{ELEpos} with respect
to $\sqrt{\rho}$, using the leading order asymptotics of $V$ and $U$
in the far field and looking for distributional solutions in the form
appearing in \eqref{rho0dec}. This prediction is confirmed by the
results of the numerical solution of \eqref{ELEpos}. Figure \ref{fig0}
shows the solution of \eqref{ELEpos} for $a = 1$ and $b = 1$ (we refer
to \cite{lmm2} for further details), for which we found
$\|\rho_0\|_{L^1(\RR^2)} \simeq 6.95$ and
$\rho_0(x) \simeq 0.28 |x|^{-2.2}$ for $|x| \gg 1$. This agrees well
with \eqref{sdec}.  Thus, in contrast to previous studies, our model
predicts a non-trivial dependence of the algebraic decay rate of the
positive mininimizers on the parameters. Note that since for
$s \in (1,2)$ the term multiplying $a$ in \eqref{sdec} is negative, we
have $\| \rho_0(x) \|_{L^1(\RR^2)} > 2 \pi b^{-1}$. In the original
physical variables it means that the total charge induced in the
graphene layer {\em exceeds} in absolute value the external
out-of-plane charge. Note that this is similar to what is observed in
the Thomas-Fermi-von Weizs\"acker model of a single atom
\cite{benguria81}.

\begin{figure}[t]
  \centering
  \includegraphics[width=2.3in]{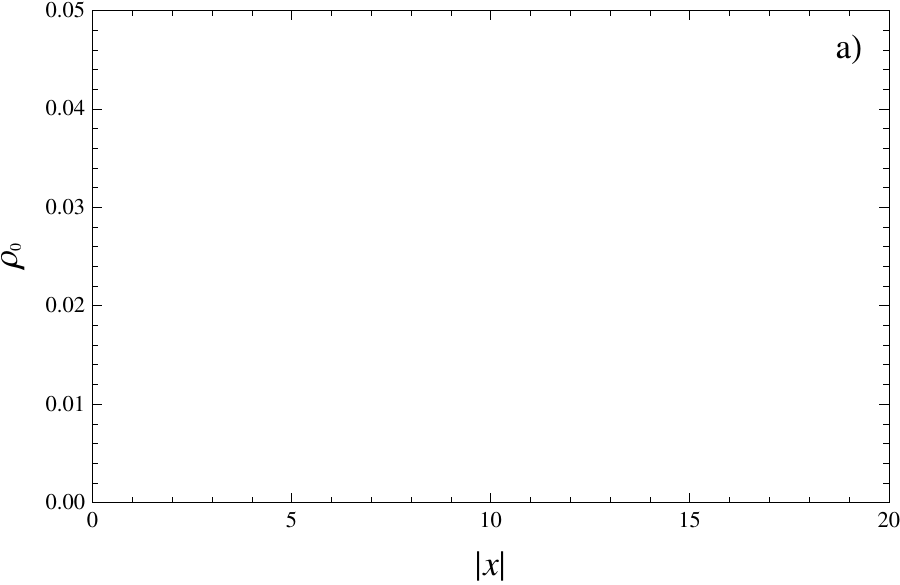}
  \hspace{3mm} \includegraphics[width=2.26in]{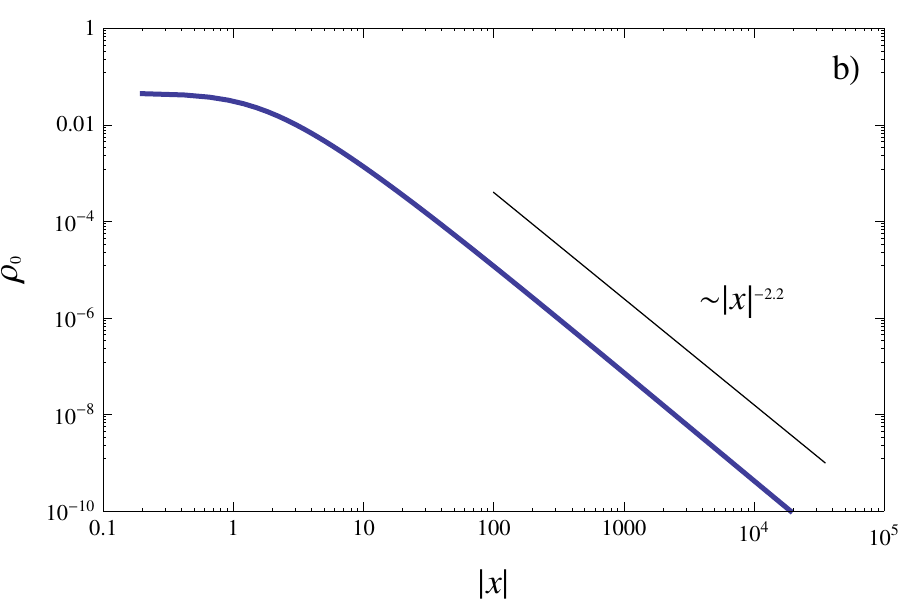}
  \caption{The minimizer $\rho_0$ in Theorem \ref{thm:E0A0} for $a =
    1$ and $b = 1$, plotted on a linear (a) and logarithmic (b)
    scale. }
  \label{fig0}
\end{figure}

\section{Preliminaries}
\label{sec:prelim}

\subsection{Functional setting}
\label{sec:func}

Recall that the homogeneous Sobolev space $\mathring{H}^{1/2}(\RR^2)$
can be defined as the completion of $C^\infty_c(\RR^2)$ with respect
to the Gagliardo's norm
\begin{align}\label{Gagliardo}
  \|u\|_{\oH^{1/2}(\RR^2)}^2 := \frac{1}{4\pi} \iint_{\RR^2 \times
    \RR^2}\frac{|u(x)-u(y)|^2}{|x-y|^{3}}\ud^2 x \ud^2 y.
\end{align}
By Plancherel's identity (cf. \cite[Lemma 3.1]{frank08}), on
  $C^\infty_c(\mathbb R^2)$ the
$\|\cdot\|_{\mathring{H}^{1/2}(\RR^2)}$--norm admits an equivalent
Fourier representation
\begin{align}
  \|u\|_{\mathring{H}^{1/2}(\RR^2)}^2 = \int_{\RR^2} | \, | \mathbf
  k|^{1/2} \hat u_\mathbf{k}|^2 {\ud^2 k \over (2 \pi)^2}, \qquad \hat
  u_\mathbf{k} = \int_{\RR^2} e^{i \mathbf k \cdot x} u(x) \ud^2 x,
\end{align}
which suggests the notation
\begin{align}\label{half-grad}
  \|u\|_{\mathring{H}^{1/2}(\RR^2)}^2 =:
  \int_{\RR^2}|\nabla^{\frac12}u(x)|^2\ud^2 x,
\end{align}
which we often use in this paper.  By the fractional Sobolev
inequality \cite[Theorem 8.4]{Lieb-Loss}, \cite[Theorem
6.5]{palatucci12},
\begin{align}\label{sobolev-inequality}
  \|u\|_{\mathring{H}^{1/2}(\RR^2)}^2\ge \sqrt{\pi} \,
  \|u\|_{L^4(\RR^2)}^2,\qquad\forall u\in C^\infty_c(\RR^2).
\end{align}
In particular, the space $\mathring{H}^{1/2}(\RR^2)$ is a well-defined
space of functions and
\begin{align}\label{sobolev-embedd}
\mathring{H}^{1/2}(\RR^2)\subset L^4(\RR^2).
\end{align}
The space $\mathring{H}^{1/2}(\RR^2)$ is also a Hilbert space, with
the scalar product associated to \eqref{Gagliardo} given by
\begin{align}\label{bi-Gagliardo}
  \langle u,v\rangle_{\oH^{1/2}(\RR^2)} := \frac{1}{4\pi}\iint_{\RR^2
    \times \RR^2} \frac{(u(x)-u(y))(v(x)-v(y))}{|x-y|^{3}}\ud^2 x
  \ud^2 y.
\end{align}
The dual space to $\mathring{H}^{1/2}(\RR^2)$ is denoted
$\mathring{H}^{-1/2}(\RR^2)$.  According to the Riesz representation
theorem, for every $F\in\mathring{H}^{-1/2}(\RR^2)$ there exists a
uniquely defined {\em potential} $v\in\mathring{H}^{1/2}(\RR^2)$ such
that
\begin{align}\label{weak-potential}
  \langle v,\varphi\rangle_{\oH^{1/2}(\RR^2)}=\langle
  F,\varphi\rangle\qquad\forall\varphi\in \mathring{H}^{1/2}(\RR^2),
\end{align}
where $\langle F,\cdot\rangle:\mathring{H}^{1/2}(\RR^2)\to\RR$ denotes
the bounded linear functional generated by $F$.  Moreover,
\begin{align}\label{isometry}
  \|v\|_{\oH^{1/2}(\RR^2)}=\|F\|_{\mathring{H}^{-1/2}(\RR^2)},
\end{align}
so the duality \eqref{weak-potential} is an isometry.
The potential $v\in \mathring{H}^{1/2}(\RR^2)$ satisfying
\eqref{weak-potential} is interpreted as the {\em weak solution} of
the linear equation
\begin{align}\label{weak-Laplace}
(-\Delta)^{1/2}v= F\qquad\text{in $\RR^2$}.
\end{align}

Recall that for functions $u\in C^\infty(\RR^2) \cap L^1(\mathbb
  R^2, (1 + |x|)^{-3} \ud^2 x)$, the fractional Laplacian
$(-\Delta)^{1/2}$ can be defined as
\begin{align}\label{half-Laplace}
  (-\Delta)^{1/2}u(x)= \frac{1}{4\pi}\int_{\RR^2}\frac{2 u(x) - u(x+y)
    - u(x-y)}{|y|^3}\ud^2 y\qquad(x\in\RR^2).
\end{align}
Note that the second order Taylor expansion of function $u$
yields that the strong singularity of the integrand at the origin is
removed, and \eqref{half-Laplace} can be understood as a converging
Lebesgue integral, see \cite[Lemma 3.2]{palatucci12}.  Of course, the
weighted second order differential quotient in \eqref{half-Laplace}
coincides with a more standard definition of $(-\Delta)^{1/2}$ as a
pseudodifferential operator, in the sense that for all $u\in
C^\infty_c(\RR^2)$,
\begin{align}
  \widehat{\left( (-\Delta)^{1/2}u \right)_\mathbf{k}} = |\mathbf
    k| \hat u_\mathbf{k},
\end{align}
cf. \cite[Proposition 3.3]{palatucci12}. In particular, this makes
  the definition of $(-\Delta)^{1/2}$ in \eqref{half-Laplace}
  consistent with the notation used in \eqref{weak-Laplace}.

Note that if $u\in C^\infty_c(\RR^2)$ then $(-\Delta)^{1/2}u\in
C^\infty(\RR^2)$, but is not compactly supported and in fact,
\begin{equation}\label{asymptotic-cube}
(-\Delta)^{1/2}u=O(|x|^{-3})\qquad\text{as $|x|\to\infty$},
\end{equation}
see \cite[Lemma 1.2]{mazja}.  In particular, this shows that the
operator $(-\Delta)^{1/2}$ could be extended by duality to the
weighted space $L^1(\RR^2,(1+|x|)^{-3}\ud^2 x)$, that is for $u\in
L^1(\RR^2,(1+|x|)^{-3}\ud^2 x)$,
\begin{align}
  \langle(-\Delta)^{1/2}u,\varphi\rangle=\int_{\RR^2}u(x)(-\Delta)^{1/2}\varphi(x)
  \ud^2 x\qquad\forall\varphi\in C^\infty_c(\RR^2)
\end{align}
and this definition agrees with \eqref{half-Laplace} in the case
$u\in C^\infty_c(\RR^2)$, see \cite[p. 73]{Silvestre:07}.  Clearly,
$\mathring{H}^{1/2}(\RR^2)\subset L^1(\RR^2,(1+|x|)^{-3}\ud^2 x)$.  In
particular, this implies that for $v\in\mathring{H}^{1/2}(\RR^2)$,
\begin{align}\label{distributional-half-Laplace}
  \langle
  v,\varphi\rangle_{\oH^{1/2}(\RR^2)}=\int_{\RR^2}v(x)(-\Delta)^{1/2}\varphi(x)\ud^2
  x\qquad\forall\varphi\in C^\infty_c(\RR^2).
\end{align}

When $f\in C^\infty_c(\RR^2)$, the left inverse to $(-\Delta)^{1/2}$
is represented by the {\em Riesz potential}, i.e., if $u$ is the
  weak solution of $(-\Delta)^{1/2}u=f$ then $u$ admits the integral
representation
\begin{align}\label{inverse-half-Laplace}
u(x)=(-\Delta)^{-1/2}f(x)=\frac{1}{2\pi}\int_{\RR^2}\frac{f(y)}{|x-y|}\ud^2 y,
\end{align}
see \cite[Lemma 1.3]{mazja}. Such integral representation could be
extended to a wider class of functions and (signed) measures,
cf. \cite[Lemma 1.8, 1.11]{mazja}.  In particular, taking $f =
\delta(x)$, we obtain that $1/(2\pi|x|)$ is the fundamental solution
of $(-\Delta)^{1/2}$.  We emphasize, however, that not every potential
of a linear functional $f\in \mathring{H}^{-1/2}(\RR^2)$ admits an
integral representation \eqref{inverse-half-Laplace}.  Similarly, not
every linear functional $f\in \mathring{H}^{-1/2}(\RR^2)$ admits an
integral representation of the norm in terms of the Coulomb energy.
If $f\in L^1_{loc}(\RR^2)$ satisfies
\begin{align}
  \iint_{\RR^2 \times \RR^2} \frac{|f(x)||f(y)|}{|x-y|}\ud^2 x \ud^2
  y<+\infty.
\end{align}
then $f\in \mathring{H}^{-1/2}(\RR^2)$ in the sense that
\begin{align}
  \langle f,\varphi\rangle:=\int_{\RR^2} f(x) \varphi(x) \ud^2 x
\end{align}
is a bounded linear functional on $\mathring{H}^{1/2}(\RR^2)$ and the
norm of $\langle f,\cdot\rangle$ is expressed in terms of the Coulomb
energy
\begin{align}\label{Coulomb-plus}
  \|f\|_{\mathring{H}^{-1/2}(\RR^2)}^2 = \frac{1}{2\pi}\iint_{\RR^2
    \times \RR^2} \frac{f(x)f(y)}{|x-y|}\ud^2 x \ud^2 y,
\end{align}
see e.g. \cite[pp. 96-97]{mazja}. In particular, from Sobolev
inequality \eqref{sobolev-inequality} we conclude by duality that
\begin{align}\label{sobolev-embedd-dual}
L^{4/3}(\RR^2)\subset\mathring{H}^{-1/2}(\RR^2)
\end{align}
and \eqref{Coulomb-plus} is valid for every $f\in L^{4/3}(\RR^2)$.
But at the same time, one could construct a sequence of sign--changing
functions $\{f_n\}\subset C^\infty_c(\RR^2)$ such that $\{f_n\}$ is a
Cauchy sequence in $\mathring{H}^{-1/2}(\RR^2)$, but $\{f_n\}$ does
not converge a.e. to a measurable function or more generally, to a
(signed) measure on $\RR^2$.  See \cite{armitage75,rempel76} or
\cite[Theorem 1.19]{landkof}, \cite[p. 97]{duPlessis} for other
relevant examples which go back to H. Cartan \cite[Remark 13 on
p. 87]{Cartan}. Below we present a different example which involves
smooth functions, rather than measures like in Cartan's type examples.

\begin{example}\label{example-nonint}
Define
\begin{align}
u_a(x_1,x_2) = a^{1/2}\exp(-|x|^2)\cos(a x_1).
\end{align}
Then, using Fourier transform, we can calculate that
\begin{equation}
  \norm{u_a}_{\oH^{-1/2}(\RR^2)}^2 = \frac{\sqrt{2}}{8}\pi^{3/2} a
  e^{-\frac{a^2}{2}} \left(e^{\frac{a^2}{4}} I_0\bigl(
    \tfrac{1}{4}a^2\bigr)+1\right),
\end{equation}
where $I_0(z)$ is the modified Bessel function of the first
kind. Taking the limit $a \to \infty$, one gets
\begin{align}
  \lim_{a \to \infty} ||u_a||_{\oH^{-1/2}(\RR^2)}^2 = {\pi \over 4}.
\end{align}
A Cauchy sequence in $\oH^{-1/2}(\RR^2)$ that fails to converge to a
signed measure can then be constructed as
\begin{align}
u_n(x_1,x_2) = \sum_{k=1}^n e^{k/4}\exp(-|x|^2)\cos(e^k x_1).
\end{align}
Since this series is dominated in $\oH^{-1/2}(\RR^2)$ by a geometric
series, it converges in $\oH^{-1/2}(\RR^2)$.  But clearly it does not
converge to a signed measure.
\end{example}

\subsection{Hardy--Littlewood--Sobolev and H\"older estimates}

We recall the well-known Hardy--Littlewood--Sobolev \cite[Theorem 1 in
Section V.1.2]{stein70} and H\"older estimates on the Riesz potentials
of functions in $L^p(\RR^2)$.  Surprisingly, we were not able to find
a concise reference to H\"older estimate, although the result is
standard.  Instead, we refer to \cite[Theorem 5.2]{gatto}, where the
estimate is obtained in an abstract framework of fractional integral
operators.

\begin{lemma}\label{lemma-HLS}
Let $f\in L^s(\RR^2)$ for some $s\in(1,2)$ and
\begin{align}\label{Riesz}
v(x)= \frac{1}{2\pi}\int_{\RR^2}\frac{f(y)}{|x-y|}\ud^2 y\qquad(x\in\RR^2).
\end{align}
Then $v\in L^t(\RR^2)$ with $\frac{1}{t}=\frac{1}{s}-\frac{1}{2}$
and
\begin{align}
  \label{e-Lt}
  \| v \|_{L^t(\RR^2)} \leq C \| f \|_{L^s(\RR^2)},
\end{align}
for some $C > 0$ depending only on $s$.  Furthermore, if $f\in
L^s(\RR^2)\cap L^1(\RR^2,(1+|x|)^{-1}\ud^2 x)$ for some $s>2$, then
$v\in L^\infty(\RR^2)\cap C^{1-{2 \over s}}(\RR^2)$ and
\begin{equation}\label{e-Holder}
  |v(x)-v(y)|\le C\|f\|_{L^s(\RR^2)}|x-y|^{1-{2 \over s}}\qquad\forall x,y\in\RR^2,
\end{equation}
for some $C > 0$ depending only on $s$.
\end{lemma}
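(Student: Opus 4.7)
The plan is to treat the two parts separately: part one is a direct invocation of the classical Hardy--Littlewood--Sobolev inequality, while part two requires a near-field/far-field splitting argument combined with H\"older's inequality.

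For the $L^t$ bound in the first part, I would simply observe that the integral in \eqref{Riesz} is (up to the constant $2\pi$) the Riesz potential $I_1 f$ in $\mathbb R^2$, i.e., the operator with kernel $|x-y|^{-(d-\alpha)}$ in dimension $d=2$ with $\alpha=1$. Stein's theorem (Theorem~1 in Section~V.1.2 of \cite{stein70}) then yields $\|I_1 f\|_{L^t(\mathbb R^2)}\le C\|f\|_{L^s(\mathbb R^2)}$ whenever $1<s<t<\infty$ and $\frac1t=\frac1s-\frac12$, which is precisely the range $s\in(1,2)$ and the exponent relation in the statement. This gives \eqref{e-Lt} immediately.

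For part two, the first task is to show $v\in L^\infty(\mathbb R^2)$. Split $\int_{\mathbb R^2}=\int_{|x-y|\le 1}+\int_{|x-y|>1}$. On the unit ball around $x$, H\"older's inequality with exponents $s$ and $s'=s/(s-1)\in(1,2)$ gives $\int_{|x-y|\le 1}\frac{|f(y)|}{|x-y|}\ud^2 y \le \|f\|_{L^s}\bigl(\int_{|y|\le 1}|y|^{-s'}\ud^2 y\bigr)^{1/s'}$, which is finite exactly because $s>2$ forces $s'<2$. On the complement, the estimate $|x-y|^{-1}\le C(1+|y|)^{-1}$ uniformly in $x$ from any fixed bounded set is too crude globally, so one uses $|x-y|\ge \frac12(1+|y|)$ once $|y|$ is large compared to $|x|$, and otherwise bounds $|x-y|^{-1}$ by a constant; in either case the contribution is controlled by $\|f\|_{L^1(\mathbb R^2,(1+|y|)^{-1}\ud^2 y)}$. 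This yields a bound on $|v(x)|$ independent of $x$.

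For the H\"older estimate \eqref{e-Holder}, fix $x,y\in\mathbb R^2$, set $r=|x-y|$, and write
\begin{equation}
v(x)-v(y)=\frac{1}{2\pi}\int_{B_{2r}(x)}\frac{f(z)}{|x-z|}\ud^2 z
-\frac{1}{2\pi}\int_{B_{2r}(x)}\frac{f(z)}{|y-z|}\ud^2 z
+\frac{1}{2\pi}\int_{\mathbb R^2\setminus B_{2r}(x)}f(z)\left(\frac{1}{|x-z|}-\frac{1}{|y-z|}\right)\ud^2 z.
\end{equation}
Each of the first two near-field integrals is bounded by H\"older's inequality by $\|f\|_{L^s}$ times $\bigl(\int_{B_{2r}(x)}|x-z|^{-s'}\ud^2 z\bigr)^{1/s'}\le C r^{(2-s')/s'}=Cr^{1-2/s}$, using $s'<2$ so that $|x-z|^{-s'}$ is integrable near the singularity (and the analogous estimate around $y$, noting $B_{2r}(x)\subset B_{3r}(y)$). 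For the far-field integral, the mean value inequality gives $\bigl|\,|x-z|^{-1}-|y-z|^{-1}\bigr|\le r/(|x-z|\,|y-z|)$, and on $\mathbb R^2\setminus B_{2r}(x)$ one has $|y-z|\ge\frac12|x-z|$, yielding a bound by $r\int_{|x-z|\ge 2r}|f(z)||x-z|^{-2}\ud^2 z$. Applying H\"older's inequality once more,
\begin{equation}
r\int_{|x-z|\ge 2r}\frac{|f(z)|}{|x-z|^{2}}\ud^2 z
\le r\,\|f\|_{L^s(\mathbb R^2)}\left(\int_{|x-z|\ge 2r}|x-z|^{-2s'}\ud^2 z\right)^{1/s'}
\le C r\cdot r^{(2-2s')/s'}=Cr^{1-2/s},
\end{equation}
where $2s'>2$ ensures convergence. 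Combining, $|v(x)-v(y)|\le C\|f\|_{L^s}r^{1-2/s}$.

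The only delicate point is the bookkeeping of exponents in the H\"older steps, specifically verifying that $s>2$ is exactly the threshold making $|z|^{-s'}$ locally integrable and $|z|^{-2s'}$ integrable at infinity in $\mathbb R^2$; this is where the H\"older exponent $1-2/s$ comes from on both the near- and far-field pieces, and is the one place where a careless estimate could produce the wrong power of $r$.
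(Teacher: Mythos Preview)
The paper does not actually prove this lemma: it simply cites Stein for the Hardy--Littlewood--Sobolev part and \cite[Theorem 5.2]{gatto} for the H\"older estimate, remarking that the latter ``is standard'' but that a concise reference was hard to locate. Your invocation of Stein for part one is therefore exactly what the paper does, and your near-field/far-field argument for the H\"older bound \eqref{e-Holder} is the standard direct proof and is correct as written---in particular your exponent bookkeeping is right, with $s'<2$ making $|z|^{-s'}$ locally integrable and $2s'>2$ making $|z|^{-2s'}$ integrable at infinity, both yielding the same power $r^{1-2/s}$.

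There is, however, a genuine gap in your $L^\infty$ argument. In the far-field region $|x-y|>1$ you split on whether $|y|$ is large or small compared to $|x|$ and claim that in the second case (``otherwise bounds $|x-y|^{-1}$ by a constant'') the contribution is controlled by $\|f\|_{L^1(\RR^2,(1+|y|)^{-1}\ud^2 y)}$. But bounding $|x-y|^{-1}\le 1$ leaves you with $\int_{|y|\lesssim |x|}|f(y)|\,\ud^2 y$, and passing from this to the weighted $L^1$ norm costs a factor of $(1+|x|)$, so the resulting bound on $|v(x)|$ is not uniform in $x$. Under the bare hypotheses $f\in L^s$ with $s>2$ and $f\in L^1(\RR^2,(1+|x|)^{-1}\ud^2 x)$ it is not obvious how to close this; note that $|z|^{-s'}$ is \emph{not} integrable at infinity when $s'<2$, so a global H\"older estimate on the far field fails too. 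In the paper's applications this issue does not arise because the lemma is always invoked via Remark~\ref{r-HLS}, where $f\in L^{s_1}\cap L^{s_2}$ with $s_1<2<s_2$: part one then gives $v\in L^t$ for some finite $t$, and together with your H\"older continuity this forces $v(x)\to 0$ at infinity and hence $v\in L^\infty$. You may want to either restrict to that setting or supply a separate argument for the $L^\infty$ claim.
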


\begin{remark}
  The assumption $f\in L^1(\RR^2,(1+|x|)^{-1}\ud^2 x)$ in the second
  part of the lemma is a necessary and sufficient condition which
  ensures that $|v(x)|<+\infty$ a.e. in $\RR^2$, assuming that the
  operator in \eqref{Riesz} is understood in the (Lebesgue) integral
  sense, c.f. \cite[(1.3.10) on p. 61]{landkof}.  Observe that by
  H\"older inequality all the assumptions of the second part of
  Lemma \label{r-HLS} \ref{lemma-HLS} are satisfied, if $f \in
  L^s(\RR^2)$ for all $s \in [s_1, s_2]$ for some $1 < s_1 < 2 < s_2 <
  \infty$.
\end{remark}

\subsection{Interior regularity}

We are going to show that although $(-\Delta)^{1/2}$ is a nonlocal
operator, the interior regularity of solutions of \eqref{weak-Laplace}
does not depend on the behavior of the right-hand side at infinity.
The proof of this basic fact can be found in \cite[Proposition
2.22]{Silvestre:07}. Here, however, we give a quantitative version of
the above statement.

\begin{lemma}\label{lemma-harmonic}
  Let $f\in L^1_{loc}(\RR^2)$, let $p \geq 1$ and let $u\in
  L^p(\RR^2)$ be such that
  \begin{align}\label{Euler-0}
    \langle u,\varphi\rangle_{\oH^{1/2}(\RR^2)}=\int_{\RR^2}
    f(x)\varphi(x)\ud^2 x\qquad\forall \varphi\in C^\infty_c(\RR^2).
  \end{align}
  Assume that $f=0$ on $B_{2R}(0)$ for some $R > 0$. Then $u\in
  C^\infty(\bar B_R(0))$ and for every $n \geq 0$
  \begin{equation}
  \| \nabla^n u \|_{L^{\infty}(B_R(0))}  \leq C R^{-n-\frac{2}{p}} \|u \|_{L^p(\RR^2)}
  \end{equation}
  for some $C > 0$ depending only on $n$ and $p$.
\end{lemma}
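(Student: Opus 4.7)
The plan is to use the Caffarelli--Silvestre extension to realize $(-\Delta)^{1/2}$ as the Dirichlet-to-Neumann operator for the Laplacian on $\RR^2 \times (0, \infty)$, interpret the hypothesis $f = 0$ on $B_{2R}(0)$ as a vanishing Neumann condition on $B_{2R}(0) \times \{0\}$, and reflect harmonically across $\{y = 0\}$ to reduce the problem to standard interior estimates for harmonic functions in $\RR^3$. First, by the scaling $\tilde u(\tilde x) := u(R \tilde x)$ and the homogeneity of degree one of $(-\Delta)^{1/2}$, the rescaled function $\tilde u$ satisfies \eqref{Euler-0} with $\tilde f = 0$ on $B_2(0)$, and one computes $\|\tilde u\|_{L^p(\RR^2)} = R^{-2/p} \|u\|_{L^p(\RR^2)}$ together with $\|\nabla^n \tilde u\|_{L^\infty(B_1(0))} = R^n \|\nabla^n u\|_{L^\infty(B_R(0))}$. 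It therefore suffices to establish the estimate for $R = 1$ with a constant independent of $u$ and insert the scaling factors at the end.

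For $u \in L^p(\RR^2)$ with $p \geq 1$, define the Poisson extension
\[
U(x, y) := \frac{1}{2\pi} \int_{\RR^2} \frac{y\, u(z)}{(|x - z|^2 + y^2)^{3/2}} \, \ud^2 z, \qquad (x, y) \in \RR^2 \times (0, \infty).
\]
Since the Poisson kernel has $L^1(\RR^2)$ norm equal to $1$ independently of $y > 0$, Young's inequality yields $\|U(\cdot, y)\|_{L^p(\RR^2)} \leq \|u\|_{L^p(\RR^2)}$ for every $y > 0$, and differentiation under the integral sign gives $U \in C^\infty(\RR^2 \times (0, \infty))$ with $\Delta U = 0$. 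A Fourier-side computation, justified by duality against test functions in $C^\infty_c(\RR^2)$, identifies $-\partial_y U(\cdot, 0^+)$ distributionally with $(-\Delta)^{1/2} u$; combined with \eqref{Euler-0} and the vanishing of $f$ on $B_2(0)$ this gives $\partial_y U|_{y = 0^+} = 0$ as a distribution on $B_2(0)$. Reflecting $U$ across $\{y = 0\}$ via $\tilde U(x, y) := U(x, |y|)$, a direct integration-by-parts shows that the distributional Laplacian of $\tilde U$ on $\RR^3$ is concentrated on $\RR^2 \times \{0\}$ and equals $-2\bigl((-\Delta)^{1/2} u\bigr) \otimes \delta_{y = 0}$. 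Hence $\tilde U$ is distributionally harmonic on the open cylinder $B_2(0) \times \RR$, and by elliptic regularity it is smooth (and even real-analytic) there; in particular $u = \tilde U(\cdot, 0) \in C^\infty(\bar B_1(0))$.

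For the quantitative bound, fix $x_0 \in \bar B_1(0)$ and let $\mathcal{B} \subset \RR^3$ denote the Euclidean ball of radius $3/4$ centered at $(x_0, 0)$; then $\mathcal{B} \subset B_2(0) \times \RR$, so $\tilde U$ is harmonic on $\mathcal{B}$. Standard interior $L^p$-to-$L^\infty$ estimates for derivatives of harmonic functions in $\RR^3$, obtained by the mean value property combined with Cauchy's estimates, yield
\[
|\nabla^n \tilde U(x_0, 0)| \leq C\, \|\tilde U\|_{L^p(\mathcal{B})},
\]
with $C$ depending only on $n$ and $p$. Using $\|U(\cdot, y)\|_{L^p(\RR^2)} \leq \|u\|_{L^p(\RR^2)}$ uniformly in $y > 0$ together with the symmetry $\tilde U(x, -y) = \tilde U(x, y)$ and integrating over $y \in (-3/4, 3/4)$ bounds the right-hand side by $C \|u\|_{L^p(\RR^2)}$. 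Since the pure $x$-derivatives of $\tilde U$ at $y = 0$ coincide with $\nabla^n u$ on $B_1(0)$, this proves the $R = 1$ estimate, and rescaling delivers the stated bound for general $R > 0$. The principal obstacle is the precise distributional interpretation of the identity $-\partial_y U|_{y = 0^+} = (-\Delta)^{1/2} u$ for $u$ only in $L^p$ --- in particular at the endpoint $p = 1$, where the normal derivative is accessible only as a distribution --- and the attendant justification of the harmonic reflection across the part of $\{y = 0\}$ on which the Neumann datum vanishes.
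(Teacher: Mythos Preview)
Your argument via the Caffarelli--Silvestre extension is correct and constitutes a genuinely different route from the paper's proof. The paper works entirely in two dimensions: it builds an explicit reproducing kernel by writing, for $\varphi\in C^\infty_c(B_R(0))$, the Riesz potential $\psi=(-\Delta)^{-1/2}\varphi$, testing the equation against $(1-\eta_R)\psi$ with a cutoff $\eta_R$ supported outside $B_{3R/2}(0)$, and arriving at an identity
\[
u(z)=\int_{\RR^2} J_R(x,z)\,u(x)\,\ud^2 x,\qquad J_R(x,z)=(-\Delta)^{1/2}_x\Big(\tfrac{1}{2\pi}\tfrac{\eta_R(x)}{|x-z|}\Big),
\]
followed by a direct estimate $|\nabla_z^n J_R(x,z)|\le c_n R^{1-n}(R^3+|x|^3)^{-1}$ and H\"older's inequality. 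Your approach instead lifts the problem to $\RR^3$, converts ``$(-\Delta)^{1/2}u=0$ on $B_{2R}$'' into a zero Neumann condition for the harmonic extension, reflects evenly, and invokes classical interior $L^p$--to--$L^\infty$ estimates for harmonic functions, with the uniform slice bound $\|U(\cdot,y)\|_{L^p(\RR^2)}\le\|u\|_{L^p(\RR^2)}$ supplying the global control.

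Each approach has its merits. The paper's kernel argument is self-contained, avoids any extension machinery, and yields an explicit integral representation of $u$ on $B_R$ in terms of its global values; the price is a somewhat lengthy pointwise analysis of $J_R$ and its derivatives. Your extension argument is conceptually cleaner and shorter once the Caffarelli--Silvestre framework is in place, reducing everything to textbook harmonic function theory; the price, which you correctly identify, is the justification of the distributional Neumann condition and the harmonic reflection for $u$ merely in $L^p(\RR^2)$ (especially at $p=1$). That step can be carried out---for instance by pairing $\partial_y U(\cdot,\epsilon)$ against test functions, transferring the convolution with $\partial_y P_\epsilon$ onto the test function, and using that $\partial_y P_\epsilon\ast\varphi\to -(-\Delta)^{1/2}\varphi$ in $L^{p'}(\RR^2)$ thanks to the $O(|x|^{-3})$ decay of $(-\Delta)^{1/2}\varphi$---but it does require care, so your caveat is well placed.
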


\proof
Let $\eta_R(x) = \eta(|x| / R)$, where $\eta \in C^\infty(\RR)$ is a
smooth cut-off function such that $\eta(x)=1$ for all $|x| > 2$,
$\eta(x)=0$ for all $|x| < \frac32$, and $0\le\eta\le 1$.  Given
$\varphi\in C^\infty_c(\RR^2)$ supported on $B_R(0)$, let $\psi\in
  \oH^{1/2}(\RR^2)$ be a weak solution of $(-\Delta)^{1/2} \psi =
  \varphi$. By \eqref{inverse-half-Laplace} we have
\begin{align}
  \psi(x) =
  \frac{1}{2\pi}\int_{\RR^2}\frac{\varphi(y)}{|x-y|}\ud^2 y,
\end{align}
and, in particular, $\psi \in C^\infty(\mathbb R^2)$. Then
$(1-\eta_R)\psi\in C^\infty_c(\RR^2)$ and is supported on
$B_{2R}(0)$. Testing \eqref{Euler-0} with $(1-\eta_R)\psi$ and taking
into account \eqref{distributional-half-Laplace}, we obtain
\begin{align}\label{EL-eta}
  0=\langle u,(1-\eta_R)\psi\rangle_{\oH^{1/2}(\RR^2)}  =
    \int_{\mathbb R^2}
    u(x)  (-\Delta)^{1/2} \Big( (1 - \eta_R)\psi \Big) (x) \ud^2 x \notag \\
  =\int_{B_R(0)}u(x)\varphi(x)\ud^2 x - \int_{\mathbb R^2} u(x)
    (-\Delta)^{1/2} (\eta_R\psi)(x) \ud^2 x.
\end{align}

Inserting the definition of $(-\Delta)^{1/2}$ from
\eqref{half-Laplace} and changing the order of integration in the
  last integral in \eqref{EL-eta} yields
\begin{align}
    & 
    \int_{\RR^2}
    u(x)(-\Delta)^{1/2}(\eta_R \psi)(x)\ud^2 x \notag \\
    & =
    \frac{1}{8\pi^2}\int_{\RR^2}u(x)\int_{\RR^2}|y|^{-3}\int_{B_R(0)}
    \left(\frac{2\eta_R(x)}{|x-z|}
      - \frac{\eta_R(x+y)}{|x+y-z|}\right.  \notag \\
    & \hspace{17em} -
    \left.\frac{\eta_R(x-y)}{|x-y-z|}\right)\varphi(z)\ud^2 z \ud^2 y
    \ud^2 x  \notag \\
    & = \frac{1}{8\pi^2}\int_{B_R(0)} \int_{\RR^2}
    \left(\int_{\RR^2}|y|^{-3}\left(
        \frac{2\eta_R(x)}{|x-z|} - \frac{\eta_R(x+y)}{|x+y-z|}
      \right.\right.  \notag \\
    & \hspace{17em} -
    \left.\left.\frac{\eta_R(x-y)}{|x-y-z|}\right)\ud^2
      y\right)u(x) \varphi(z)  \ud^2 x \ud^2 z  \notag \\
    & = \int_{B_R(0)} \int_{\RR^2} J_R(x,z)u(x)\varphi(z)\ud^2 x \ud^2
    z,
\end{align}
where for $x\in\RR^2$ and $z \in B_R(0)$ we introduced
\begin{align}
  J_R(x,z):=\frac{1}{8\pi^2}\int_{\RR^2} |y|^{-3}
  \left(\frac{2\eta_R(x)}{|x-z|} - \frac{\eta_R(x+y)}{|x+y-z|}-
    \frac{\eta_R(x-y)}{|x-y-z|} \right)\ud^2 y.
\end{align}
Observe that
\begin{equation}
  J_R(x, z) =(-\Delta)^{1/2}_x j_R(x, z), \qquad j_R(x,
  z):= \frac{1}{2\pi} \frac{\eta_R(x)}{|x-z|}.
\end{equation}
Clearly, $j_R(x, z)=0$ for $x\in B_{3R/2}(0)$ and $j_R \in
C^\infty(\RR^2 \times \bar B_R(0))$, with
\begin{align}
  & |\nabla^n_z j_R(x, z)| \leq c_n (R + \abs{x-z})^{-(n+1)}, \\
  & |\nabla^2_x \nabla^n_z j_R(x, z)| \leq c_n R^{-2} (R + \abs{x-z})^{-(n+1)}
\end{align}
for all $n\ge 0$ and some $c_n >0$ (unless stated otherwise, all
  constants in this proof depend only on $n$ and the choice of
  $\eta$). Then
\begin{align}\label{eq:estgradx}
  & \abs{\nabla^n_z J_R(x,z)} \leq \frac{1}{8\pi^2}\int_{\RR^2}
  |y|^{-3} \Big\lvert 2 \nabla^n_z j_R(x, z) - \nabla^n_z j_R(x+y,
  z) \notag \\
  & \hspace{18em} -
  \nabla^n_z j_R(x-y, z)  \Big\rvert \ud^2 y \notag\\
  & = \frac{1}{8\pi^2} \int_{B_R(0)} |y|^{-3} \Big| 2 \nabla^n_z
  j_R(x, z) - \nabla^n_z j_R(x+y, z)
  - \nabla^n_z j_R(x-y, z) \Big| \ud^2 y \notag\\
  & \quad + \frac{1}{8\pi^2} \int_{\RR^2 \backslash B_R(0)} |y|^{-3} \Big| 2
  \nabla^n_z j_R(x, z) - \nabla^n_z j_R(x+y, z) -
  \nabla^n_z j_R(x-y, z) \Big| \ud^2 y \notag\\
  & \leq \frac{1}{8\pi^2} \norm{\nabla^2_x \nabla^{n}_z j_R(\cdot,
    z)}_{L^{\infty}(\RR^2)} \int_{B_R(0)} \abs{y}^{-1} \ud^2 y \notag\\
  & \quad + \frac{1}{2\pi^2} \norm{\nabla^n_z j_R(\cdot,
    z)}_{L^{\infty}(\RR^2)}
  \int_{\RR^2 \backslash B_R(0)} |y|^{-3} \ud^2 y \notag\\
  & \leq C_n R^{-n-2},
\end{align}
for some $C_n > 0$.  In particular, for any $x \in \RR^2$, $J_R(x,
\cdot) \in C^\infty(\bar B_R(0))$.

We next prove that for some $c_n>0$ we have
\begin{equation}
  \label{J1x3}
  |\nabla_z^n J_R(x,z)|\le \frac{c_n}{R^{n-1} (R^3 + |x|^{3})}\qquad\forall
  x\in\RR^2,\quad
  \forall z\in\bar B_R(0) .
\end{equation}
For $\abs{x} \leq 4R$, the estimate follows from \eqref{eq:estgradx}.
Now assume $|x| \geq 4R$. Then $\eta_R(x)=1$, and since
$1/(2\pi\abs{x})$ is the fundamental solution for $(-\Delta)^{1/2}$,
we have for $\abs{z} \leq R$
\begin{equation}
  \frac{1}{8\pi^2} \int_{\RR^2} |y|^{-3} \left(\frac{2}{|x-z|} -
    \frac{1}{|x+y-z|} - \frac{1}{|x-y-z|}\right) \ud^2 y=0.
\end{equation}
Using this fact we can rewrite
\begin{align}
  J_R(x,z) & =
  \frac{1}{8\pi^2}\int_{\RR^2}\frac{1-\eta_R(x-y)}{|x-y-z|}|y|^{-3}\ud^2
  y +
  \frac{1}{8\pi^2}\int_{\RR^2}\frac{1-\eta_R(x+y)}{|x+y-z|}|y|^{-3}\ud^2
  y \notag \\
  & = \frac{1}{8\pi^2}\int_{\RR^2}\frac{1}{|y-z|} \left(
    \frac{1-\eta_R(y)}{\abs{x-y}^3} + \frac{1-\eta_R(y)}{\abs{x +
        y}^3} \right) \ud^2 y \notag \\
  & =: \frac{1}{8\pi^2}\int_{\RR^2}\frac{1}{|y-z|} h_R(x, y) \ud^2 y.
\end{align}
Notice that for fixed $x$ with $\abs{x} \geq 4R$, $h_R(x, \cdot) \in
C^{\infty}_c(\RR^2)$ and its support is contained in
$B_{2R}(0)$. Therefore,
\begin{equation}\label{eq:rieszderiv}
  \begin{aligned}
    \abs{\nabla^n_z J_R(x, z)}
    & \leq \frac{1}{8 \pi^2} \int_{B_{2R}(0)}
    \frac{1}{\abs{y - z}} \abs{\nabla^n_y h_R(x, y)} \ud^2 y.
  \end{aligned}
\end{equation}
For $y \in B_{2R(0)}$ and $\abs{x} \geq 4R$, we have the estimate
$\abs{\nabla^n_y h_R(x, y)}
\leq C_n R^{-n} \abs{x}^{-3}$ for some $C_n > 0$. Therefore,
\begin{equation}
  |\nabla_z^n J_R(x,z)|\le \frac{C_n}{R^n |x|^3}
  \int_{B_{2R}(0)}\frac{1}{|y-z|}\ud^2 y \le \frac{C_n'}{R^{n-1}|x|^3},
\end{equation}
for some $C_n' > 0$.

Finally, taking into account \eqref{EL-eta}, we conclude that for
almost every $z\in \bar B_R(0)$ we have
\begin{align}
  u(z)=\int_{\RR^2} J_R(x,z)u(x)\ud^2 x,
\end{align}
and, since \eqref{J1x3} leads to $\|\nabla_z^n J_R(\cdot, z) \|_{L^{p
    \over p -1}(\RR^2)} \leq C R^{-n-2/p}$ for some $C > 0$ depending
only on $n$, $p$ and the choice of $\eta$, the statement of the lemma
follows by H\"older inequality. \qed

\section{Variational setting}
\label{sec:var}

\subsection{A representation of the energy
  functional} \label{sec:renormenergy}

Recall that for a given $\rho \in \mathcal A_{\bar\rho}$ we define $u$
by
\begin{align}\label{u-rho}
  u := \sgn(\rho) \sqrt{|\rho|} - \sgn(\bar \rho) \sqrt{|\bar \rho|}
\end{align}
and set $\bar u := \sqrt{\abs{\bar\rho}}\sgn{\bar\rho}$.  Then $u \in
\mathring{H}^{1/2}(\RR^2)$ in view of the definition of $\mathcal
A_{\bar\rho}$.  Since $\sgn(\rho) \sqrt{|\rho|}=u+\bar u$, we can
define
\begin{multline} \label{psi-abs}
  \int_{\RR^2}\big|\nabla^{\frac12}\bigl(\sqrt{\abs{\rho(x)} }
  \sgn(\rho(x))\bigr)\big|^2\ud^2 x :=\\
  \frac{1}{4\pi}\iint_{\RR^2 \times \RR^2} \frac{|(u(x) + \bar
    u)-(u(y)+\bar u)|^2}{|x-y|^{3}}\ud^2 x \ud^2 y=
  \|u\|_{\oH^{1/2}(\RR^2)}^2,
\end{multline}
which justifies and clarifies the notation used in Sections 1--3 of
the paper.

\begin{figure}[t]
  \centering
  \includegraphics[width=2.3in]{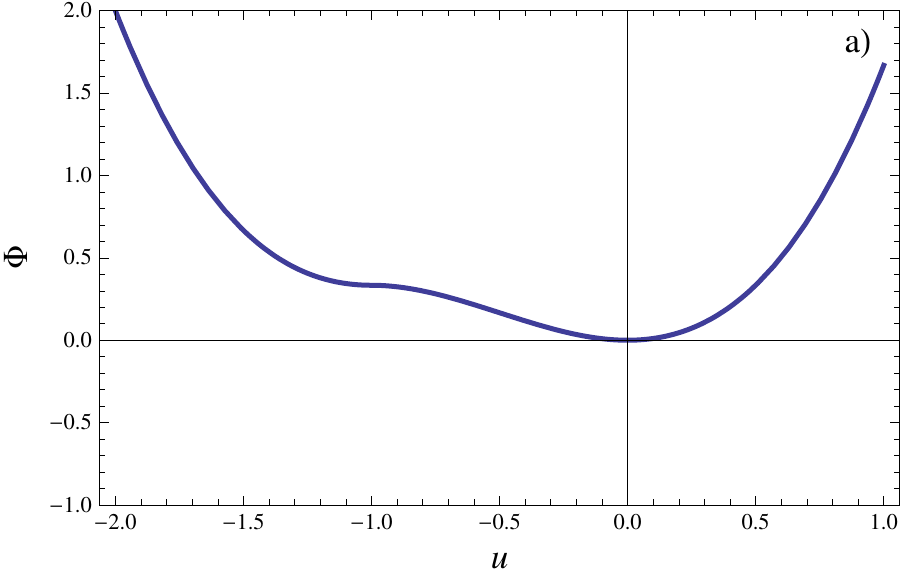}
  \hspace{3mm} \includegraphics[width=2.26in]{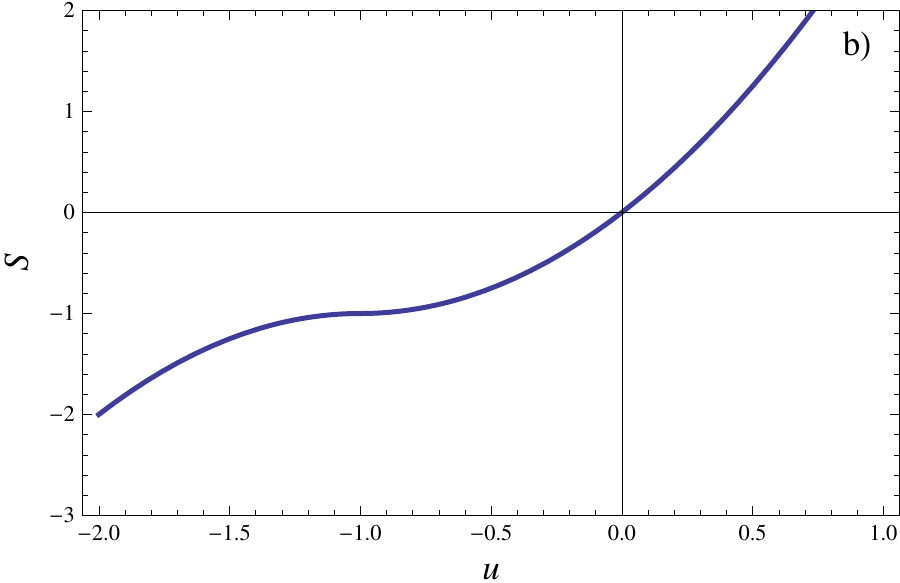}
  \caption{(a) Plot of $\Phi(u)$ and (b) plot of $S(u)$ for $\bar
      u = 1$.}
  \label{fig1}
\end{figure}

Throughout the rest of the paper we assume, without loss of
generality, that $\bar \rho \geq 0$, and, hence, $\bar u \geq 0$ (see
\eqref{eq:rhominusrho}). Denote
\begin{align}\label{S-def}
  S(u):=|u+\bar u|(u+\bar u)-|\bar u|\bar u =
  \begin{cases}
    2 \bar u u + u^2, & u \geq -\bar u, \\
    -u^2 - 2 \bar u u - 2 \bar u^2, & u < -\bar u,
  \end{cases}
\end{align}
and
\begin{align}\label{Phi}
  \Phi(u):=\frac{2}{3}(|u+\bar u|^3 - |\bar u|^3 ) - \bar u S(u)
  =\left\{
\begin{array}{ll}
  \hspace{0.75em}\frac{2}{3}u^3+\bar u u^2,& u\ge-\bar u,\smallskip\\
  -\frac{2}{3}u^3-\bar u u^2+\frac{2}{3}\bar u^3,& u<-\bar u.
\end{array}
\right.
\end{align}
The graphs of $\Phi(u)$ and $S(u)$ for $\bar u = 1$ are presented in
Fig. \ref{fig1}. Clearly $S,\Phi\in C^1(\RR)$ and both functions are
smooth functions of $u \in \RR$ except at $u = -\bar u$.  Moreover
\begin{align}
  \label{Phi-abs}
  c(\bar u |u|^2+|u|^3)\le\Phi(u)\le C(\bar u
  |u|^2+|u|^3)\qquad(u\in\RR),
\end{align}
\begin{align}
  \label{S-abs}
  c(\bar u |u|+|u|^2)\le S(u)\sgn(u)\le C(\bar u
  |u|+|u|^2)\qquad(u\in\RR),
\end{align}
for some universal $C > c > 0$. Therefore, for
$u \in C^{\infty}_c(\RR^2)$, the energy $E(u)$ can be written as (with
a slight abuse of notation, we use the same letter to denote both the
energy as a function of $\rho$ and that as a function of $u$ in the
rest of the paper)
\begin{multline}
  \label{eq:Eu}
  E(u) = a \|u\|_{\oH^{1/2}(\RR^2)}^2
  + \int_{\RR^2} \Phi(u(x)) \ud^2 x \\
  + \int_{\RR^2} V(x) S(u(x))\ud^2 x + \frac{b}{2}
  \iint_{\RR^2 \times \RR^2} \frac{S(u(x))S(u(y)) }{\abs{x - y}}  \ud^2 x \ud^2 y.
\end{multline}
Given $u\in\mathring{H}^{1/2}(\RR^2)$, \eqref{sobolev-embedd} and
\eqref{S-abs} imply that $S(u)\in L^2_{loc}(\RR^2)$. Then for all
$\varphi\in C^\infty_c(\RR^2)$ we can define
\begin{align}\label{Su-linear}
\langle S(u),\varphi\rangle: = \int_{\RR^2}S(u(x))\varphi(x) \ud^2 x.
\end{align}
We say $S(u)\in \mathring{H}^{-1/2}(\RR^2)$, if the linear functional
$\langle S(u),\cdot\rangle$ defined in \eqref{Su-linear} is bounded
by a multiple of $\|\varphi\|_{\mathring{H}^{1/2}(\RR^2)}$.  In
that case $\langle S(u),\cdot\rangle$ is understood as the unique
continuous extension of \eqref{Su-linear} to
$\mathring{H}^{1/2}(\RR^2)$.  Note that $S(u)\in
\mathring{H}^{-1/2}(\RR^2)$ does not necessarily imply that $S(u)w\in
L^1(\RR^2)$ for every $w\in \mathring{H}^{1/2}(\RR^2)$.  In other
words, $\langle S(u),\cdot\rangle$ does not always admit an integral
representation on $\mathring{H}^{1/2}(\RR^2)$, as observed by Brezis
and Browder in \cite{brezis79} in the context of $H^1(\RR^N)$.

\subsection{Class $\mathcal H$}
Introduce the class
\begin{align}
  \label{eq:H}
  \mathcal H:= \Bigl\{ u \in \mathring{H}^{1/2}(\RR^2): S(u) \in
  \mathring{H}^{-1/2}(\RR^2) \Bigr\}.
\end{align}
As discussed in Section~\ref{sec:renormenergy}, this is an equivalent
way of writing the class $\mc{A}_{\bar{\rho}}$.  Given $u\in \mathcal
H$, Riesz's representation theorem uniquely defines a potential
$U_{S(u)}\in\mathring{H}^{1/2}(\RR^2)$ such that
\begin{align}\label{Riesz-repr}
  \langle {U}_{S(u)},\varphi\rangle_{\oH^{1/2}(\RR^2)}=\langle
  S(u),\varphi\rangle\qquad\forall\varphi\in
  \mathring{H}^{1/2}(\RR^2).
\end{align}

\noindent In particular, from the Sobolev embedding
\eqref{sobolev-embedd} combined with \eqref{Phi-abs} we obtain the
following inclusions:
\begin{align}\label{L2}
  & \{ u \in \mathcal{H} \ : \ E(u) < +\infty \} \subset
  L^4(\RR^2)\cap L^2(\RR^2)\qquad\text{if $\bar u\neq
    0$,}\\
  \label{L3} & \{ u \in \mathcal{H} \ : \ E(u) < +\infty \} \subset
  L^4(\RR^2)\cap L^3(\RR^2)\qquad\text{if $\bar u= 0$}.
\end{align}

\begin{remark}
  In fact, using a fractional extension of the Brezis-Browder argument
  in \cite{brezis79}, one can establish stronger inclusions:
\begin{align}
  \label{L2+}
  & \mathcal{H} \subset L^4(\RR^2)\cap L^2(\RR^2)\qquad\text{if $\bar
    u\neq 0$,}\\
  \label{L3+}
  & \mathcal{H} \subset L^4(\RR^2)\cap L^3(\RR^2)\qquad\text{if $\bar
    u= 0$}.
\end{align}
We refer to the forthcoming work \cite{lmm2} for the details. Moreover,
these inclusions are, in some sense optimal. To see
the optimality of \eqref{L2+}, choose $u\in
C^\infty_c(B_1(0))$, a vector $e\in\RR^2$ with $|e|=1$ and for
$N\in\mathbb N$ let
  \begin{align}\label{uN}
    u_N(x):=\frac{1}{\sqrt{N}}\sum_{k=1}^N u\big(x+k\exp(N)e\big).
  \end{align}
  It is standard to check (cf. \eqref{asymptotic-cube} for the
  $\mathring{H}^{1/2}$--term and \cite[p. 363]{ruiz10} for the Coulomb
  term) that
  \begin{align}
    \|u_N\|_{\mathring{H}^{1/2}(\RR^2)}\simeq\|S(u_N)\|_{\mathring{H}^{-1/2}(\RR^2)}
    \simeq C,
  \end{align}
  while
  \begin{align}
    \|u_N\|_{L^p(\RR^2)}=O(N^{\frac{1}{p}-\frac{1}{2}}).
  \end{align}
  We conclude that the sequence $\{u_N\}$ is not bounded in
  $L^p(\RR^2)$ for any $p<2$.  To check the optimality of \eqref{L3+},
  instead of \eqref{uN} one can use an appropriately rescaled family
  of functions $u_N$, similar to those in \cite[Proof of Theorem
  1.5]{ruiz10}.
\end{remark}

\section{Proof of Theorems \ref{thm:Egeneral} and \ref{thm:positive}}
\label{sec:th31}

\subsection{Existence of a minimizer}
If $V\in \mathring{H}^{1/2}(\RR^2)$ then we can rewrite
$E$ in terms of $u$ and the associated potential ${U}_{S(u)}$ as
\begin{equation}\label{E-v}
E(u) = a \|u\|_{\oH^{1/2}(\RR^2)}^2
  + \int_{\RR^2} \Phi(u(x)) \ud^2 x   + \langle V,{U}_{S(u)}\rangle_{\oH^{1/2}(\RR^2)} +
  \frac{b}{2} \|{U}_{S(u)}\|_{\oH^{1/2}(\RR^2)}^2.
\end{equation}
In particular, it is easy to see that
\begin{align}
-\frac{1}{2b} \|V\|_{\oH^{1/2}(\RR^2)}^2\le
  \inf_{u \in \mathcal H} E(u) \le 0.
\end{align}
We are going to prove that $E$ attains a minimizer on $\mathcal H$.

\begin{proposition}\label{Frechet}
  If $V\in \mathring{H}^{1/2}(\RR^2)$ then there exists $u_0 \in
  \mathcal H$ such that $E(u_0) = \inf_{u \in \mathcal H} E(u)$.
\end{proposition}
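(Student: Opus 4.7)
The plan is the direct method of the calculus of variations applied to the representation \eqref{E-v}. First I establish coercivity: by Cauchy--Schwarz and Young's inequality in $\oH^{1/2}(\RR^2)$,
$$\bigl\lvert \langle V, U_{S(u)}\rangle_{\oH^{1/2}(\RR^2)} \bigr\rvert \leq \frac{b}{4}\|U_{S(u)}\|_{\oH^{1/2}(\RR^2)}^2 + \frac{1}{b}\|V\|_{\oH^{1/2}(\RR^2)}^2,$$
so, using also $\Phi \geq 0$,
$$E(u) \geq a\|u\|_{\oH^{1/2}(\RR^2)}^2 + \int_{\RR^2}\Phi(u)\,\ud^2 x + \frac{b}{4}\|U_{S(u)}\|_{\oH^{1/2}(\RR^2)}^2 - \frac{1}{b}\|V\|_{\oH^{1/2}(\RR^2)}^2.$$
Combined with the upper bound $\inf_{\mathcal H} E \leq 0$ (already noted in the excerpt), this yields uniform bounds on $\|u_n\|_{\oH^{1/2}(\RR^2)}$, $\int_{\RR^2}\Phi(u_n)\,\ud^2 x$, and $\|U_{S(u_n)}\|_{\oH^{1/2}(\RR^2)}$ along any minimizing sequence $\{u_n\} \subset \mathcal H$.

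By weak compactness, after passing to a subsequence I have $u_n \wconv u_0$ in $\oH^{1/2}(\RR^2)$ and $U_{S(u_n)} \wconv W$ in $\oH^{1/2}(\RR^2)$ for some $W \in \oH^{1/2}(\RR^2)$. Local compactness of the embeddings $\oH^{1/2}(B_R) \hookrightarrow L^p(B_R)$ for $p < 4$, together with a diagonal extraction over $R = 1, 2, \dots$, then yields $u_n \to u_0$ strongly in $L^p_{\mathrm{loc}}(\RR^2)$ for every $p \in [1, 4)$ and, along a further subsequence, pointwise a.e.\ in $\RR^2$.

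The key step is to identify $W = U_{S(u_0)}$, which simultaneously places $u_0$ in $\mathcal H$. For every $\varphi \in C^\infty_c(\RR^2)$, \eqref{Riesz-repr} gives $\langle U_{S(u_n)}, \varphi\rangle_{\oH^{1/2}(\RR^2)} = \int_{\RR^2} S(u_n)\varphi\,\ud^2 x$, and the left-hand side converges to $\langle W, \varphi\rangle_{\oH^{1/2}(\RR^2)}$ by weak convergence. For the right-hand side, the quadratic upper bound \eqref{S-abs} together with the uniform $L^4_{\mathrm{loc}}$ control on $u_n$ shows that $\{S(u_n)\varphi\}$ is equi-integrable on $\mathrm{supp}(\varphi)$, so Vitali's theorem applied to the pointwise a.e.\ limit gives $\int_{\RR^2} S(u_n)\varphi\,\ud^2 x \to \int_{\RR^2} S(u_0)\varphi\,\ud^2 x$. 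Hence $\langle S(u_0), \varphi\rangle = \langle W, \varphi\rangle_{\oH^{1/2}(\RR^2)}$ for every $\varphi \in C^\infty_c(\RR^2)$, and density of $C^\infty_c(\RR^2)$ in $\oH^{1/2}(\RR^2)$ promotes this into boundedness of $\langle S(u_0), \cdot\rangle$ on $\oH^{1/2}(\RR^2)$; thus $u_0 \in \mathcal H$ and $W = U_{S(u_0)}$ by the uniqueness of the Riesz representative.

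The conclusion then follows by term-by-term weak lower semicontinuity of \eqref{E-v}: the squared $\oH^{1/2}(\RR^2)$-norms of $u$ and of $U_{S(u)}$ are lower semicontinuous for the respective weak convergences; the cross term $\langle V, U_{S(u_n)}\rangle_{\oH^{1/2}(\RR^2)}$ converges to $\langle V, U_{S(u_0)}\rangle_{\oH^{1/2}(\RR^2)}$ by definition of weak convergence; and Fatou's lemma together with $\Phi \geq 0$ and pointwise a.e.\ convergence yields $\int_{\RR^2}\Phi(u_0)\,\ud^2 x \leq \liminf_{n\to\infty} \int_{\RR^2}\Phi(u_n)\,\ud^2 x$. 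Summing these four inequalities produces $E(u_0) \leq \liminf_{n\to\infty} E(u_n) = \inf_{\mathcal H} E$, so $u_0$ is a minimizer. The main obstacle is precisely the identification step, since it must transfer only a.e.\ pointwise information on the $u_n$ into weak $\oH^{-1/2}(\RR^2)$-convergence of the nonlinear object $S(u_n)$, a passage that would fail without the at-most-quadratic growth of $S$ and without the $L^4$ control provided by the fractional Sobolev embedding.
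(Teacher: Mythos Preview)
Your proof is correct and follows essentially the same direct-method strategy as the paper: coercivity, weak compactness in $\oH^{1/2}(\RR^2)$ for both $u_n$ and $U_{S(u_n)}$, local compact embedding into $L^p_{\mathrm{loc}}$, identification of the weak limit of $U_{S(u_n)}$ as $U_{S(u_0)}$ via testing against $C^\infty_c$ functions, and term-by-term lower semicontinuity. The only cosmetic difference is that you invoke Vitali's theorem with equi-integrability from the $L^4_{\mathrm{loc}}$ bound, whereas the paper phrases the same passage as continuity of $S$ as a Nemytskii operator from $L^p_{\mathrm{loc}}$ into $L^q_{\mathrm{loc}}$ for $q\le p/2$.
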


\proof Consider a minimizing sequence $\{u_n\}\subset\mathcal H$ and
the corresponding sequence of potentials $\{{U}_{S(u_n)}\}\subset
\mathring{H}^{1/2}(\RR^2)$ from \eqref{Riesz-repr}.  Clearly,
\begin{align}
\sup_n \|u_n\|_{\oH^{1/2}(\RR^2)}^2 &\leq C,\\
\sup_n \|{U}_{S(u_n)}\|_{\oH^{1/2}(\RR^2)}^2 &\leq C,
\end{align}
Hence, we may extract subsequences, still denoted by $\{u_n\}$ and
$\{{U}_{S(u_n)}\}$ such that
  \begin{align}
    & u_n \wconv u_0 &\text{in }
    \mathring{H}^{1/2}(\RR^2), \\
    & {U}_{S(u_n)}\wconv v_0 &\text{in }
    \mathring{H}^{1/2}(\RR^2),
  \end{align}
  for some $u_0, v_0\in \mathring{H}^{1/2}(\RR^2)$. Using a fractional
  version of Rellich-Kondrachov theorem \cite[Corollary
  7.2]{palatucci12}, we conclude that
\begin{align}
  & u_n \to u_0 \qquad \text{in } L^{p}_{loc}(\RR^2) \quad \text{for
    all} \quad 1 \leq p<4, \label{Sob}
\end{align}
and, upon extraction of another subsequence, that $u_n(x) \to u_0(x)$ for a.e. $x \in \RR^2$.
Using \eqref{Sob}, \eqref{S-abs} and strong continuity of $S$ as a
Nemytskii operator from $L^p_{loc}(\RR^2)$ into $L^q_{loc}(\RR^2)$
with $q\le p/2$ (cf. \cite[Theorem C.1]{struwe90}), we also
conclude that
\begin{align}\label{S-Sob}
  & S(u_n) \to S(u_0) \qquad \text{in } L^{q}_{loc}(\RR^2)
  \quad\text{for all} \quad 1 \leq q<2.
\end{align}
Using \eqref{Riesz-repr}, \eqref{Su-linear} and \eqref{S-Sob},
similarly to an argument in the proof of \cite[Proposition
2.4]{ruiz10}, for every fixed $\varphi\in C^\infty_c(\RR^2)$ we obtain
\begin{multline}\label{converge}
  \langle v_0,\varphi\rangle_{\oH^{1/2}(\RR^2)} \leftarrow
  \langle {U}_{S(u_n)},\varphi\rangle_{\oH^{1/2}(\RR^2)}=\langle S(u_n),\varphi\rangle\\
  =\int_{\RR^2} S(u_n(x))\varphi(x) \ud^2 x\to\int_{\RR^2}
  S(u_0(x))\varphi(x) \ud^2 x.
\end{multline}
Therefore,
\begin{equation}\label{Riesz-repr-v0}
  \langle v_0,\varphi\rangle_{\oH^{1/2}(\RR^2)}
  =\int_{\RR^2} S(u_0(x))\varphi(x)  \ud^2 x,\qquad
  \forall\varphi\in C^\infty_c(\RR^2).
\end{equation}
Note that $\langle v_0,\cdot\rangle_{\oH^{1/2}(\RR^2)}$ is a bounded
linear functional on $\mathring{H}^{1/2}(\RR^2)$, since
$v_0\in\mathring{H}^{1/2}(\RR^2)$.  Therefore
$S(u_0)\in\mathring{H}^{-1/2}(\RR^2)$. In particular, this means that
$u_0\in\mathcal H$ and
\begin{equation}\label{Riesz-repr-v0-plus}
v_0={U}_{S(u_0)}.
\end{equation}
We conclude that
\begin{multline}\label{E-vv}
  E(u_0) = a \|u_0\|_{\oH^{1/2}(\RR^2)}^2 + \int_{\RR^2} \Phi(u_0(x))
  \ud^2 x \\ + \langle V,{U}_{S(u_0)}\rangle_{\oH^{1/2}(\RR^2)} +
  \frac{b}{2} \|{U}_{S(u_0)}\|_{\oH^{1/2}(\RR^2)}^2
  \le\liminf_{n\to\infty} E(u_n).
\end{multline}
This follows from the weak lower semicontinuity of the norm
$\|\cdot\|_{\oH^{1/2}(\RR^2)}$, continuity of the linear
functional $\langle V,\cdot\rangle_{\oH^{1/2}(\RR^2)}$ on
$\mathring{H}^{1/2}(\RR^2)$, and from the non-negativity of the
function $\Phi$ which allows to apply Fatou lemma in the integral term
which contains $\Phi$.  \qed

\subsection{Euler--Lagrange equation} \label{sec:EulerLagrange}

In order to derive the Euler--Lagrange equation for $E$, we first
establish three auxiliary lemmas.

\begin{lemma}
  \label{uht-admissible}
  Let $u \in \mathcal H$ and $h \in C^{\infty}_c(\RR^2)$. Then $u + t
  h \in \mathcal H$ for every $t \in \mathbb R$.
\end{lemma}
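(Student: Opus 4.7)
\medskip

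\noindent\textbf{Proof plan for Lemma \ref{uht-admissible}.}
The plan is to verify directly that $u+th$ satisfies both conditions defining $\mathcal H$. The first one, $u+th\in\mathring H^{1/2}(\mathbb R^2)$, is immediate: $u\in\mathring H^{1/2}(\mathbb R^2)$ by hypothesis and $h\in C^\infty_c(\mathbb R^2)\subset\mathring H^{1/2}(\mathbb R^2)$. The real task is to show $S(u+th)\in\mathring H^{-1/2}(\mathbb R^2)$, and the idea is to isolate the dependence on $h$ by writing
\begin{equation*}
  S(u+th)=S(u)+\bigl[S(u+th)-S(u)\bigr],
\end{equation*}
where the first term is in $\mathring H^{-1/2}(\mathbb R^2)$ by the assumption $u\in\mathcal H$, so it suffices to show that the bracketed difference lies in $\mathring H^{-1/2}(\mathbb R^2)$.

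For that difference I would use the elementary pointwise estimate for the map $f(w):=|w|w$: since $f'(w)=2|w|$ one has $|f(a)-f(b)|\le 2(|a|+|b|)\,|a-b|$. Applying this with $a=u+th+\bar u$ and $b=u+\bar u$, and recalling from \eqref{S-def} that $S(w)=|w+\bar u|(w+\bar u)-\bar u^2$, gives
\begin{equation*}
  \bigl|S(u+th)(x)-S(u)(x)\bigr|\le 2|t|\,|h(x)|\bigl(2|u(x)|+|t|\,|h(x)|+2\bar u\bigr).
\end{equation*}
The crucial feature is that $h$ has compact support, so the difference is supported in $\mathrm{supp}(h)$, a fixed bounded set depending only on $h$.

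Now the Sobolev embedding \eqref{sobolev-embedd} provides $u\in L^4(\mathbb R^2)$, so on the bounded set $\mathrm{supp}(h)$ one has $u\in L^4(\mathrm{supp}(h))\subset L^{4/3}(\mathrm{supp}(h))$ by H\"older's inequality, and similarly $u^2\in L^2(\mathrm{supp}(h))\subset L^{4/3}(\mathrm{supp}(h))$. Combining this with the pointwise bound above and the fact that $h\in L^\infty(\mathbb R^2)$, one concludes
\begin{equation*}
  S(u+th)-S(u)\in L^{4/3}(\mathbb R^2).
\end{equation*}
By the continuous embedding \eqref{sobolev-embedd-dual}, this yields $S(u+th)-S(u)\in\mathring H^{-1/2}(\mathbb R^2)$, and therefore $S(u+th)\in\mathring H^{-1/2}(\mathbb R^2)$, proving that $u+th\in\mathcal H$.

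There is no real obstacle in this argument; the only point that requires a little care is keeping track of the sign-changing structure of $S$ (and the distinction $u\gtrless -\bar u$ built into \eqref{S-def}), which is absorbed smoothly into the Lipschitz-type estimate for $f(w)=|w|w$ used above. In particular the argument works uniformly in $t\in\mathbb R$ and for both $\bar u=0$ and $\bar u>0$.
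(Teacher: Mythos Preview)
Your proof is correct and follows essentially the same approach as the paper: both decompose $S(u+th)=S(u)+[S(u+th)-S(u)]$, observe that the difference has compact support, show it lies in $L^{4/3}(\RR^2)$, and invoke the embedding \eqref{sobolev-embedd-dual}. The only cosmetic difference is that the paper obtains $L^{4/3}$ by first noting from \eqref{S-abs} that the difference is in $L^2(\RR^2)$ (since $u\in L^4_{loc}$) and then using compact support, whereas you arrive at $L^{4/3}$ directly via the explicit pointwise Lipschitz-type estimate for $w\mapsto |w|w$; the mention of $u^2\in L^2$ is superfluous but harmless.
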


\proof Since obviously $u + t h \in \mathring{H}^{1/2}(\RR^2)$, it
remains to prove that $S(u + t h) \in
\mathring{H}^{-1/2}(\RR^2)$. Consider $F(x) := S(u(x) + t h(x)) -
S(u(x))$. Clearly, $F$ has compact support, and by \eqref{S-abs} we
have $F \in L^2(\RR^2)$. Therefore, we also have $F \in
L^{4/3}(\RR^2)$, and, hence, by \eqref{sobolev-embedd-dual} the
functional
\begin{align}
  \label{Suth}
  \langle F, \varphi \rangle := \int_{\RR^2} (S(u(x) + t h(x)) -
  S(u(x))) \varphi(x) \ud^2 x \qquad (\varphi \in C^\infty_c(\RR^2))
\end{align}
can be continuously extended to the whole of
$\mathring{H}^{1/2}(\RR^2)$. Thus $S(u + t h) - S(u) \in
\mathring{H}^{-1/2}(\RR^2)$, and since $S(u) \in
\mathring{H}^{-1/2}(\RR^2)$ by assumption, this completes the
proof. \qed

\begin{lemma}\label{l-diff-V}
  Let $u\in\mathcal H$ and $h \in C^{\infty}_c(\RR^2)$. Then
  $S'(u)h\in \mathring{H}^{-1/2}(\RR^2) \cap L^4(\RR^2)$, and for
  every $\varphi\in \mathring{H}^{1/2}(\RR^2)$,
\begin{equation}\label{derivative}
  \lim_{t\to 0}\frac{1}{t}\langle S(u+th)-S(u),\varphi
  \rangle=\int_{\RR^2} S'(u(x))h(x)\varphi(x)\ud^2 x.
\end{equation}
\end{lemma}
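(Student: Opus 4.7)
The plan is to show the claimed regularity of $S'(u)h$ directly from the explicit form of $S'$, and then obtain the differentiation formula via a standard fundamental-theorem-of-calculus representation combined with dominated convergence, after reducing the duality pairing on the left-hand side to an absolutely convergent integral.

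From \eqref{S-def} one computes that $S \in C^1(\RR)$ with $S'(u) = 2|u+\bar u|$ (the left and right derivatives at $u = -\bar u$ both vanish). Hence $|S'(u)h| \leq 2(|u| + \bar u)|h|$, and since $h \in C^\infty_c(\RR^2)$ has compact support while $u \in \oH^{1/2}(\RR^2) \subset L^4(\RR^2)$ by \eqref{sobolev-embedd}, the product $S'(u)h$ has compact support and lies in $L^4(\RR^2)$. In particular, $S'(u)h \in L^{4/3}(\RR^2)$, so by \eqref{sobolev-embedd-dual} also $S'(u)h \in \oH^{-1/2}(\RR^2)$.

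Next, by the same compact-support argument (applied as in the proof of Lemma \ref{uht-admissible}), the increment $S(u+th) - S(u)$ is supported in $\operatorname{supp}(h)$ and belongs to $L^{4/3}(\RR^2) \subset \oH^{-1/2}(\RR^2)$. The functional $\varphi \mapsto \int_{\RR^2}(S(u+th)-S(u))\varphi \ud^2x$ is continuous on $\oH^{1/2}(\RR^2) \subset L^4(\RR^2)$ by the $L^{4/3}$–$L^4$ Hölder inequality, and it agrees with the duality pairing $\langle S(u+th)-S(u), \varphi \rangle$ on the dense subspace $C^\infty_c(\RR^2)$. Thus the two coincide on all of $\oH^{1/2}(\RR^2)$, reducing the desired limit to
\[
\lim_{t\to 0}\int_{\RR^2}\frac{S(u+th)-S(u)}{t}\,\varphi \,\ud^2 x = \int_{\RR^2} S'(u)h\,\varphi \,\ud^2 x.
\]

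The fundamental theorem of calculus gives $t^{-1}(S(u+th)-S(u)) = \int_0^1 S'(u+sth)h\,\ud s$ pointwise, which converges to $S'(u)h$ a.e.\ as $t \to 0$ by continuity of $S'$. For $|t|\leq 1$ the integrand is bounded by $2(|u|+|h|+\bar u)|h|$, which has compact support, and multiplying by $|\varphi|$ yields a function in $L^1(\RR^2)$ by Hölder's inequality (using $u,\varphi \in L^4(\RR^2)$ and $h \in C^\infty_c(\RR^2)$). Dominated convergence then yields the claim. The only subtle point is verifying that the left-hand duality pairing admits an integral representation against a general $\varphi \in \oH^{1/2}(\RR^2)$ rather than only $\varphi \in C^\infty_c(\RR^2)$; this is precisely what the density/continuity argument in the previous paragraph achieves, and beyond it the argument is a routine application of dominated convergence.
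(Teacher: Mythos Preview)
Your proof is correct and follows essentially the same route as the paper's: establish $S'(u)h\in L^{4/3}(\RR^2)\cap L^4(\RR^2)\subset\oH^{-1/2}(\RR^2)\cap L^4(\RR^2)$ from $S'(u)=2|u+\bar u|$ and compact support of $h$, reduce the duality pairing to an absolutely convergent integral via the $L^{4/3}$--$L^4$ duality and density, and conclude by dominated convergence. The only cosmetic difference is that the paper uses the mean value theorem to write $t^{-1}(S(u+th)-S(u))=S'(u+t\theta(t,\cdot)h)h$ for some measurable $\theta(t,\cdot)$ with $\|\theta(t,\cdot)\|_{L^\infty}\le 1$, whereas you use the integral form $\int_0^1 S'(u+sth)h\,\ud s$; your version is arguably cleaner since it sidesteps any question about the measurability of the intermediate point $\theta$.
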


\proof Note that
\begin{equation}
  \label{S-abs-prime}
  S'(u)=2|u+\bar u|,
\end{equation}
and, hence, $S'(u)\in L^4_{loc}(\RR^2)$ by
\eqref{sobolev-embedd}. Therefore, in view of the fact that $h \in
C^\infty_c(\RR^2)$, we have $S'(u) h \in L^4(\RR^2) \cap
L^{4/3}(\RR^2)$ and, again, by \eqref{sobolev-embedd-dual}, this
implies that $S'(u) h \in \mathring{H}^{-1/2}(\RR^2)$.

At the same time, by the argument in the proof of Lemma
\ref{uht-admissible} we have an integral representation
\begin{equation}\label{int-repr}
  \langle S(u+th)-S(u),\varphi\rangle
  =\int_{\RR^2} \big(S(u(x)+th(x))-S(u(x))\big) \varphi(x)\ud^2 x
\end{equation}
for every $\varphi\in \mathring{H}^{1/2}(\RR^2)$.  Using
\eqref{int-repr} and the mean value theorem, for some
$\theta(t,\cdot)\in L^\infty(\RR^2)$ with
$\|\theta(t,\cdot)\|_{L^\infty}\le 1$, we obtain
\begin{multline}\label{meanvalue}
  \frac{1}{t}\langle S(u+th)-S(u),\varphi\rangle
  =\frac{1}{t}\int_{\RR^2} \big(S(u(x)+th(x))-S(u(x))
  \big)\varphi(x)\ud^2 x\\
  =\int_{\RR^2} S'\big(u(x)+t\theta(t,x)h(x)\big)h(x)\varphi(x)\ud^2
  x,
\end{multline}
where the latter integral converges, since $S'(u+t\theta(t,\cdot)h)\in
L^4_{loc}(\RR^2)$ in view of \eqref{S-abs-prime} and
\eqref{sobolev-embedd}.  Then \eqref{derivative} follows by the Lebesgue
dominated convergence.
\qed

\begin{lemma}\label{l-diff-C}
  Let $u\in\mathcal H$ and $h\in C^\infty_c(\RR^2)$. Then
\begin{equation}
  \lim_{t\to
    0}\frac{1}{t}\left(\|{U}_{S(u+th)}\|_{\oH^{1/2}(\RR^2)}^2-
    \|{U}_{S(u)}\|_{\oH^{1/2}(\RR^2)}^2\right)
  =2\int_{\RR^2} {U}_{S(u)}(x)S'(u(x))h(x)\ud^2 x.
\end{equation}
\end{lemma}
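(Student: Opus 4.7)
The plan is to exploit the linearity of the Riesz representation map combined with Lemma \ref{l-diff-V}. By Lemma \ref{uht-admissible}, $u + th \in \mathcal H$ for every $t \in \RR$, so both $U_{S(u)}$ and $U_{S(u+th)}$ are well-defined elements of $\oH^{1/2}(\RR^2)$. Since the assignment $F \mapsto U_F$ from $\oH^{-1/2}(\RR^2)$ to $\oH^{1/2}(\RR^2)$ is linear, the difference $w_t := U_{S(u+th)} - U_{S(u)}$ is precisely the Riesz representative of $S(u+th) - S(u) \in \oH^{-1/2}(\RR^2)$. Expanding the squared norm then yields the identity
\[
\|U_{S(u+th)}\|^2_{\oH^{1/2}(\RR^2)} - \|U_{S(u)}\|^2_{\oH^{1/2}(\RR^2)} = 2 \langle U_{S(u)}, w_t \rangle_{\oH^{1/2}(\RR^2)} + \|w_t\|^2_{\oH^{1/2}(\RR^2)},
\]
so the task reduces to analyzing these two contributions separately.

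For the linear term, I plan to apply the Riesz identity \eqref{Riesz-repr} with the test function $U_{S(u)} \in \oH^{1/2}(\RR^2)$ to rewrite
\[
\langle U_{S(u)}, w_t \rangle_{\oH^{1/2}(\RR^2)} = \langle S(u+th) - S(u), U_{S(u)} \rangle.
\]
Since $U_{S(u)} \in \oH^{1/2}(\RR^2)$, Lemma \ref{l-diff-V} applies directly with $\varphi = U_{S(u)}$, giving that $t^{-1}$ times this quantity converges to $\int_{\RR^2} S'(u(x)) h(x) U_{S(u)}(x) \ud^2 x$ as $t \to 0$, which is exactly the desired right-hand side.

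The main obstacle is controlling the quadratic remainder $t^{-1} \|w_t\|^2_{\oH^{1/2}(\RR^2)}$. The plan is to show the bound $\|w_t\|_{\oH^{1/2}(\RR^2)} = O(|t|)$, which by the isometry \eqref{isometry} is equivalent to $\|S(u+th) - S(u)\|_{\oH^{-1/2}(\RR^2)} = O(|t|)$. Using the mean value argument already employed in the proof of Lemma \ref{l-diff-V} together with the formula $S'(v) = 2|v + \bar u|$ from \eqref{S-abs-prime}, the difference admits the pointwise bound
\[
|S(u(x) + th(x)) - S(u(x))| \leq 2|t|\,|h(x)|\,(|u(x) + \bar u| + |t|\,|h(x)|),
\]
whose right-hand side is supported in the compact support of $h$. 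Since $u \in L^4(\RR^2)$ by the Sobolev embedding \eqref{sobolev-embedd}, this pointwise bound yields $\|S(u+th) - S(u)\|_{L^{4/3}(\RR^2)} \leq C|t|$ uniformly for $|t| \leq 1$, which via the continuous embedding \eqref{sobolev-embedd-dual} promotes to the required $\oH^{-1/2}(\RR^2)$ bound. Consequently $t^{-1}\|w_t\|^2_{\oH^{1/2}(\RR^2)} = O(|t|) \to 0$ as $t \to 0$. Although not deep, this quadratic remainder estimate is the only genuinely nontrivial ingredient; once it is in hand, the remainder of the argument is essentially a one-line calculation based on the decomposition above.
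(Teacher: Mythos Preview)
Your proof is correct and follows essentially the same approach as the paper: the same decomposition of the squared-norm difference into a linear and a quadratic part, the same use of Lemma~\ref{l-diff-V} for the linear term, and the same $L^{4/3}$ control of $S(u+th)-S(u)$ (via the mean value theorem and the embedding \eqref{sobolev-embedd-dual}) to handle the quadratic remainder. The only cosmetic difference is that you establish the slightly sharper bound $\|w_t\|_{\oH^{1/2}(\RR^2)} = O(|t|)$ directly, whereas the paper merely shows $\|U_{S(u+th)-S(u)}\|_{\oH^{1/2}(\RR^2)} \to 0$ and pairs it with a bounded factor; both yield the same conclusion.
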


\proof Since $S(u + t h) \in \oH^{-1/2}(\RR^2)$ by Lemma
\ref{uht-admissible}, the potential $U_{S(u + h t)} \in \oH^{1/2}(\RR^2)$ is well-defined.  Then using \eqref{Riesz-repr} we
obtain
\begin{align}\label{diff-quadratic}
  &
  \|{U}_{S(u+th)}\|_{\oH^{1/2}(\RR^2)}^2-\|{U}_{S(u)}\|_{\oH^{1/2}(\RR^2)}^2
  \notag \\
  & \quad =2\langle S(u+th)-S(u),{U}_{S(u)}\rangle+ \langle
  S(u+th)-S(u),{U}_{S(u+th)-S(u)}\rangle.
\end{align}
Similarly to \eqref{meanvalue}, for some $\theta(t,\cdot)\in
L^\infty(\RR^2)$ with $\|\theta(t,\cdot)\|_{L^\infty}\le 1$ we obtain
\begin{multline}\label{meanvalue-plus}
  \frac{1}{t}\left|\langle S(u+th)-S(u),{U}_{S(u+th)-S(u)}\rangle\right|\\
  =\frac{1}{t}\left|\int_{\RR^2}
    \big(S(u(x)+th(x))-S(u(x))\big){U}_{S(u+th)-S(u)}(x)\ud^2
    x\right|\\
  \le
  C\|S'(u+t\theta(t,\cdot)h)h\|_{L^{4/3}(\RR^2)}\|{U}_{S(u+th)-S(u)}\|_{\oH^{1/2}(\RR^2)},
\end{multline}
for some $C > 0$ independent of $t$.  Since $h$ is compactly
supported, by Lebesgue dominated convergence we conclude that
\begin{align}
  &
  \|S'(u+t\theta(t,\cdot)h)h\|_{L^{4/3}(\RR^2)}\to\|S'(u)h\|_{L^{4/3}(\RR^2)}\quad
  \text{as $t\to 0$}, \\
\label{eq:43conv}
& \|S(u+th)-S(u)\|_{L^{4/3}(\RR^2)}\to 0\quad\text{as $t\to 0$}.
\end{align}
{}From \eqref{isometry} we note that
$U:\mathring{H}^{-1/2}(\RR^2)\mapsto\mathring{H}^{1/2}(\RR^2)$ is an
isometry.  Then \eqref{eq:43conv} and \eqref{sobolev-embedd-dual}
imply that
\begin{equation}
\|{U}_{S(u+th)-S(u)}\|_{\oH^{1/2}(\RR^2)}\to 0\quad\text{as $t\to 0$}.
\end{equation}
Using \eqref{derivative} we obtain
\begin{multline}
  \lim_{t\to 0}\frac{1}{t}\left(2\langle
    S(u+th)-S(u),{U}_{S(u)}\rangle+
    \langle S(u+th)-S(u),{U}_{S(u+th)-S(u)}\rangle\right)\\
  =2\int_{\RR^2} S'(u(x))h(x){U}_{S(u)}(x) \ud^2 x.
\end{multline}
Hence, the assertion follows via \eqref{diff-quadratic}.
\qed

\begin{proposition}\label{Euler}
  Let $V\in \mathring{H}^{1/2}(\RR^2)$.  Then $E$ at every
  $u\in\mathcal H$ admits a directional derivative with respect to
  test functions $h\in C^\infty_c(\RR^2)$. Furthermore, the derivative
  is given by
\begin{multline}
  \label{eq:EL-weak}
  \frac{d}{dt}E(u+th)\Big|_{t=0} = 2a \langle u,h
  \rangle_{\oH^{1/2}(\RR^2)} +\int_{\RR^2}\Phi'(u(x))h(x)\ud^2 x\\+
  \int_{\RR^2}V(x)S'(u(x))h(x) \ud^2 x+ b\int_{\RR^2}
  {U}_{S(u)}(x)S'(u(x))h(x) \ud^2 x.
\end{multline}
\end{proposition}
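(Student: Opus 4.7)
\proof[Proof plan]
The plan is to write $E(u + th)$ as the sum of its four constituents in \eqref{E-v} and differentiate each separately at $t = 0$, invoking Lemmas \ref{uht-admissible}--\ref{l-diff-C} to handle the nonlocal pieces. Lemma \ref{uht-admissible} ensures $u + th \in \mathcal H$ for all $t \in \RR$, so each of the four terms is well-defined along the entire variation.

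For the first term $a \|u + th\|^2_{\oH^{1/2}(\RR^2)}$, expanding via the bilinear form \eqref{bi-Gagliardo} gives the derivative $2a \langle u, h \rangle_{\oH^{1/2}(\RR^2)}$ by a direct computation, using that $h \in \oH^{1/2}(\RR^2)$ trivially. For the local term $\int \Phi(u + th) \ud^2 x$, I would apply the Lebesgue dominated convergence theorem. Since $\Phi \in C^1(\RR)$ with the pointwise bound $|\Phi'(s)| \leq C(\bar u |s| + s^2)$ following from \eqref{Phi} and \eqref{S-abs-prime}, and since $h$ is compactly supported in some ball $B_R$ with $u \in L^4_{loc}(\RR^2) \cap L^2_{loc}(\RR^2)$ by \eqref{sobolev-embedd}, the difference quotient $t^{-1}(\Phi(u + th) - \Phi(u)) = \Phi'(u + \theta t h) h$ (for some $|\theta| \leq 1$) is dominated uniformly in $|t| \leq 1$ by $C \bigl(\bar u (|u| + |h|) + (|u| + |h|)^2\bigr) |h| \, \mathbf{1}_{B_R}$, which is integrable. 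Passing to the limit yields $\int \Phi'(u) h \ud^2 x$.

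For the third, potential-coupling term, I would use the Riesz representation identity \eqref{Riesz-repr} to rewrite
\begin{equation*}
\langle V, U_{S(u+th)} \rangle_{\oH^{1/2}(\RR^2)} = \langle S(u+th), V \rangle,
\end{equation*}
where the right-hand side denotes the duality pairing between $\oH^{-1/2}(\RR^2)$ and $\oH^{1/2}(\RR^2)$. Since $V \in \oH^{1/2}(\RR^2)$, Lemma \ref{l-diff-V} applied with $\varphi = V$ gives
\begin{equation*}
\lim_{t \to 0} \frac{1}{t} \bigl( \langle S(u+th), V \rangle - \langle S(u), V \rangle \bigr) = \int_{\RR^2} S'(u(x)) h(x) V(x) \ud^2 x,
\end{equation*}
which is the third contribution in \eqref{eq:EL-weak}. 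Finally, for the Coulomb self-energy $\tfrac{b}{2}\|U_{S(u+th)}\|_{\oH^{1/2}(\RR^2)}^2$, Lemma \ref{l-diff-C} yields directly the derivative $b \int U_{S(u)}(x) S'(u(x)) h(x) \ud^2 x$. Summing the four contributions produces \eqref{eq:EL-weak}.

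The main conceptual obstacle is not in this assembly but was already absorbed into the three auxiliary lemmas: namely, ensuring that $S(u + th)$ stays in $\oH^{-1/2}(\RR^2)$ so that the Coulomb and coupling terms are well-defined along the whole path, and that the potentially nonintegral functional $\langle S(u), \cdot \rangle$ can nevertheless be differentiated in an integral sense when tested against compactly supported variations. Given the lemmas, the verification of \eqref{eq:EL-weak} reduces to the elementary term-by-term differentiation described above. \qed
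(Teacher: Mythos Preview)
Your proposal is correct and follows essentially the same approach as the paper, which simply states that the result follows from Lemmas \ref{uht-admissible}--\ref{l-diff-C} and \eqref{Phi}. You have merely spelled out the term-by-term differentiation that the paper leaves implicit, including the dominated convergence argument for the $\Phi$ term that the paper absorbs into its reference to \eqref{Phi}.
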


\proof Follows from Lemmas \ref{uht-admissible}--\ref{l-diff-C}
and \eqref{Phi}.  \qed

\begin{remark}\label{Frechet-rem}
  The corresponding Euler--Lagrange equation is then in the
  distributional sense
\begin{align}
  \label{eq:EL+}
  0 = 2a (-\Delta)^{1/2}u +\Phi'(u)+V S'(u)+b{U}_{S(u)}S'(u).
\end{align}
Observing that $\Phi'(u)=uS'(u)$ and $S'(u)=2|u+\bar u|$, we rewrite
\eqref{eq:EL+} in the form
\begin{align}
  \label{eq:EL++}
  0 = a (-\Delta)^{1/2}u +|u+\bar u|\left(u+V+b{U}_{S(u)}\right).
\end{align}
\end{remark}

\subsection{Regularity}\label{sec:regularity}

Using the Euler--Lagrange equation for $E$ we shall establish
additional regularity of the minimizers.

\begin{lemma}\label{lem:regularity}
  Assume that $V\in \mathring{H}^{1/2}(\RR^2)$. Let $u \in \mathcal H$ be
  such that $E(u) = \inf_{\tilde u \in \mathcal H} E(\tilde u)$.  Then
  $u\in C^{1/2}(\RR^2)\cap L^\infty(\RR^2)$ and $u(x) \to 0$ as $|x|
  \to \infty$.
\end{lemma}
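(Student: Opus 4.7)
The starting point is the Euler--Lagrange equation from Remark~\ref{Frechet-rem}: the minimizer $u$ satisfies distributionally
\[
a(-\Delta)^{1/2}u = -|u+\bar u|\bigl(u + V + bU_{S(u)}\bigr) =: f.
\]
Set $g := u + V + bU_{S(u)} \in \oH^{1/2}(\RR^2)$, so $g \in L^4(\RR^2)$ by the Sobolev embedding \eqref{sobolev-embedd}; by \eqref{L2}--\eqref{L3} and $u \in \oH^{1/2}(\RR^2)$, the minimizer sits in $L^{p_0}(\RR^2) \cap L^4(\RR^2)$ with $p_0 = 2$ when $\bar u > 0$ and $p_0 = 3$ when $\bar u = 0$. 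I view the equation in Riesz-potential form $a u = -(-\Delta)^{-1/2}f$ (in the sense of \eqref{weak-Laplace}--\eqref{distributional-half-Laplace}) and iteratively bootstrap the integrability of $u$ via Lemma~\ref{lemma-HLS}.

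Pointwise $|f| \leq (|u|+\bar u)|g|$. H\"older's inequality places the ``pure $u$'' piece, bounded by $|u||g|$, in $L^r(\RR^2)$ for $r$ in a non-degenerate subinterval of $(1,2)$ fixed by $u \in L^{p_0} \cap L^4$ and $g \in L^4$. When $\bar u = 0$ this accounts for all of $f$, and the first part of Lemma~\ref{lemma-HLS} upgrades $u$ from $L^p$ to $L^{p'}$ with $1/p' = 1/r - 1/2$; iterating gives $u \in L^t(\RR^2)$ for every $t \in [p_0,\infty)$. When $\bar u > 0$ one also has the constant-coefficient tail $-\bar u g \in L^4(\RR^2)$, which genuinely sits at the endpoint of the HLS range. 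I handle it by localizing: for each $x_0\in\RR^2$ and $R>0$ I split $f = f_R + f_R^{c}$ with $f_R := f \chi_{B_{2R}(x_0)}$, use Lemma~\ref{lemma-harmonic} to conclude that $u_R^{c} := -a^{-1}(-\Delta)^{-1/2} f_R^{c}$ is $C^\infty$ on $B_R(x_0)$, and apply Lemma~\ref{lemma-HLS} to $u_R := -a^{-1}(-\Delta)^{-1/2} f_R$, which is compactly supported data in the good $L^r$-range. Patching the local improvements again produces $u \in L^t(\RR^2)$ for every $t \in [p_0,\infty)$.

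Once $u$ has arbitrary polynomial integrability, $|f|\le (|u|+\bar u)|g|$ combined with $g \in L^4$ places $f$ in $L^{s_1}(\RR^2) \cap L^4(\RR^2)$ for some $s_1 \in (1,2)$, hence in $L^1(\RR^2,(1+|x|)^{-1}\ud^2 x)$. The H\"older part of Lemma~\ref{lemma-HLS}, applied first with $s$ slightly less than $4$, yields $u \in L^\infty(\RR^2) \cap C^\alpha(\RR^2)$ for some $\alpha<1/2$. Re-bootstrapping with the new $L^\infty$ bound makes $|u+\bar u|$ bounded, so $|f| \leq C|g|\in L^4(\RR^2)$, and a second application of Lemma~\ref{lemma-HLS} with $s=4$ (again on each $f_R$, with the smooth tail controlled by Lemma~\ref{lemma-harmonic}) produces $u \in C^{1/2}(\RR^2)$ with a global H\"older seminorm. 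The decay $u(x) \to 0$ as $|x|\to\infty$ then follows from $u \in L^4(\RR^2) \cap C^{1/2}(\RR^2)$ by a standard disjoint-balls argument: a sequence $|x_n|\to\infty$ with $|u(x_n)|\ge\varepsilon$ would, by global H\"older continuity with seminorm $[u]_{C^{1/2}}$, force $|u|\ge\varepsilon/2$ on balls of fixed radius $r_0=(\varepsilon/(2[u]_{C^{1/2}}))^2$ around a subsequence, which can be thinned to pairwise disjoint balls, contradicting the finiteness of $\|u\|_{L^4(\RR^2)}$.

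The main obstacle is the bootstrap in the $\bar u > 0$ case, where the constant-coefficient tail $-\bar u(V + bU_{S(u)}) \in L^4(\RR^2)$ forbids a direct global HLS iteration. The localization via Lemma~\ref{lemma-harmonic} sketched above is the cleanest way I see to bypass it; an alternative would be to absorb the linearization by moving the $-\bar u u$ contribution to the left to obtain a ``massive'' fractional equation of the form $(a(-\Delta)^{1/2}+\bar u)u = (\text{nicer RHS})$, whose resolvent has better $L^p$ mapping properties than $(-\Delta)^{-1/2}$ alone.
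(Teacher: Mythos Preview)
Your proposal is correct and follows essentially the same route as the paper's proof: the Euler--Lagrange equation, a global HLS bootstrap (Lemma~\ref{lemma-HLS}) when $\bar u=0$, the localization via Lemma~\ref{lemma-harmonic} to handle the $L^4$ tail $\bar u g$ when $\bar u>0$, and the final $C^{1/2}$ estimate once $u\in L^\infty$ forces $f\in L^4$. One small slip to fix in your write-up: for $\bar u>0$ you assert $f\in L^{s_1}(\RR^2)\cap L^4(\RR^2)$ globally after the integrability bootstrap, but the piece $\bar u g$ is only in $L^4(\RR^2)$ (hence also $f\notin L^1(\RR^2,(1+|x|)^{-1}\ud^2 x)$ globally), so the first H\"older application with $s<4$ already requires the same localization you invoke a sentence later---the paper keeps the decomposition $u=u_R+h_R$ throughout and then sends $R\to\infty$ in the resulting H\"older estimate to globalize.
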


\proof Since $u\in \mathcal H$ is a minimizer of $E$, it satisfies the
Euler-Lagrange equation \eqref{eq:EL++} distributionally.  Denote
\begin{equation}
  \label{F}
  F(x):=-a^{-1} |u(x)+\bar
  u|\left(u(x)+V(x)+b{U}_{S(u)}(x) \right), \qquad x \in \RR^2.
\end{equation}
If $F\in L^s(\RR^2)$ for some $1<s<2$ then
\begin{align}\label{Riesz-0}
  u(x)= \frac{1}{2\pi}\int_{\RR^2}\frac{F(y)}{|x-y|}\ud^2 y \in
  L^t(\RR^2),\quad\frac{1}{t}=\frac{1}{s}-\frac{1}{2},
\end{align}
see Lemma \ref{lemma-HLS}.  So we can apply the bootstrap argument in
an attempt to improve the $L^t$--regularity of $u$.

\smallskip

First, we consider the case $\bar u=0$. Then $u \in L^p(\RR^2)$ for
all $p\in[3,4]$, by \eqref{L3}.  Since $V, U_{S(u)}\in L^4(\RR^2)$, we
conclude that
\begin{align}
u^2&\in L^s(\RR^2)\qquad\forall s\in\big[\tfrac{3}{2},2\big],\\
uV,\, u U_{S(u)}&\in L^s(\RR^2)\qquad\forall s\in\big[\tfrac{12}{7},2\big].
\end{align}
Then
\begin{equation}
F\in L^s(\RR^2)\qquad\forall s\in\big[\tfrac{12}{7},2\big],
\end{equation}
and therefore, by \eqref{Riesz-0} and \eqref{L3}
\begin{equation}\label{iter-1}
u\in L^t(\RR^2),\qquad\forall t\ge 3.
\end{equation}
Iterating once more, we deduce that
\begin{equation}\label{s-boot}
F\in L^s(\RR^2)\qquad\forall s\in\big[\tfrac{12}{7},4\big).
\end{equation}
Then by Lemma \ref{lemma-HLS} and Remark \ref{r-HLS} we obtain
\begin{equation}\label{iter-2}
  u\in C^{1-{2 \over s}}(\RR^2)\cap L^t(\RR^2),\qquad\forall t\in[3,\infty], \;
  \forall s \in \bigl(2,4\bigr).
\end{equation}
In particular, this means that in \eqref{s-boot} we can take $s=4$.
Applying Lemma \ref{lemma-HLS} once again with $s=4$, we finally
deduce that
\begin{equation}\label{iter-3}
u\in C^{1/2}(\RR^2)\cap L^t(\RR^2),\qquad\forall t\in[3,\infty].
\end{equation}

\smallskip

Next consider the case $\bar u\neq 0$.  Then $u \in L^p(\RR^2)$ for
all $p\in[2,4]$, by \eqref{L2}.  Since $V, U_{S(u)}\in L^4(\RR^2)$, we
conclude that
\begin{align}
  u^2&\in L^s(\RR^2)\qquad\forall s\in\big[1,2\big],\\
  uV,\, u U_{S(u)}&\in L^s(\RR^2)\qquad\forall
  s\in\big[\tfrac{4}{3},2\big],\\
  \bar uV,\, \bar u U_{S(u)}&\in L^4(\RR^2).
\end{align}
Hence
\begin{equation}
  F = F_1 + F_2, \qquad  F_1 \in L^4(\RR^2), \qquad F_2 \in L^2(\RR^2),
\end{equation}
and we do not gain at this point any additional regularity because of
the lack of decay at infinity coming from $\bar uV$ and $\bar u
U_{S(u)}$.  Since the Riesz potential in \eqref{Riesz-0} could be
applied (as an integral operator) only to functions in $L^s(\RR^2)$
with $s<2$, the previous bootstrap procedure fails on the whole of
$\RR^2$.  Instead, we will use a localized version based on Lemma
\ref{lemma-harmonic}.

Given arbitrary $R>0$, we represent
\begin{equation}
  \label{uRhR}
  u = u_R + h_R, \qquad u_R := u_{R,1} + u_{R,2},
\end{equation}
where
\begin{align}\label{Riesz-0R}
  u_{R,1}(x) :=
  \frac{1}{2\pi}\int_{B_{2R}(0)}\frac{F_1(y)}{|x-y|}\ud^2 y, \qquad
  u_{R,2}(x) :=
  \frac{1}{2\pi}\int_{B_{2R}(0)}\frac{F_2(y)}{|x-y|}\ud^2 y.
\end{align}
Since $\chi_{B_{2R}(0)}F_1 \in L^s(\RR^2)$ for any $s\in[1,4]$, by
H\"older inequality we conclude that
\begin{align}
  \|u_{R,1} \|_{L^\infty(B_R(0))} \leq C_R \| F_1 \|_{L^4(\RR^2)},
\end{align}
for some $C_R > 0$ depending only on $R$ (here and in the rest of the
proof we suppress the dependence of all the constants on $a$, $b$ and
$\bar \rho$).  Similarly, since $\chi_{B_{2R}(0)} F_2 \in L^s(\RR^2)$
for any $s \in [1, 2]$, by Lemma \ref{lemma-HLS} we have $u_{R,2} \in
L^t(\RR^2)$ for all $t > 2$. Furthermore, by H\"older inequality we
obtain
\begin{align}
  \| u_{R,2} \|_{L^t(B_R(0))} \leq C_{R,t} \| F_2 \|_{L^2(\RR^2)},
\end{align}
for some $C_{R,t} > 0$ depending only on $R$ and $t$. At the same
time, the function $h_R := u - u_R$ solves
\begin{align}
  \langle h_R, \varphi \rangle_{\oH^{1/2}(\RR^2)} = \int_{\RR^2
    \backslash B_{2R}(0)} F(x) \varphi(x) \ud^2 x \qquad \forall
  \varphi \in C^\infty_c(\RR^2).
\end{align}
Therefore, by Lemma \ref{lemma-harmonic} we have $h_R\in
W^{1,\infty}(B_R(0))$, with $\| h_R \|_{L^\infty(B_R(0))} \leq C_R \|
u \|_{L^4(\RR^2)}$ for some $C_R > 0$ depending only on $R$. Thus, we
have $u \in L^t(B_R(0))$ for any $t > 2$, with the norm controlled by
constant depending only on $R$, $t$, $\|u\|_{L^4(\RR^2)}$,
$\|V\|_{L^4(\RR^2)}$ and $\|U_{S(u)}
\|_{L^4(\RR^2)}$.
Furthermore, by possibly increasing the value of the constant, we can
make the same conclusion about $\| u \|_{L^t(B_{2R}(0))}$.
Bootstrapping this information, we then obtain that $\chi_{B_{2R}(0)}
F_2 \in L^s(\RR^2)$ with any $s \in [1, 4)$, and, again, by H\"older
inequality this implies that $u_{R,2} \in L^\infty(B_R(0))$, with the
norm controlled by $\|u\|_{L^4(\RR^2)}$, $\|V\|_{L^4(\RR^2)}$ and
$\|U_{S(u)} \|_{L^4(\RR^2)}$, and the constant depending only on
$R$. Combining this with the $L^\infty$-bounds on $u_{R,1}$ and $h_R$,
we then conclude that $\| u \|_{L^\infty(B_R(0))} \leq C_R$ for some
constant $C_R > 0$ depending only on $R$ and $\|u\|_{L^4(\RR^2)}$,
$\|V\|_{L^4(\RR^2)}$ and $\|U_{S(u)} \|_{L^4(\RR^2)}$. Furthermore,
since the obtained estimates for fixed $R > 0$ are translationally
invariant, we arrive at the conclusion that $u \in L^\infty(\RR^2)$.

The fact that $u \in L^\infty(\RR^2)$ implies that $F\in L^4(\RR^2)$.
Noting that $\chi_{B_{2R}(0)} F \in L^s(\RR^2)$ for any $s \in [1,4]$,
by Lemma \ref{lemma-HLS} and Remark \ref{r-HLS} we then have
\begin{equation}\label{e-Holder-plus}
  |u_R(x)-u_R(y)|\le C \| F \|_{L^4(\RR^2)} |x-y|^{1/2}\qquad\forall
  x,y\in\RR^2,
\end{equation}
for some universal $C > 0$. On the other hand, since $\| h_R
\|_{W^{1,\infty}(B_R(0))} \to 0$ as $R \to \infty$, fixing $x$ and $y$
and passing to the limit we conclude that
\begin{equation}\label{e-Holder-plus+}
  |u(x)-u(y)|\le C \| F \|_{L^4(\RR^2)} |x-y|^{1/2}\qquad\forall
  x,y\in\RR^2,
\end{equation}
and, hence,
\begin{equation}\label{iter-3plus}
  u\in C^{1/2}(\RR^2)\cap L^t(\RR^2)\qquad\forall t\in[2,\infty].
\end{equation}
Finally, it is standard to see that $u\in C^{\alpha}(\RR^2)\cap
L^p(\RR^2)$ for some $\alpha\in(0,1]$ and some $p\ge 1$ implies that
$u(x) \to 0$ as $|x| \to \infty$.  \qed

\begin{remark}\label{remark-regularity}
  The regularity of minimizers of $E$ can be improved under additional
  smoothness assumptions on $V$.  For instance, assume that $V\in
  \mathring{H}^{1/2}(\RR^2)\cap C^{1/2}(\RR^2)$.  Taking into account
  that $S(\cdot)$ is a $C^1$--mapping and using Lemma
  \ref{lemma-harmonic}, similarly to the arguments in the proof of
  Lemma~\ref{lem:regularity} one can show that $U_{S(u)} \in
  C^{1/2}(\RR^2) \cap L^\infty(\RR^2)$ as well. Then the
  expression $|u(x)+\bar u|\left(u(x)+V(x)+b{U}_{S(u)}(x) \right)$ in
  the right hand side of \eqref{F} is a bounded,
  $C^{1/2}$--H\"older continuous function, and we can conclude that
  $u\in C^{1,1/2}(\RR^2)$ by \cite[Proposition 2.8]{Silvestre:07}.
  Furthermore, if we assume that $V\in \mathring{H}^{1/2}(\RR^2)\cap
  C^{\alpha}(\RR^2)$ for some $\alpha\in(\frac12,1)$, then repeating a
  similar argument once again we can see that $u\in
  C^{1,\alpha}(\RR^2)$.

  Note, however, that if $u\in C^{1,\alpha}(\RR^2)$, but $u+\bar u$
  changes sign then $|u+\bar u|$, and, hence, the whole right hand
  side of \eqref{F}, is merely a locally Lipschitz function of $x$
  regardless of the smoothness of $V$. Thus, generally speaking, local
  regularity of $u$ can not be improved beyond $C^{1,\alpha}(\RR^2)$.
\end{remark}

\subsection{Proof of Theorem \ref{thm:positive}}

Let $u\in \mathcal H$ be such that
$E(u) = \inf_{\tilde u \in \mathcal H} E(\tilde u)$.  Clearly,
$E(u)\le 0$. In particular,
\begin{equation}\label{E-v-00}
a \|u\|_{\oH^{1/2}(\RR^2)}^2
   + \langle V,{U}_{S(u)}\rangle_{\oH^{1/2}(\RR^2)} +
  \frac{b}{2} \|{U}_{S(u)}\|_{\oH^{1/2}(\RR^2)}^2\le 0.
\end{equation}
Applying Cauchy-Schwarz inequality and then the fractional Sobolev
inequality \eqref{sobolev-inequality}, we conclude that
\begin{equation}\label{V-bound}
  \frac{1}{2b} \|V\|_{\oH^{1/2}(\RR^2)}^2\ge a
  \|u\|_{\oH^{1/2}(\RR^2)}^2\ge a \sqrt{\pi} \, \|u\|_{L^4(\RR^2)}^{2}.
\end{equation}
Similarly, by \eqref{E-v-00} and Cauchy-Schwarz inequality we have
\begin{align}
\label{V-boundbis}
2 \|V\|_{\oH^{1/2}(\RR^2)} \geq b \|{U}_{S(u)}\|_{\oH^{1/2}(\RR^2)}
\geq \pi^{1/4} b \, \|{U}_{S(u)}\|_{L^4(\RR^2)}.
\end{align}

Next assume that the inequality opposite to the one in the
statement of the theorem holds, namely that $\|u\|_{L^\infty(\RR^2)} \ \geq \ \bar u$.
Choose $x^*\in\RR^2$ such that $| u(x^*) | \ge
\frac{1}{2}\|u\|_{L^\infty(\RR^2)}$.  Then $|u+\bar u|\le2\|u\|_\infty$.
Using the same notations as in the proof of Lemma \ref{lem:regularity}, by \eqref{F}, \eqref{V-bound}, \eqref{V-boundbis} and \eqref{sobolev-inequality} we have
\begin{align}
  \| F \|_{L^4(\RR^2)} \leq C \| u \|_{L^\infty(\RR^2)} \| V
  \|_{\mathring{H}^{1/2}(\RR^2)},
\end{align}
for some $C > 0$ depending
only on $a$ and $b$. Therefore by \eqref{e-Holder-plus+} for any $R > 0$ we can write
\begin{equation} {\mathrm{osc}}_{B_R(x^*)} u\le
  C\|u\|_{L^\infty(\RR^2)}\|V\|_{\mathring{H}^{1/2}(\RR^2)}
  R^{1/2},
\end{equation}
again, for some $C > 0$ depending only on $a$ and $b$.

Now, set
\begin{equation}
  R=\frac{c}{\|V\|_{\mathring{H}^{1/2}(\RR^2)}^2},
\end{equation}
where $c>0$ is a constant depending only on $a$ and $b$ chosen in such
a way that $\mathrm{osc}_{B_R(x^*)} u \le
\frac{1}{4}\|u\|_{L^\infty(\RR^2)}$.  Then
\begin{equation}
  \|u\|_{L^4(\RR^2)}^4\ge \int_{B_R(x^*)} u^4 \ud^2 x \ge
  \frac{\pi R^2}{256} \|u\|_{L^\infty(\RR^2)}^4 \ge
  C \|u\|_{L^\infty(\RR^2)}^4 \| V \|_{\mathring{H}^{1/2}(\RR^2)}^{-4},
\end{equation}
for some $C > 0$ depending only on $a$ and $b$, which yields
\begin{equation}
  \|u\|_{L^4(\RR^2)} \| V \|_{\mathring{H}^{1/2}(\RR^2)} \ge C
  \|u\|_{L^\infty(\RR^2)} \geq C \bar u.
\end{equation}
In view of \eqref{V-bound}, we then conclude that
\begin{equation}\label{V-bound2}
\|V\|_{\oH^{1/2}(\RR^2)}\ge C,
\end{equation}
for some $C>0$ depending only on $a$, $b$ and $\bar \rho$, which
completes the proof.  \qed

\section{Proof of Theorems~\ref{thm:Egenplus}, \ref{thm:EA0p} and
  \ref{thm:E0A0}}
\label{sec:th32}

\subsection{Proof of Theorems~\ref{thm:Egenplus} and \ref{thm:EA0p}}

We introduce the function class
\begin{align}
  \mc{H}_+ := \Bigl\{ u \in \mc{H} \, : \, u \geq -\bar{u} \Bigr\},
\end{align}
which is an equivalent way of writing the class
$\mc{A}_{\bar{\rho}}^+$.
To study the variational problem for $E$ on $\mc{H}_+$, let us define
another energy functional $E_+$, given by \eqref{E-v} in which the
functions $\Phi(u)$ and $S(u)$ are replaced by $\Phi_+(u)$ and
$S_+(u)$, respectively. The latter are obtained from the former by a
reflecion around $u = -\bar u$ from the range $u \geq -\bar u$ to $u
\leq -\bar u$ (see Fig. \ref{fig2} and compare with Fig. \ref{fig1}):
\begin{align}
  & S_+(u) := S( \abs{u + \bar{u}} - \bar{u}) =  2 \bar u u + u^2,  \\
  & \Phi_+(u) := \Phi( \abs{u + \bar{u}} - \bar{u}) = \frac{2}{3}
    ( \abs{u + \bar{u}}^3 - \bar{u}^3) - \bar{u} S_+(u).
\end{align}
We also introduce the function class
\begin{align}
  \tilde{\mc{H}} := \Bigl\{ u \in \mathring{H}^{1/2}(\RR^2) :
  S_+(u) \in \mathring{H}^{-1/2}(\RR^2) \},
\end{align}
which is the analog of $\mc{A}_{\bar\rho}$ in the context of $E_+(u)$.
Thus, the energy functional $E_+(u)$, defined for all
$u \in \tilde{\mc{H}}$, is given by
\begin{multline}\label{eq:Eplus}
  E_+(u) := a \norm{u}_{\mathring{H}^{1/2}(\RR^2)}^2
  + \int_{\RR^2} \Phi_+(u(x)) \ud^2 x +
  \average{V, U_{S_+(u)}}_{\mathring{H}^{1/2}(\RR^2)} \\
  + \frac{b}{2} \norm{U_{S_+(u)}}_{\mathring{H}^{1/2}(\RR^2)}^2.
\end{multline}
Note that, by construction, if $u \in \mc{H}_+$ then $u \in
\tilde{\mc{H}}$ and $E(u) = E_+(u)$.

\begin{figure}[t]
  \centering
  \includegraphics[width=2.3in]{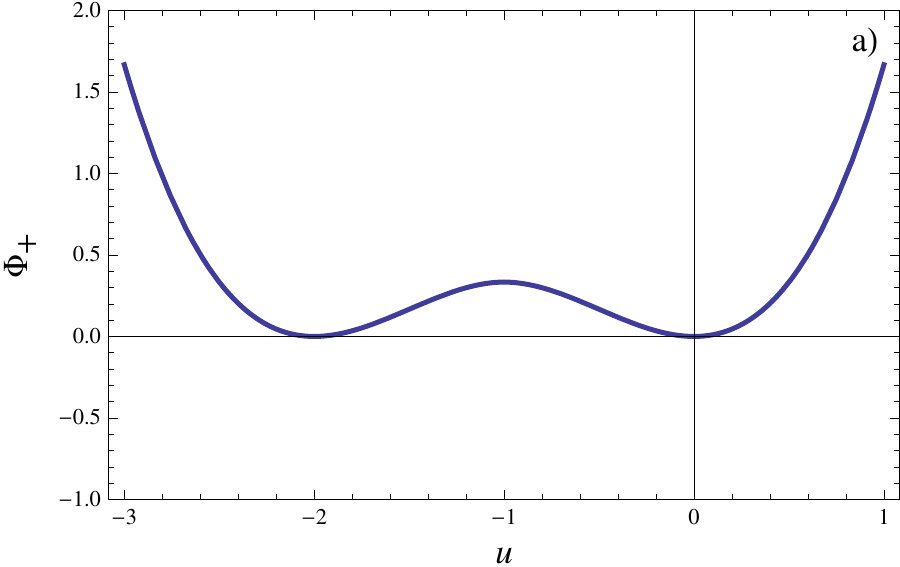}
  \hspace{3mm} \includegraphics[width=2.26in]{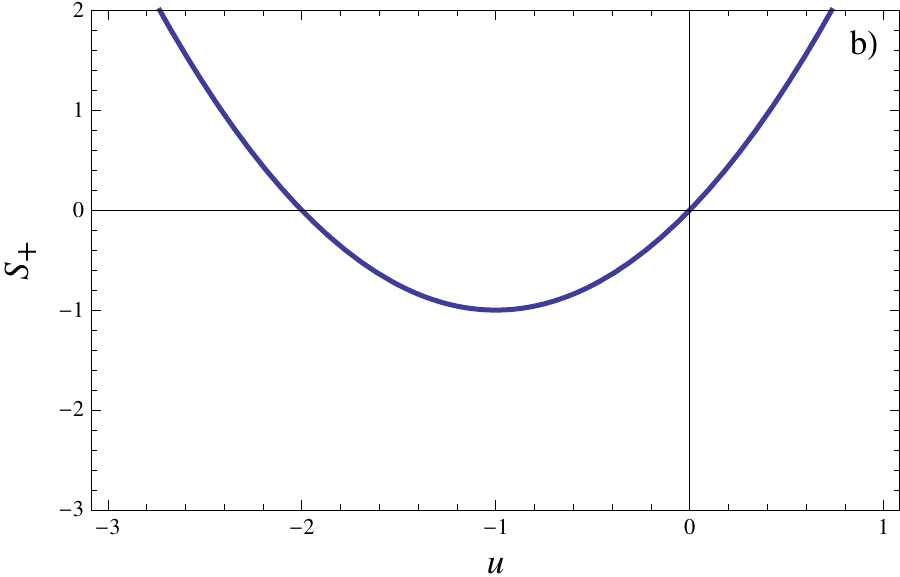}
  \caption{(a) Plot of $\Phi_+(u)$ and (b) plot $S_+(u)$ for $\bar
      u = 1$.}
  \label{fig2}
\end{figure}

Analogous to Proposition~\ref{Frechet}, we have
\begin{proposition}
  If $V \in \mathring{H}^{1/2}(\RR^2)$, then there exists $u_0 \in
  \mc{H}_+$ such that $\displaystyle E_+(u_0) = \inf_{u \in
    \tilde{\mc{H}}} E_+(u)$. Furthermore, $\displaystyle E(u_0) =
  \inf_{u \in \mc{H}_+} E(u)$.
\end{proposition}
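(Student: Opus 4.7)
The plan is to first obtain a minimizer $u^{\ast}$ for $E_+$ on the larger class $\tilde{\mathcal H}$ by mimicking the proof of Proposition \ref{Frechet}, and then to exploit the fact that both $\Phi_+$ and $S_+$ depend on their argument $u$ only through the combination $|u+\bar u|$ in order to ``fold'' $u^{\ast}$ into a function $u_0\in\mathcal H_+$ with no greater energy.

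For the existence step I would verify that $E_+$ has exactly the same structural form as $E$ in \eqref{E-v}: the $\oH^{1/2}$-seminorm is unchanged, the modified nonlinearities $\Phi_+$ and $S_+$ satisfy the same two-sided polynomial estimates as $\Phi$ and $S$ in \eqref{Phi-abs}--\eqref{S-abs} (in fact $S_+(u)=2\bar u u+u^2$ globally, while $\Phi_+$ keeps its quadratic/cubic growth and remains nonnegative), and the Coulomb term is the squared norm of the Riesz potential $U_{S_+(u)}$. The proof of Proposition \ref{Frechet} --- weak compactness of a minimizing sequence in $\oH^{1/2}$, local compactness via the fractional Rellich--Kondrachov theorem, strong Nemytskii continuity of $S_+$, Fatou's lemma applied to the nonnegative $\Phi_+$, and identification of the weak limit of $U_{S_+(u_n)}$ through \eqref{Riesz-repr} --- then carries over verbatim and delivers some $u^{\ast}\in\tilde{\mathcal H}$ with $E_+(u^{\ast})=\inf_{\tilde{\mathcal H}} E_+$.

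For the folding step I would set $u_0:=|u^{\ast}+\bar u|-\bar u$, which by construction satisfies $u_0\ge -\bar u$ pointwise. The elementary inequality $\bigl||a|-|b|\bigr|\le|a-b|$ applied with $a=u^{\ast}(x)+\bar u$, $b=u^{\ast}(y)+\bar u$ yields $|u_0(x)-u_0(y)|\le|u^{\ast}(x)-u^{\ast}(y)|$ for all $x,y\in\RR^2$, and hence $\|u_0\|_{\oH^{1/2}(\RR^2)}\le\|u^{\ast}\|_{\oH^{1/2}(\RR^2)}$; combined with the crude bound $|u_0|\le 3|u^{\ast}|$, this also places $u_0$ in $L^4(\RR^2)$. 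Since $|u_0+\bar u|=|u^{\ast}+\bar u|$ identically and both $\Phi_+$ and $S_+$ factor through $u\mapsto|u+\bar u|$, we have $\Phi_+(u_0)=\Phi_+(u^{\ast})$, $S_+(u_0)=S_+(u^{\ast})$ and therefore $U_{S_+(u_0)}=U_{S_+(u^{\ast})}$. Because $u_0\ge-\bar u$, moreover, $S(u_0)=S_+(u_0)\in\oH^{-1/2}(\RR^2)$, so that $u_0\in\mathcal H_+$ and $E_+(u_0)\le E_+(u^{\ast})$. The chain of inequalities
\begin{equation*}
\inf_{\mathcal H_+} E \;\ge\; \inf_{\tilde{\mathcal H}} E_+ \;=\; E_+(u^{\ast}) \;\ge\; E_+(u_0) \;=\; E(u_0) \;\ge\; \inf_{\mathcal H_+} E,
\end{equation*}
valid because $\mathcal H_+\subset\tilde{\mathcal H}$ and $E=E_+$ on $\mathcal H_+$, is then forced to be a chain of equalities, proving both assertions of the proposition.

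The main technical obstacle I anticipate is verifying rigorously that $u_0\in\oH^{1/2}(\RR^2)$, since this space is defined as the completion of $C^\infty_c(\RR^2)$ under the Gagliardo norm and the pointwise seminorm estimate together with an $L^4$ bound do not, strictly speaking, suffice. I would close this gap by applying the folding map $u\mapsto|u+\bar u|-\bar u$, which is globally $1$-Lipschitz, to a sequence in $C^\infty_c(\RR^2)$ approximating $u^{\ast}$ in $\oH^{1/2}$, and passing to the limit using the contraction property of the seminorm and the fractional Sobolev embedding.
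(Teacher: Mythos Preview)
Your proposal is correct and uses the same key ingredients as the paper: the folding map $u\mapsto|u+\bar u|-\bar u$, the Gagliardo seminorm contraction under this map, the invariance of $\Phi_+$ and $S_+$ under folding, and the compactness argument of Proposition~\ref{Frechet} transplanted to $E_+$. The only difference is the order of operations---the paper folds the minimizing sequence \emph{before} passing to the limit (so that the compactness step already takes place inside $\mathcal H_+$), whereas you first obtain a minimizer $u^\ast\in\tilde{\mathcal H}$ and fold afterwards; both orderings work, and your anticipated technical point about $u_0\in\oH^{1/2}(\RR^2)$ is handled the same way in either case.
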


\begin{proof}
  Observe first that for any $u \in \tilde{\mc{H}}$, we have
  $\abs{u + \bar{u}} - \bar{u} \in \mc{H}_+ \subset
    \tilde{\mc{H}}$, and by \eqref{Gagliardo}, we have
  \begin{align}
    \bigl\lVert \abs{u + \bar{u}} - & \bar{u}
    \bigr\rVert_{\mathring{H}^{1/2}(\RR^2)}^2 = \frac{1}{4 \pi}
    \iint_{\RR^2 \times \RR^2} \frac{\bigl\lvert \abs{u(x) + \bar{u}}
      - \abs{u(y) + \bar{u}} \bigr\rvert^2}{\abs{x -y}^3} \ud^2 x
    \ud^2
    y \notag \\
    & \hspace{0.5cm} \leq \frac{1}{4 \pi} \iint_{\RR^2 \times \RR^2}
    \frac{\abs{u(x) - u(y)}^2}{\abs{x -y}^3} \ud^2 x \ud^2 y =
    \norm{u}_{\mathring{H}^{1/2}(\RR^2)}^2.
  \end{align}
    Hence,
    \begin{equation}
      E_+( \abs{u + \bar{u}} - \bar{u} ) \leq E_+(u).
    \end{equation}
    Therefore, for a minimizing sequence $\{u_n\}$ of $E_+$ in
    $\tilde{\mc{H}}$, we can consider $\{ \tilde u_n \} :=
    \{\abs{u_n + \bar{u}} - \bar{u} \} \subset \mc{H}_+$, which is
    also a minimizing sequence. The existence of a minimizer then
    follows from the proof of Proposition~\ref{Frechet} by changing
    $S, \Phi$ to $S_+, \Phi_+$ in that proof.

  Finally, $E_+(u) = E(u)$ for $u \in \mc{H}_+$, since $S_+(u)$,
    $\Phi_+(u)$ coincide with $S(u)$, $\Phi(u)$ for $u \geq -\bar{u}$.
  Therefore, the minimizer $u_0$ of $E_+$ (taken to be in $\mc{H}_+$)
  also minimizes $E$ over $\mc{H}_+$. \qed
\end{proof}

It is also clear that any minimizer of $E$ over $\mc{H}_+$ is also a
minimizer of $E_+$ over $\tilde{\mc{H}}$. The advantage of considering $E_+$
is to remove the constraint $u \geq - \bar{u}$ in $\mc{H}_+$. In
particular, we can derive the Euler--Lagrange equation of $E_+$ for a
minimizer $u \in \mc{H}_+$, observing that the arguments in
Section~\ref{sec:EulerLagrange} apply verbatim to the functional $E_+$
(by replacing $S$ and $\Phi$ with $S_+$ and $\Phi_+$,
respectively). If $u \in \mc{H}_+$ is a minimizer of $E_+$, it then
satisfies the Euler-Lagrange equation given in the distributional
sense by
\begin{equation}
  \label{ELupos}
  0 = a (-\Delta)^{1/2} u + \abs{ u + \bar{u} } ( u + V + b U_{S_+(u)} ).
\end{equation}
Note that since $u \geq -\bar{u}$, the absolute value can be
  omitted and $S_+(u)$ coincides with $S(u)$ in the above
equation. Using the Euler--Lagrange equation, we shall establish
additional properties for the minimizer.

\begin{lemma}
  Assume $V\in \mathring{H}^{1/2}(\RR^2)$.  Let
  $u \in \mathcal H_{+}$ be such that $E(u) = \inf_{\tilde u \in
    \mathcal H_+} E(\tilde u)$.  Then $u\in C^{1/2}(\RR^2)\cap
  L^\infty(\RR^2)$, $u(x) \to 0$ as $|x| \to \infty$ and $u(x)>-\bar
  u$ for all $x\in \RR^2$.
\end{lemma}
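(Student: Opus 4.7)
I first show $u \in C^{1/2}(\RR^2)\cap L^\infty(\RR^2)$ and $u(x)\to 0$ as $|x|\to\infty$ by repeating essentially verbatim the bootstrap argument of Lemma~\ref{lem:regularity}. The Euler--Lagrange equation \eqref{ELupos} for the constrained minimizer has the same structure as \eqref{eq:EL++}, the only difference being that $|u+\bar u|$ simplifies to $u+\bar u\ge 0$; this sign never enters the estimates of Section~\ref{sec:regularity}, and the two subcases $\bar u=0$ and $\bar u\neq 0$ are handled via the inclusions \eqref{L2}--\eqref{L3}, which apply on $\mathcal H_+\subset\mathcal H$ since the minimizer has finite (in fact non-positive) energy, exactly as in the unconstrained case.

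The remaining claim $u(x)>-\bar u$ is a strict positivity statement for $w:=u+\bar u\ge 0$. Since $\bar u$ is constant, $(-\Delta)^{1/2}w=(-\Delta)^{1/2}u$ distributionally and \eqref{ELupos} becomes
\begin{equation}
\label{eq:ELw-plan}
a\,(-\Delta)^{1/2}w + c(x)\,w \;=\; 0 \quad\text{in }\mathcal D'(\RR^2), \qquad c(x):=u(x)+V(x)+b\,U_{S_+(u)}(x).
\end{equation}
Applying Lemma~\ref{lemma-HLS} to $S_+(u)=2\bar u u+u^2\in L^s(\RR^2)$ for some $s>2$ (valid by the first step) yields $U_{S_+(u)}\in L^\infty(\RR^2)$, so $c\in L^\infty(\RR^2)$ and, via \eqref{eq:ELw-plan}, also $(-\Delta)^{1/2}w\in L^\infty(\RR^2)$. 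The plan is now to apply a strong maximum principle for $(-\Delta)^{1/2}+c$ with bounded $c$ to the nonnegative $w$: if $w(x_0)=0$ at some $x_0\in\RR^2$, evaluating \eqref{eq:ELw-plan} pointwise gives $(-\Delta)^{1/2}w(x_0)=0$, while the symmetric representation \eqref{half-Laplace} together with $w\ge 0$ and $w(x_0)=0$ yields
\begin{equation}
(-\Delta)^{1/2}w(x_0)\;=\;-\frac{1}{4\pi}\int_{\RR^2}\frac{w(x_0+y)+w(x_0-y)}{|y|^3}\,\ud^2 y\;\le\;0,
\end{equation}
with equality forcing $w\equiv 0$ almost everywhere. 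But $w\equiv 0$ would mean $u\equiv -\bar u$, which is impossible: for $\bar u>0$ the nonzero constant $-\bar u$ does not belong to $\oH^{1/2}(\RR^2)\subset L^4(\RR^2)$, and for $\bar u=0$ it would give $E(u)=0$, contradicting the standing hypothesis $\inf_{\mathcal H_+}E<0$ in Theorem~\ref{thm:EA0p} (respectively, the fact that $V\not\equiv 0$ together with the trial function $u=-\bar u$ being inadmissible in Theorem~\ref{thm:Egenplus}).

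The main technical obstacle is the justification of the pointwise integral identity at the specific point $x_0$, since the regularity $u\in C^{1/2}(\RR^2)$ is in general insufficient to guarantee Lebesgue-integrability of the symmetric difference quotient defining $(-\Delta)^{1/2}w(x_0)$. I plan to overcome this by exploiting the $L^\infty$-bound on $(-\Delta)^{1/2}w$ inherited from \eqref{eq:ELw-plan} together with an approximation by mollification: at the minimum $x_0$ the integrand has a definite sign, which enables a monotone passage to the limit. Alternatively, one can simply invoke an established strong maximum principle for distributional supersolutions of $(-\Delta)^{1/2}+c$ with $c\in L^\infty$ on $\oH^{1/2}(\RR^2)\cap L^\infty(\RR^2)$; either route delivers the desired contradiction and completes the proof.
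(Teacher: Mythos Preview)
Your regularity argument (the first step) is correct and follows the paper verbatim. The gap is in the strict positivity step for $w:=u+\bar u$.

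You assert that $c(x):=u(x)+V(x)+bU_{S_+(u)}(x)\in L^\infty(\RR^2)$, and then that $(-\Delta)^{1/2}w\in L^\infty(\RR^2)$. Neither is justified: the standing hypothesis is only $V\in\mathring{H}^{1/2}(\RR^2)\subset L^4(\RR^2)$, with no boundedness assumed. Both of your proposed routes---the pointwise evaluation of \eqref{half-Laplace} at a zero $x_0$ (via mollification and monotone convergence) and the invocation of an established strong maximum principle for $(-\Delta)^{1/2}+c$ with $c\in L^\infty$---rest on this unavailable bound. Since $w\in L^\infty$ but $V$ only in $L^4$, the product $Vw$ is merely in $L^4$, so the right-hand side of \eqref{eq:ELw-plan} is not bounded and the passage to a pointwise identity at $x_0$ is not justified.

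The paper handles exactly this difficulty by \emph{not} absorbing $V$ into a bounded coefficient. It splits $V=V_+-V_-$, chooses a constant $c>0$ with $|u+bU_{S(u)}|\le c$ (this part \emph{is} bounded, by the same argument you use for $U_{S_+(u)}$), and rewrites the equation as
\[
a(-\Delta)^{1/2}w+(V_++c)w \;=\; V_-w+\bigl(c-(u+bU_{S(u)})\bigr)w\;\ge\;0.
\]
Since $V_+\in L^4(\RR^2)$ lies in the local Kato class $\mathcal K^{1/2}_{loc}$ for $(-\Delta)^{1/2}$, semigroup theory furnishes a strictly positive Green's function $G_{a,c}^{V_+}(x,y)$ for $a(-\Delta)^{1/2}+V_++c$, dominated by the explicit Green's function of $a(-\Delta)^{1/2}+c$. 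The representation $w(x)=\int_{\RR^2}G_{a,c}^{V_+}(x,y)\,g(y)\,\ud^2 y$ with $g\ge 0$ then gives $w>0$ directly. Your approach can be salvaged by treating the unbounded part of $V$ in this way, but as written the claim $c\in L^\infty$ is a genuine error.
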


\begin{proof}
  The regularity follows verbatim from the proof of
  Lemma~\ref{lem:regularity}. Also, since $u \in L^\infty(\RR^2)$, we
  have $S(u) \in L^4(\RR^2)$, and we can again repeat the arguments in
  the proof of Lemma~\ref{lem:regularity}, now applied to
  \eqref{Riesz-repr}, to establish that $U_{S(u)} \in
  C^{1/2}(\RR^2) \cap L^\infty(\RR^2)$ as well.

  Now, since $u$ satisfies \eqref{ELupos} and $u \geq -\bar u$, we
  have that $w := u + \bar{u}\ge 0$ satisfies
  \begin{equation}\label{EU-plus}
    0 = a (-\Delta)^{1/2} w + V w + w ( u + b U_{S(u)}).
  \end{equation}
  Note that by the argument at the beginning of the proof we have
  $w \in L^\infty(\RR^2)$ and
  \begin{equation}
    |u + b U_{S(u)}| \le c,
  \end{equation}
  for some $c>0$ and a.e. $x \in \RR^2$.  Decompose $V=V_+-V_-$, where
  $V_+$ and $V_-$ are the positive and the negative part of $V$,
  respectively. Then
  \begin{equation}
    \label{elM}
    a (-\Delta)^{1/2} w + V_+ w+c w=V_-w+(c-( u + b U_{S(u)} ))w\ge 0.
  \end{equation}
  Since $V_+\in \mathring{H}^{1/2}(\RR^2)\subset L^4(\RR^N)$, the
  potential $V_++c$ belongs to the local Kato class
  $\mathcal K^{1/2}_{loc}$ with respect to $(-\Delta)^{1/2}$, i.e.,
  for every ball $B\subset\RR^2$ we have
  \begin{equation}
    \lim_{\varepsilon\to
      0}\sup_{x\in\RR^2}\int_{B_\varepsilon(x)}\frac{(V_+(y)+c)\chi_{B}(y)}{|x-y|}
    \ud^2 y=0,
  \end{equation}
  see \cite{Kaleta-Lorinczi}*{Definition 2.1} or
  \cite{Carmona-Simon}*{Section III and Theorem III.1(iii)}.  Then
  standard methods of semigroup theory (see
  e.g. \cite{Reed-Simon-4}*{Section XIII.12}) can be used to show that
  $a (-\Delta)^{1/2} + V_+ +c $ defines a self-adjoint
  positive-definite linear operator in $L^2(\RR^2)$
  \cite{Kaleta-Lorinczi}*{Theorem 2.1}. Moreover, the Green's function
  $G_{a, c}^{V_+}(x,y):\RR^2\times\RR^2\to\RR$ of
  $a (-\Delta)^{1/2} + V_+ +c $ is well defined and strictly positive
  (cf. \cite{Kaleta-Lorinczi}*{Lemma 2.1(4) and Section 2.4}). In
  addition, since $V_+\ge 0$, the Green's function
  $G_{a, c}^{V_+}(x,y)$ is dominated by the Green's function
  $G_{a, c}(|x-y|)$ of the operator $a (-\Delta)^{1/2} +c$, so that we
  have
  \begin{equation}\label{Green-V}
    0<G_{a, c}^{V_+}(x,y)\le G_{a, c}(|x-y|)\quad\text{for all
      $x,y\in\RR^2$}.
  \end{equation}
  The function $G_{a, c}(r)$ is given explicitly by
  \begin{align}
    \label{Gac}
    G_{a, c}(r) := \frac{c}{4 a^2} \left( \frac{2 a}{\pi c r}
    -\pmb{H}_0\left(\frac{c r}{a}\right) + Y_0\left(\frac{c
    r}{a}\right) \right),
  \end{align}
  where $\pmb{H}_0(z)$ is the Struve function, $Y_0(z)$ is the Bessel
  function of the second kind, which can be obtained using Fourier
  transform.  Moreover, $G_{a, c}$ obeys \cite[Theorem 3.3 and Lemma
  4.1]{FelmerQuaasTan:12}
  \begin{equation}\label{Tan}
    G_{a, c}(r) \sim
    \begin{cases}
      \ r^{-1}, & r \ll 1, \\
      \ r^{-3}, & r \gg 1,
    \end{cases}
  \end{equation}
  and, therefore, we have
  $G_{a,c}^{V_+} \in L^{4/3}(\RR^2) \cap L^1(\mathbb R^2)$.  Denoting
  the right-hand side of \eqref{elM} by $g(x)\ge 0$, since
  $g \in L^4(\RR^2) + L^\infty(\RR^2)$ we then have distributionally
  and a.e. in $\RR^2$
  \begin{equation}
    \label{wGaM}
    w(x) = \int_{\RR^2}G_{a, c}^{V_+}(x,y) g(y) \ud^2 y.
  \end{equation}
  This implies that $w$ is strictly positive.  \qed
\end{proof}

\begin{remark}\label{remark-regularity-plus}
  Note that unlike minimizers in $\mathcal H$ (see Remark
  \ref{remark-regularity}), further regularity of minimizers
  $u\in\mathcal H_+$ is expected under additional smoothness
  hypothesis on $V$.  For example, if we assume that $V\in
  \mathring{H}^{1/2}(\RR^2)\cap C^\infty(\RR^2)$ then $u\in
  C^\infty(\RR^2)$. Indeed, if $u\in C^{k,\alpha}(\RR^2)$ for some
  $k\ge 1$ and $w:=u+\bar u>0$ then $w\left(u+V+b{U}_{S(u)}\right)\in
  C^{k,\alpha}(\RR^2)$.  Thus, differentiating the Euler--Lagrange
  equation in \eqref{EU-plus} with respect to $x$ and applying
  \cite[Proposition 2.8]{Silvestre:07}, we conclude that $u\in
  C^{k+1,\alpha}(\RR^2)$.  This argument can be iterated infinitely
  many times, we omit the details.
\end{remark}

Finally, we show that the minimizer of $E$ on $\mc{H}_+$ is unique. It
is more convenient to rewrite the energy functional using $\rho$ as
the variable, as in the uniqueness proof for the usual
Thomas-Fermi-von~Weizs\"acker model \cite{Lieb:81}. We write
\begin{multline}\label{eq:energyrho}
  E(\rho) = a \norm{ \sqrt{\rho} - \sqrt{\bar{\rho}}
  }_{\mathring{H}^{1/2}(\RR^2)}^2
  + \int_{\RR^2} \Phi \bigl( \sqrt{\rho(x)} - \sqrt{\bar{\rho}} \bigr) \ud^2 x \\
  + \average{V, U_{S(\sqrt{\rho} -
      \sqrt{\bar{\rho}})}}_{\mathring{H}^{1/2}(\RR^2)} + \frac{b}{2}
  \norm{U_{S(\sqrt{\rho} -
      \sqrt{\bar{\rho}})}}^2_{\mathring{H}^{1/2}(\RR^2)}.
\end{multline}
Note that
\begin{equation}
  S(\sqrt{\rho} - \sqrt{\bar{\rho}}) = \rho - \bar{\rho},
\end{equation}
and is, therefore, linear in $\rho$, and
\begin{equation}
  \Phi(\sqrt{\rho} - \sqrt{\bar{\rho}}) = \frac{2}{3} (\rho^{3/2} -
  \bar{\rho}^{3/2}) - \sqrt{\bar{\rho}}( \rho - \bar{\rho})
\end{equation}
is strictly convex in $\rho$. Hence, the last three terms in $E(\rho)$
given by \eqref{eq:energyrho} are convex on
$\mc{A}_{\bar{\rho}}^+$. Moreover, even though $\sqrt{\rho}$ is a
concave function of $\rho$, the following lemma shows that
$\norm{\sqrt{\rho} - \sqrt{\bar{\rho}}}_{\mathring{H}^{1/2}(\RR^2)}^2$
is convex, and the energy $E(\rho)$ is strictly convex. The uniqueness
of the minimizer then follows.

\begin{lemma}
  The set $\mc{A}_{\bar{\rho}}^+$ is convex. Furthermore, the
  functional $E(\rho)$ defined in \eqref{eq:energyrho} is strictly
  convex on $\mc{A}_{\bar{\rho}}^+$, i.e., for every $\rho_0, \rho_1
  \in \mc{A}_{\bar{\rho}}^+$, $\rho_0 \not= \rho_1$, and every $t
  \in (0, 1)$, there holds
  \begin{equation}
    E( t \rho_0 + (1-t) \rho_1) \ < \  t E(\rho_0) + (1-t) E(\rho_1).
  \end{equation}
\end{lemma}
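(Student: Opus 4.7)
The plan is to separate the statement into two parts: convexity of $\mc{A}_{\bar\rho}^+$, which reduces to checking that $\sqrt{\rho_t} - \sqrt{\bar\rho} \in \mathring{H}^{1/2}(\RR^2)$ for $\rho_t := t\rho_0 + (1-t)\rho_1$, and strict convexity of the four terms appearing in \eqref{eq:energyrho}. The conditions $\rho_t \geq 0$ and $\rho_t - \bar\rho \in \mathring{H}^{-1/2}(\RR^2)$ are immediate from linearity of the corresponding operations, so the only issue for convexity of the domain is the fractional Sobolev condition on the square root, which I will extract together with convexity of the first term of $E$.

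The single analytic ingredient I would invoke is the joint convexity of $f(a,b) := (\sqrt{a}-\sqrt{b})^2 = a + b - 2\sqrt{ab}$ on $[0,\infty)^2$. This reduces to concavity of the geometric mean $\sqrt{ab}$, itself a consequence of the Cauchy--Schwarz inequality in $\RR^2$ applied to $(\sqrt{ta_0},\sqrt{(1-t)a_1})$ and $(\sqrt{tb_0},\sqrt{(1-t)b_1})$. Applying $f$ pointwise with $a=\rho_t(x)$, $b=\rho_t(y)$ gives
\[
   \bigl(\sqrt{\rho_t(x)} - \sqrt{\rho_t(y)}\bigr)^2 \le t\bigl(\sqrt{\rho_0(x)} - \sqrt{\rho_0(y)}\bigr)^2 + (1-t)\bigl(\sqrt{\rho_1(x)} - \sqrt{\rho_1(y)}\bigr)^2,
\]
and integrating against $\frac{\ud^2 x\,\ud^2 y}{4\pi|x-y|^{3}}$, together with the observation that the additive constant $\sqrt{\bar\rho}$ drops out of the Gagliardo seminorm \eqref{Gagliardo}, simultaneously establishes $\rho_t \in \mc{A}_{\bar\rho}^+$ and convexity of the first term $a\|\sqrt{\rho}-\sqrt{\bar\rho}\|_{\mathring{H}^{1/2}(\RR^2)}^2$ in $\rho$.

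For the remaining three terms I would use that $S(\sqrt{\rho}-\sqrt{\bar\rho}) = \rho - \bar\rho$ depends linearly on $\rho$, whence $U_{S(\sqrt{\rho}-\sqrt{\bar\rho})}$ is affine in $\rho$; the term $\langle V, U_{S(\cdot)}\rangle_{\mathring{H}^{1/2}(\RR^2)}$ is thus linear and $\tfrac{b}{2}\|U_{S(\cdot)}\|_{\mathring{H}^{1/2}(\RR^2)}^2$ is convex as the square of an affine quantity. On $\mc{A}_{\bar\rho}^+$ the $\Phi$--term equals $\int(\tfrac{2}{3}\rho^{3/2} - \sqrt{\bar\rho}\,\rho + \tfrac{1}{3}\bar\rho^{3/2})\,\ud^2 x$, whose integrand is strictly convex in $\rho\geq 0$ since $(\rho^{3/2})'' = \tfrac{3}{4}\rho^{-1/2} > 0$; this is the sole source of strict convexity.

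To finish, if $\rho_0\neq\rho_1$ in $\mc{A}_{\bar\rho}^+$ with both energies finite, the profiles differ on a set of positive Lebesgue measure, and pointwise strict convexity of $\rho\mapsto \rho^{3/2}$ combined with convexity of the other three contributions yields the strict inequality $E(\rho_t) < tE(\rho_0) + (1-t)E(\rho_1)$; when either $E(\rho_0)$ or $E(\rho_1)$ equals $+\infty$ the inequality is automatic. The main obstacle I expect is keeping track of the degeneracy of $f$, whose Hessian has a zero eigenvalue along rays $b=\lambda a$, so the nonlocal kinetic term cannot supply strictness; the entire strict inequality is carried by the local TFD contribution $\tfrac{2}{3}\rho^{3/2}$, and this structural dichotomy must be clearly identified in the write--up.
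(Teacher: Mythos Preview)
Your proposal is correct and follows essentially the same approach as the paper. The only difference is that for the convexity of the fractional kinetic term $\rho\mapsto\|\sqrt{\rho}-\sqrt{\bar\rho}\|_{\mathring{H}^{1/2}(\RR^2)}^2$, the paper simply invokes \cite{lieb96}*{Theorem 7.13}, whereas you supply a direct elementary argument via the joint convexity of $(a,b)\mapsto(\sqrt{a}-\sqrt{b})^2$; the remainder of the argument---linearity of $S$, convexity of the Coulomb term, and strict convexity coming solely from the local term $\tfrac{2}{3}\rho^{3/2}$---is identical to the paper's.
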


\begin{proof}
Denote $\rho_t = t \rho_0 + (1-t) \rho_1$.
From \cite{lieb96}*{Theorem 7.13} it follows that $\sqrt{\rho_t} - \sqrt{\bar \rho} \in
  \mathring{H}^{1/2}(\RR^2)$ and
  \begin{equation}\label{eq:convexH1/2}
    \norm{ \sqrt{\rho_t} - \sqrt{\bar{\rho}}}_{\mathring{H}^{1/2}(\RR^2)}^2 \leq t
    \norm{ \sqrt{\rho_0} - \sqrt{\bar{\rho}}}_{\mathring{H}^{1/2}(\RR^2)}^2 + (1-t)
    \norm{ \sqrt{\rho_1} - \sqrt{\bar{\rho}}}_{\mathring{H}^{1/2}(\RR^2)}^2.
  \end{equation}
Also, clearly $\rho_t - \bar{\rho} \in
 \mathring{H}^{-1/2}(\RR^2)$ and $\rho_t \geq 0$. Hence, $\rho_t \in
  \mathcal A_{\bar \rho}^+$, implying that $\mc{A}_{\bar{\rho}}^+$ is
  a convex set. The strict convexity of $E(\rho)$ then follows from
  the strictly convexity of $\Phi$ in the second term in
  $E(\rho)$. \qed
\end{proof}

\subsection{Proof of Theorem~\ref{thm:E0A0}}

For $u \in \mc{H}_+$, we have $E(u) = E_+(u)$, where $E_+$ is
  defined in \eqref{eq:Eplus} with the specific choice $\bar u=0$:
\begin{multline}\label{eq:Eplus0}
  E_+(u) = a \norm{u}_{\mathring{H}^{1/2}(\RR^2)}^2
  + \frac{2}{3}\int_{\RR^2} \abs{u(x)}^3 \ud^2 x \\
  - \int_{\RR^2} \frac{\abs{u(x)}^2}{\bigl(1 + \abs{x}^2\bigr)^{1/2}}
  \ud^2 x + \frac{b}{2}
  \norm{U_{\abs{u}^2}}_{\mathring{H}^{1/2}(\RR^2)}^2.
\end{multline}
In view of Theorem \ref{thm:EA0p}, in order to prove Theorem
\ref{thm:E0A0} it is sufficient to show that:

$(i)$ If $a \geq a_c$, then $E(u) > 0$ for every non-zero $u \in
\mc{H}$,

$(ii)$ If $a < a_c$, then $\inf_{u \in \mc{H}_+} E_+(u)
<0$.

\noindent
Claim $(i)$ follows directly from the fractional Hardy's inequality
\begin{equation}\label{eq:Hardy-Frank}
  a_c\norm{u}_{\mathring{H}^{1/2}(\RR^2)}^2\ge \int_{\RR^2}
  \frac{\abs{u(x)}^2}{ \abs{x} } \ud^2 x,
\end{equation}
which is valid for all $u\in \mathring{H}^{1/2}(\RR^2)$ with the
optimal constant $a_c= \frac{\Gamma^2(1 / 4)}{2 \Gamma^2(3 / 4)}$, see
\cite[Remark 4.2]{frank08}.

Claim $(ii)$ is a consequence of the following.

\begin{lemma}\label{lemma:Hardy}
  Let $c<a_c$. Then there exists $u_c\in C^\infty_c(\RR^2)$
  such that $u_c\ge 0$ and
  \begin{equation}\label{eq:E-Hardy-sharp}
    c\norm{u_c}_{\mathring{H}^{1/2}(\RR^2)}^2< \int_{\RR^2}
    \frac{\abs{u_c(x)}^2}{\bigl(1 + \abs{x}^2\bigr)^{1/2}} \ud^2 x.
  \end{equation}
\end{lemma}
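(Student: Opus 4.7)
\medskip

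\noindent \textbf{Proof plan.} The idea is to reduce the statement to the sharp fractional Hardy inequality \eqref{eq:Hardy-Frank} by dilating a near-extremizer outward. Far from the origin, $(1 + |x|^2)^{-1/2}$ is essentially $|x|^{-1}$, while the ratio $\int |u|^2/|x|\,\ud^2 x / \|u\|^2_{\mathring H^{1/2}(\RR^2)}$ is scale invariant; hence zooming out should turn the near-saturation of \eqref{eq:Hardy-Frank} into the desired strict inequality for the smoothed potential.

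First, because the constant $a_c$ in \eqref{eq:Hardy-Frank} is optimal, for the given $c < a_c$ I can pick $\phi \in C^\infty_c(\RR^2)$ with
\begin{equation}
  c\,\|\phi\|^2_{\mathring H^{1/2}(\RR^2)} \ < \ \int_{\RR^2}\frac{|\phi(x)|^2}{|x|}\,\ud^2 x.
\end{equation}
To force $\phi \geq 0$, I would first replace $\phi$ by $|\phi|$, which, by the pointwise inequality $||\phi(x)|-|\phi(y)|| \leq |\phi(x)-\phi(y)|$ applied in the Gagliardo representation \eqref{Gagliardo}, only decreases the $\mathring H^{1/2}$-norm while leaving the weighted $L^2$-integral unchanged, and then smooth $|\phi|$ by a standard mollifier. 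Since $|\phi| \in H^1(\RR^2)$ has compact support, the mollifications converge to $|\phi|$ in $\mathring H^{1/2}(\RR^2)$ and, by local integrability of $|x|^{-1}$ on $\RR^2$ and dominated convergence, in $L^2(\RR^2, |x|^{-1}\,\ud^2 x)$; thus the strict inequality is preserved for the mollification with sufficiently small parameter, and I may assume $\phi \in C^\infty_c(\RR^2)$ with $\phi \geq 0$.

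Next, I introduce the rescaled family $u_\lambda(x) := \phi(x/\lambda)$ for $\lambda > 0$. A change of variables in \eqref{Gagliardo} gives $\|u_\lambda\|^2_{\mathring H^{1/2}(\RR^2)} = \lambda\,\|\phi\|^2_{\mathring H^{1/2}(\RR^2)}$, while direct substitution (writing $(1+\lambda^2|y|^2)^{1/2} = \lambda(\lambda^{-2}+|y|^2)^{1/2}$) yields
\begin{equation}
  \int_{\RR^2}\frac{|u_\lambda(x)|^2}{(1+|x|^2)^{1/2}}\,\ud^2 x \ =\ \lambda\int_{\RR^2}\frac{|\phi(y)|^2}{(\lambda^{-2}+|y|^2)^{1/2}}\,\ud^2 y.
\end{equation}
As $\lambda \to \infty$, the integrand $(\lambda^{-2}+|y|^2)^{-1/2}$ increases monotonically to $|y|^{-1}$, which is locally integrable in $\RR^2$ and therefore finite against the compactly supported $|\phi|^2$. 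By monotone convergence the right-hand side tends to $\lambda \int_{\RR^2} |\phi|^2/|y|\,\ud^2 y$, so dividing by $\|u_\lambda\|^2_{\mathring H^{1/2}(\RR^2)} = \lambda\|\phi\|^2_{\mathring H^{1/2}(\RR^2)}$ gives
\begin{equation}
  \lim_{\lambda\to\infty}\frac{\int_{\RR^2} |u_\lambda|^2/(1+|x|^2)^{1/2}\,\ud^2 x}{\|u_\lambda\|^2_{\mathring H^{1/2}(\RR^2)}} \ =\ \frac{\int_{\RR^2} |\phi|^2/|y|\,\ud^2 y}{\|\phi\|^2_{\mathring H^{1/2}(\RR^2)}} \ >\ c.
\end{equation}
Thus $u_c := u_\lambda$ for any sufficiently large $\lambda$ satisfies \eqref{eq:E-Hardy-sharp}, and it lies in $C^\infty_c(\RR^2)$ with $u_c \geq 0$ by construction.

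No step is genuinely difficult here; the only points to verify are the density/nonnegativity passage in the first paragraph (which is routine mollification) and the applicability of monotone convergence in the limit, which is guaranteed by the compact support of $\phi$ together with local integrability of $|y|^{-1}$ in two dimensions. The substantive input is the sharpness of $a_c$ cited from \cite[Remark 4.2]{frank08}.
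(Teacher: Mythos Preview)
Your proof is correct and follows essentially the same route as the paper: pick a near-extremizer of the fractional Hardy inequality and dilate it by $u_\lambda(x)=\phi(x/\lambda)$, using monotone convergence of $(\lambda^{-2}+|y|^2)^{-1/2}\uparrow |y|^{-1}$ to pass from the smoothed potential to the Hardy potential. The only difference is cosmetic: the paper secures $u_c\ge 0$ implicitly by choosing the near-extremizer as a smooth nonnegative approximation of $|x|^{-1/2}$ from the outset, whereas you obtain it a posteriori via $|\phi|$ plus mollification; both are fine.
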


Indeed, let $a<a_c$. Then, using Lemma \ref{lemma:Hardy} with some
$c\in(a,a_c)$, for all sufficiently small $t>0$ we obtain
\begin{multline}\label{eq:Eplus0-1}
  E_+(tu_c) < -(c-a)t^2 \norm{u_c}_{\mathring{H}^{1/2}(\RR^2)}^2\\
  + \frac{2t^3}{3}\int_{\RR^2}\abs{u_c(x)}^3 \ud^2 x + \frac{bt^4}{2}
  \norm{U_{\abs{u_c}^2}}_{\mathring{H}^{1/2}(\RR^2)}^2<0.
\end{multline}
We conclude that $\inf_{u \in \mc{H}_+} E_+(u) <0$, which
proves Claim $(ii)$.

\medskip

We are only left to prove Lemma \ref{lemma:Hardy}.

\begin{proof}[of Lemma \ref{lemma:Hardy}]
Let $u\in C^\infty_c(\RR^2)$ be such that
\begin{equation}\label{eq:Hardy2}
  c\norm{u}_{\mathring{H}^{1/2}(\RR^2)}^2-\int_{\RR^2}
  \frac{\abs{u(x)}^2}{\abs{x}} \ud^2 x\le-1
\end{equation}
(cf. \cite[Remark 4.2]{frank08}, where one can choose $u\in
C^\infty_c(\RR^2)$ as a suitable approximation of $|x|^{-1/2}$).
For $\lambda>0$, set $u_\lambda(x)=u(x/\lambda)$.
Then
\begin{multline}\label{eq:Hardy3}
  c\norm{u_\lambda}_{\mathring{H}^{1/2}(\RR^2)}^2-\int_{\RR^2}
  \frac{\abs{u_\lambda(x)}^2}{\bigl(1 + \abs{x}^2\bigr)^{1/2}}\ud^2
  x=\\
  \lambda \left( c\norm{u}_{\mathring{H}^{1/2}(\RR^2)}^2- \int_{\RR^2}
    \frac{\abs{u(y)}^2}{\bigl(\lambda^{-2} +
      \abs{y}^2\bigr)^{1/2}}\ud^2 y \right) \le-{\lambda \over 2},
\end{multline}
for all sufficiently large $\lambda>0$, in view of \eqref{eq:Hardy2}
and the monotonicity of the mapping
$\lambda\mapsto ( \lambda^{-2} + \abs{y}^2)^{-1/2}$. \qed
\end{proof}

\section*{Acknowledgments}

\noindent The authors wish to thank an anonymous referee for helpful
suggestions. JL would like to acknowledge support from the Alfred
P.~Sloan Foundation and the National Science Foundation under award
DMS-1312659. CBM was supported, in part, by the National Science
Foundation via grants DMS-0908279 and DMS-1313687.

\bibliographystyle{amsxport}
\bibliography{graphene}

\end{document}